\documentclass[12pt,journal,onecolumn,draftcls]{IEEEtran}

\usepackage{epsfig}
\usepackage{times}
\usepackage{float}
\usepackage{afterpage}
\usepackage{amsmath}
\usepackage{amstext}
\usepackage{amssymb,bm}
\usepackage{latexsym}
\usepackage{color}
\usepackage{graphicx}
\usepackage{amsmath}
\usepackage{amsthm}
\usepackage{graphicx}
\usepackage[center]{caption}
\usepackage{pstricks}
\usepackage{caption}
\usepackage{subcaption}
\usepackage{booktabs}
\usepackage{multicol}
\usepackage{lipsum}

\usepackage{enumitem}

\newtheorem{remark}{Remark}
\newtheorem{corollary}{Corollary}
\newtheorem{prop}{Proposition}
\newtheorem{lem}{Lemma}
\newtheorem{definition}{Definition}
\newtheorem{theorem}{Theorem}

\newcommand{\eu}{\mathrm{e}}

\newcommand{\Cov}{\mathrm{Cov}}

\newcommand{\snr}{\mathsf{snr}}

\newcommand{\sign}{\mathsf{sign}}
\newcommand{\mmse}{\mathrm{mmse}}

\newcommand{\E}{\mathbb{E}}
\newcommand{\Trc}{\mathrm{Tr}}
\newcommand{\cov}{\mathbf{Cov}}
\newcommand{\K}{\mathbf{K}}
\newcommand{\X}{\mathbf{X}}
\newcommand{\Z}{\mathbf{Z}}
\newcommand{\Y}{\mathbf{Y}}
\newcommand{\W}{\mathbf{W}}
\newcommand{\V}{\mathbf{V}}
\newcommand{\U}{\mathbf{U}}
\newcommand{\I}{\mathbf{I}}

\newcommand{\supp}{{\mathsf{supp}}}
\newcommand{\gap}{{\mathsf{gap}}}

\newcommand{\mmpe}{\mathrm{mmpe}}
\newcommand{\Err}{\mathsf{Err}}
\newcommand{\p}{{\mathsf{p}}}

\title{On the Minimum Mean $\p$-th Error in Gaussian Noise Channels and its Applications} 
\author{
\IEEEauthorblockN{Alex Dytso, Ronit Bustin, Daniela Tuninetti,  Natasha Devroye, H.Vincent Poor, Shlomo Shamai (Shitz) \\
\thanks{Alex Dytso, Daniela Tuninetti and Natasha Devroye are with the department of Electrical and Computer Engineering, University of Illinois at Chicago, IL, Chicago 60607, USA (e-mail: odytso2, danielat, devroye @ uic.edu).  
Ronit Bustin is with the department of Electrical Engineering - Systems, Tel Aviv University, Tel Aviv, Israel (email:ronitbustin@post.tau.ac.il). 
H.Vincent Poor is with the department of Electrical  Engineering, Princeton University, NJ, Princeton 08544, USA (email:poor@princeton.edu). 
S. Shamai (Shitz) is with the Department of Electrical Engineering,
Technion-Israel Institute of Technology, Technion City, Haifa 32000,
Israel (e-mail:  sshlomo@ee.technion.ac.il).  The work of Alex Dytso, Daniela Tuninetti  and Natasha Devroye was partially funded by NSF under award 1422511. The work of Ronit Bustin was supported in part by the women postdoctoral scholarship of Israel's Council for Higher Education (VATAT) 2014-2015. The work of H. Vincent Poor and Ronit Bustin was partially supported by NSF under awards CCF-1420575 and ECCS-1343210. The work of Shlomo Shamai was supported by the Israel Science Foundation (ISF).   The contents of this article are solely the responsibility of the authors and do not necessarily represent the official views of the funding agencies. This work was presented in part at the 2016 IEEE International Symposium on Information Theory, Barcelona, Spain, and in part at the 2016 IEEE Information Theory Workshop, Cambridge, UK.} 
 }
}

\begin{document}

\maketitle

\begin{abstract}
The problem of estimating an arbitrary random vector from its observation corrupted by additive white Gaussian noise, where the cost function is taken to be the Minimum Mean $\p$-th Error (MMPE), is considered.
The classical Minimum Mean Square Error (MMSE) is a special case of the MMPE. 
Several bounds, properties and applications of the MMPE are derived and discussed.

The optimal MMPE estimator is found for Gaussian and binary input distributions. Properties of the MMPE as a function of the input distribution, Signal-to-Noise-Ratio (SNR) and order $\p$ are derived. In particular, it is shown that the MMPE is a continuous function of $\p$ and SNR.  These results are possible in view of  interpolation and change of measure bounds on the MMPE.

The `Single-Crossing-Point Property'  (SCPP) that
bounds the MMSE for all SNR values {\it above} a certain value, at which the MMSE  is known,  together with the I-MMSE relationship is a powerful tool in deriving converse proofs in multi-user information theory. By studying the notion of conditional MMPE, a  unifying proof (i.e.,  for any $\p$) of the SCPP is shown.  A complementary bound to the  SCPP is then shown, which bounds the MMPE  for all SNR values {\it below} a certain value, at which the MMPE is known. 

As a first application of the MMPE, a bound on the conditional differential entropy in terms of the MMPE is provided, which then yields a  generalization of  the Ozarow-Wyner lower bound on the mutual information achieved by a discrete input on a Gaussian noise channel.

As a second application, the MMPE is shown to improve on previous characterizations of the phase transition phenomenon that manifests, in the limit as the length of the capacity achieving code goes to infinity, as a discontinuity of the MMSE as a function of SNR. 

As a final application, the MMPE is used  to show new bounds on the second derivative of mutual information, or the first derivative of the MMSE, that tighten previously known bounds important in characterizing the  bandwidth-power trade-off in the wideband regime. 

\end{abstract}

\section{Introduction}
 \label{sec:intro}

  In the Bayesian setting the  Minimum Mean Square Error (MMSE) of estimating a random variable $X$ from an observation $Y$  is understood as a cost function\footnote{  Another common term used is a risk function.}  with a quadratic loss  function (i.e., $L_2$ norm):
  \begin{subequations}
\begin{align}
\mmse(X \mid Y)&= \E \left[  \Err \left(X, \E[ X \mid Y] \right) \right], \\
\Err \left(X, \E[ X \mid Y] \right)&= |X-\E[X \mid Y] |^2.  \label{eq:def MMSE}
\end{align} 
\end{subequations}

Another commonly used cost function is the $L_1$ norm  with loss function given by the absolute value of the error (i.e., the difference between the variable of interest and its estimate). In general, cost functions with non-quadratic loss functions are not well understood  and have been considered only for special cases, such as  under the assumption of  Gaussian statistics.

The interplay between estimation theoretic and information theoretic measures has been very fruitful; for example the so called I-MMSE relationship \cite{I-MMSE}, that relates the derivative of the mutual information with respect to the Signal-to-Noise-Ratio (SNR) to the MMSE, has found numerous applications through out information theory \cite{ShamaiShannonLecture}. The goal  of this work is to show that the study of estimation problems with non-quadratic loss functions can  also offer new insights into classical information theoretic problems. \emph{The program of this paper is thus to develop the necessary theory  for a class of loss functions, and then apply the developed tools to  information theoretic problems.  } 

\subsection{Past Work} 
The popularity of the MMSE stems from its analytical tractability, which is rooted in the fact that the MMSE is defined through the $L_2$ norm in \eqref{eq:def MMSE}. The $L_2$ norm, in turn,  allows applications of the well understood Hilbert space theory  \cite{kreyszig1989introductory}.   In information theoretic applications the $L_2$ norm  is used, for example, to define an average  input power constraint. The  connection between the  power constraint and the $L_2$ norm leads to a continuous analog of Fano's  inequality that relates the conditional differential entropy and the MMSE \cite[Theorem 8.6.6]{Cover:InfoTheory}.

Recently, in view of the  I-MMSE relationship \cite{I-MMSE}, the MMSE (in an Additive White Gaussian Noise (AWGN) channel) has received considerable attention.  For example, in \cite{VerduSimpleProofEPI} the I-MMSE relationship was used to give a simple alternative proof of the Entropy Power Inequality (EPI) \cite{Shannon:1948}.  Moreover, the so called  `Single-Crossing-Point Property' (SCPP) \cite{GuoMMSEprop,BustinMMSEparallelVectorChannel} that bounds the MMSE for all SNR values {\it above} a certain value at which the MMSE is known,  together with the I-MMSE relationship, offers an alternative,   unifying framework for deriving information  theoretic converses, such as:
\cite{GuoMMSEprop} to provide an alternative proof of the converse for the Gaussian broadcast channel (BC) and show a special case of the EPI;
in~\cite{guo2013interplay} to provide a simple proof for the information combining problem and a converse for the BC with confidential messages;
in~\cite{BustinMMSEparallelVectorChannel}, by using various extensions of the SCPP,
to  prove a special case of the vector EPI, a converse for the capacity region of the parallel degraded BC under per-antenna power constraints and under an input covariance constraint, and a converse for the compound parallel degraded BC under an input covariance constraint; and
in~\cite{BustinMMSEbadCodes} to provide a converse for communication under an MMSE disturbance constraint.

In \cite{NewBoundsOnMMSE} we demonstrated a bound that complements  the SCPP, that  bounds the MMPE  for all SNR values {\it below} a certain value at which the MMSE is known,  and  allows for a finer characterization of  the \emph{phase transition} phenomenon  that manifests as a discontinuity of the MMSE as a function of SNR, as the length of the codeword goes to infinity. This plays an important role in characterizing achievable rates of the capacity achieving codes   \cite{MerhavStatisticalPhysics} and \cite{SNRevolutionOfMMSE}. One of the applications of the tools presented in this work is an improvement on the bound in \cite[Theorem 1]{NewBoundsOnMMSE}. 

Many other properties of the MMSE in relation to the I-MMSE have been studied in \cite{GuoMMSEprop,FunctionalPropMMSE,mmseDim, GuoScoreFunction}. For a comprehensive survey on results,  applications and extensions of the I-MMSE relationship we refer the reader to \cite{guo2013interplay}. 

While the MMSE has received considerable attention and is well understood, non-quadratic cost functions are only understood in  special cases, such as under the assumption of Gaussian statistics. For example, in \cite{sherman1958non} it was shown that under scalar Gaussian statistics, for a large class  of symmetric loss functions  the optimal  linear  MMSE (LMMSE) estimator is also optimal.  The result of \cite{sherman1958non} was  extended in \cite{AsymErrorBrown} to a large class of cost functions that also include asymmetric loss functions.  Other early work in this direction includes also \cite{pugachev1960method}. 

In \cite{LinftyError}, the authors  studied the expected $L_{\infty}$ norm of the error,  
when the input is assumed to be a Gaussian mixture. The authors showed that, as the dimension  of the signal goes to infinity, the optimal LMMSE estimator  minimizes the expected maximum error.

In \cite{hall1987simultaneous} and \cite{hall1991optimal} the  authors studied a class of \emph{even and nondecreasing} and \emph{even and convex}  loss functions and gave a   sufficient condition on the conditional distribution of the input  $X$ given the output $Y$, so that the conditional expectation $\E[X|Y]$ is the optimal estimator.

In \cite{akyol2012conditions},  the  authors studied a scalar additive noise channel and  an $L_p$ cost function and showed  a necessary and sufficient condition on the noise and the input distributions to guarantee that the optimal estimator is linear. Moreover,   if the  source and noise variances are the same, then the optimal estimator is linear if and only if input and the noise distributions are identical.

In \cite{WeinbergerMMalphaE} and \cite{MerhavMMalphaE}  the authors considered the problem of transmitting a modulated signal over a discrete memoryless channel where the performance criterion was taken to be the $L_p$ cost function. 
 To that end, the authors showed tight exponential bounds for very small and very large values of $p$.  

In \cite{saerens2000building} the authors focused on designing an appropriate cost function such that  the output of the trained model approximates  the desired summary statistics, such as  the conditional expectation, the geometric mean or the variance.

In non-Bayesian estimation $L_p$ cost functions have been considered in \cite{BurnashevMMalphaE} and \cite{burnashev1985minimum}, in a context of \emph{minimax} estimation, and the authors gave lower and upper bounds on the exponential behavior of the cost function.  For a non-Bayesian treatment of non-quadratic cost functions we refer the reader to \cite{lehmann2006theory}.

Looking into  non-quadratic cost functions is further motivated by the fact that often the quadratic cost function may not be the correct measure of signal fidelity for certain applications. This is especially true in image processing where  error metrics, more sensitive to structural changes of the input signal, better capture human perceptions of quality. We refer the reader to \cite{WangMSEloveorleave} for a survey of recent results in this direction.

\noindent
\subsection{Paper Outline and Main Contributions} 

In this work we are interested in studying a cost function, termed the Minimum Mean $p$-th Error (MMPE)\footnote{  The abbreviation MMPE has been used before in \cite[Chapter 8]{guo2013interplay} for the Minimum Mean Poisson Error. }, the scalar version of which is given by 
\begin{subequations}
\begin{align}
\mmpe(X \mid Y;\p)&= \inf_{f} \E \left[  \Err^{\p} \left(X, f(Y) \right) \right], \\
\Err \left(X,f(Y) \right)&= |X-f(Y) |.  
\end{align} 
\label{eq:scalar MMPE}
\end{subequations}
where the infimum is over all  estimators $f(Y)$.

Our contributions  are as follows:
\begin{enumerate}
\item In Section~\ref{sec:costFUnctDef} we formally define the  vector version of the MMPE in \eqref{eq:scalar MMPE} and introduce related definitions.
 \item In Section~\ref{sec:MMPE: properties of estimator} we study properties of the optimal MMPE estimator and show:
\begin{itemize}
\item  In Section~\ref{sec:MMPE:existanceofOPt}, Proposition~\ref{prop:existence of optimal estimator}  shows that the MPPE optimal estimator 
 indeed exists; 
\item  In Section~\ref{sec:orthogonality}, Proposition~\ref{prop:orthogonality like property}  derives an  \emph{orthogonality-like principle} that  serves as a  necessary and sufficient condition for an estimator to be MMPE optimal; 
\item Section~\ref{sec:examplesOfOptimalEstimators}   gives examples of optimal MMPE estimators. In particular, in Proposition~\ref{prop:GaussianMMPE} we find the  MMPE  for Gaussian random vectors, and  in Proposition~\ref{prop: estimator for two point} for discrete binary random variables; and
\item In Section~\ref{sec:basicpropOfEstim},  Proposition~\ref{prop:opt est} shows some basic properties of the optimal MMPE estimator in terms of input distribution, such as, \emph{linearity, stability, degradedness}, etc. Moreover, via an example it is shown that in general the MMPE optimal estimator is biased on average (i.e., the first moment of the error (bias) is not zero). However,  it is shown that the $\p$-th order estimator is unbiased on average in sense that the $\p-1$-th moment of the error is zero.

\end{itemize} 

\item In Section~\ref{sec:prop of MMPE} we study properties of the MMPE as a function of order $\p$, SNR and  the input distribution that will be useful in a number of applications: 
\begin{itemize}
\item In Section~\ref{sec: basic prop}, Proposition~\ref{prop:Scaling} shows that the MMPE is invariant under translations of the input random vector and derives basic scaling properties;
\item  In Section~\ref{sec:estimatonEquivalence},  Proposition~\ref{prop:equivalenceOfnoiseEst} shows that, as far as estimation error over the channel $\Y=\sqrt{\snr}\X+\Z$ is concerned the estimation of the input $\X$ is equivalent to the estimation of the noise $\Z$; and
\item 
In Section~\ref{sec: change of Measure}, Proposition~\ref{prop:change of measure}  gives a `change of measure' result that allows one to take the expectation in the definition of the MMPE with respect to an output at a different SNR. 
 \end{itemize}

\item In Section~\ref{sec:bound on the MMPE} we discuss basic bounds on the MMPE and show: \begin{itemize}
\item In Section~\ref{sec:basic Bounds},  Proposition~\ref{prop:higher moments bound 1}  develops basic ordering bounds between MMPE's of different orders and bounds equivalent to that of the LMMSE bound; 
\item In Section~\ref{sec:GaussianBOunds}, Proposition~\ref{prop:GaussianHardesToEstimate} shows that,  under an appropriate moment constraint on the input distribution, the  Gaussian input is asymptotically the `hardest' to estimate; 
\item In Section~\ref{sec:continuity},  Proposition~\ref{prop:logconvex}  derives interpolation bounds for the MMPE. One of the consequences of such bounds is  Proposition~\ref{prop:continuity}, which shows that the MMPE is a continuous function of order $\p$; and 
\item In Section~\ref{sec:bounds on MMPE discrete}, Proposition~\ref{prop:bound on discrete inputs} derives bounds on the MMPE with discrete vector inputs.  This in turn leads to a result in Proposition~\ref{prop:phasetrans} that shows that MMPE, similarly to the MMSE, can exhibit phase transitions (i.e., discontinuities as function of the SNR as dimension of the input goes to infinity). 
\end{itemize}

\item In Section~\ref{sec:conditional MMPE} we define the conditional MMPE  and show:
\begin{itemize}
\item  Proposition~\ref{prop:cond reduces} shows that conditioning reduces the MMPE; and
\item  Proposition~\ref{prop:max ration comb} shows that the MMPE estimation of $\X$ from  two  AWGN observations is equivalent to estimating $\X$ from a single observation with a higher SNR. This implies that the MMPE is a decreasing function of SNR.
\end{itemize} 

\item In Section~\ref{sec:SCPP} we show  applications of the developed tools:\begin{itemize}
\item In Proposition~\ref{prop: SCP generalized}, by using the tools developed for the conditional MMPE,  a simple proof of the SCPP for the MMSE is given, and extended to the MMPE;  
\item In Proposition~\ref{prop:MMSE at low snr with MMPE} we use the change of measure result in Proposition~\ref{prop:change of measure} to show a bound that complements the SCPP bound, that it is bounds the MMPE for all SNR values \emph{below} a certain  SNR value at which the MMPE is known; and
\item In Proposition~\ref{prop:continuitySNR}, by using change of measure result in Proposition~\ref{prop:MMSE at low snr with MMPE} and continuity of the MMPE in $\p$ from Proposition~\ref{prop:continuity}, we show that for any finite dimensional input the MMPE is a continuos function of SNR.
 \end{itemize}
 
 \item In Section~\ref{sec: applications} we apply the developed bounds and generalize or improve some  well known information theoretic MMSE bounds:
 \begin{itemize}
 \item In Section~\ref{sec:boundDiffEntropy}, Theorem~\ref{prop:generalization of continuous Fano's inequality} gives a general inequality that bounds the conditional differential entropy via the MMPE of which the continuous analog of Fano's from \cite[Theorem 8.6.6]{Cover:InfoTheory} is  a special case;
  \item In Section~\ref{sec:genOWbound}, Theorem~\ref{prop:OWimproved}  generalizes the Ozarow-Wyner bound \cite{PAMozarow} on the
mutual information achieved by a discrete input on an AWGN channel, to  vector discrete inputs  and yields  the sharpest known version of this bound. Moreover, in Theorem~\ref{prop: Gap in Ozarow-Wyner at n infinity}  we show how the bound behaves as the dimension of the input goes to infinity;
  \item In Section~\ref{prop:phaseTrans:bounds}, Theorem~\ref{prop: bound through MMPE}   improves  on the  previous characterizations of the width the phase transition
region of finite-length code  of length $n$ given by $O(\frac{1}{n})$ in \cite{NewBoundsOnMMSE}   to $O(\frac{1}{\sqrt{n}})$. This in turn  also improves the converse result on the communications under disturbance constrained problem studied in \cite{NewBoundsOnMMSE}; and
  \item  In Section~\ref{sec:BoundsOnDerivative}, Proposition~\ref{prop:bounds on derivative of MMSE via MMPE} we show
how the MMPE can be used to provide new lower and upper
bounds on the derivative of the MMSE. 
 \end{itemize}

\end{enumerate}

\subsection{Notation} 

Throughout the paper we adopt the following notational conventions: 
\begin{itemize}
\item
Deterministic scalar and vector quantities are denoted by lower case and bold lower case letters, respectively. Matrices are denoted by bold upper case letters;
\item Random variables  and vectors are denoted by upper case and bold upper case  letters, respectively, where r.v. is short for either random variable or random vector, which should  be clear from the context;
\item
If $A$ is a r.v. we denote the support of its distribution by $\supp(A)$;

\item
The symbol $|\cdot|$ may denote different things:  $ | {\bf A}|$ is the determinant of the matrix ${\bf A}$,
$|\mathcal{A}|$ is the cardinality of the set $\mathcal{A}$, 
$|X|$ is the cardinality of $\supp(X)$  , or
$|x|$ is the absolute value of the real-valued $x$;

\item $\E[ \cdot ]$ denotes the expectation operator;
\item  We denote the covariance of r.v. $\X$ by $ {\bf K}_{\X}$; 
\item
$\X \sim \mathcal{N}({\bf m },{\bf K}_{\X})$ denotes the density of a real-valued Gaussian r.v. $\X$ with mean vector ${ \bf m}$ and covariance matrix ${\bf K}_{\X}$;
\item The identity matrix is denoted by $\I$;
\item Reflection of the matrix $ \bf{A}$ along its main diagonal, or the transpose operation, is denoted by ${\bf A}^T$;
\item The trace operation  on the matrix $ \bf{A}$ is denoted by $\Trc( \bf{A})$; 
\item The Order notation ${ \bf A} \succeq {\bf B}$ implies that ${\bf A}-{\bf B}$ is a positive semidefinite matrix;
\item
$\log(\cdot)$ denotes logarithms in base $2$;
\item
$[n_1:n_2]$ is the set of integers from $n_1$ to $n_2 \geq n_1$;

\item
For $x\in\mathbb{R}$ we let
$\left \lfloor x \right\rfloor$  denote the largest integer not greater than $x$;

\item
For $x\in\mathbb{R}$ we let $[x]^{+} :=\max(x,0)$ and $\log^{+}(x) :=[\log(x)]^+$;

\item
Let $f(x),g(x)$ be two real-valued functions.
We use the Landau notation %
$f(x)=O(g(x))$ to mean that for {\it some}  $c>0$ there exists an $x_0$ such that $f(x)\leq c \, g(x)$ for all $x \geq x_0$,
and $f(x)=o(g(x))$ to mean that for {\it every} $c>0$ there exists an $x_0$ such that $f(x)<   c g(x)$ for all $x \geq x_0$;

\item We denote the conditional r.v.  $\X | \Y={\bf y} \sim p_{\X|\Y}(\cdot |{\bf y})$ as $\X_{\bf y}$;

\item We  denote the upper incomplete gamma function and the gamma function by
\begin{subequations}
\begin{align}
\Gamma \left( x; a \right)&:=  \int_{a}^\infty t^{x-1} e^{-t} dt,  \ x \in \mathbb{R}, a \in \mathbb{R}^{+},  \\
\Gamma \left( x\right)&:= \Gamma \left( x;0\right).
\end{align}
The generalized Q-function is denoted by
\begin{align}
\bar{Q}(x;a):=\frac{\Gamma \left( x; a \right)}{\Gamma \left( x\right)}.
\end{align}
\end{subequations}
In particular,  the generalized $Q$-function can be related to the standard $Q$-function, by  using the relationship $ Q( \sqrt{2} x ) = \frac{1}{2 \sqrt{\pi}} \Gamma \left( \frac{1}{2}; x^2 \right)$ and $\Gamma \left(\frac{1}{2}\right)=\sqrt{\pi}$, as  $\bar{Q} \left( \frac{1}{2};a^2 \right)= 2 Q(  \sqrt{2 } a )$; and
\item We define the volume of the region $S$ embedded in  $\mathbb{R}^n$ as
\begin{align}
{\rm Vol}(S):= \int_S  1 \  dx_1 dx_2 \cdot \cdot \cdot dx_n. 
\end{align} 
In particular, the volume of the $n$-dimensional ball  $B(r)$ of radius $r$ center at origin is given by \[ {\rm Vol}(B(r))= \frac{\pi^{\frac{n}{2}} r^n}{\Gamma \left( \frac{n}{2}+1\right)}.\]
\end{itemize}

\section{Cost Function Definition} 
\label{sec:costFUnctDef}
Motivated by the study of cost functions with non-quadratic error we define the following norm.

\begin{definition}
For the r.v. ${\bf U} \in \mathbb{R}^n$ and $\p>0$
\begin{align}
\| {\bf U} \|_{\p} :=  \left(\frac{1}{n } \E\left[ \Trc^{ \frac{\p}{2} } \left( {\bf U } { \bf  U}^T\right) \right] \right)^{\frac{1}{\p}}. \label{eq: defintion of the norm}
\end{align}
\end{definition} 
For $\p \ge 1$ the function  in \eqref{eq: defintion of the norm} defines a norm and obeys the triangle inequality 
\begin{align}
\| {\bf U+V} \|_{\p}  \le \| {\bf U} \|_{\p} +\| {\bf V} \|_{\p}, \label{eq: Minkowski} 
\end{align} 
as shown in Appendix~\ref{app:triangleInequality Proof of lem: Minkowski}. 
Therefore, throughout  the paper 
we define the $L_{\p}$ space, for $\p \ge 1$, as the space of r.v. on a  fixed probability space $( {\bf \Omega}, \sigma({\bf \Omega}), \mathbb{P})$ such that the norm  in~\eqref{eq: defintion of the norm} is  finite.  However, many of our results will hold for $0 \le \p <1$,  for which \eqref{eq: defintion of the norm}  is not a norm. 

In particular, for  $\Z \sim \mathcal{N}\left( 0,\I \right)$ the norm in~\eqref{eq: defintion of the norm}  is given by 
\begin{align}
n \| \Z \|_p^{\p}=\E \left[\Trc^{\frac{\p}{2}}(\Z \Z^T) \right]=\E \left[ \left(\sum_{i=1}^n Z_i^2 \right)^{\frac{\p}{2}} \right]= 2^{\frac{\p}{2}}  \frac{\Gamma \left(\frac{n}{2}+{\frac{\p}{2}} \right)}{\Gamma \left( \frac{n}{2} \right)}, \text{ for }  n \in \mathbb{N}, p \ge 0, \label{eq:moments of Gaussian}
\end{align}
and for $\V$ uniform over the $n$ dimensional  ball of radius $r$ the norm in~\eqref{eq: defintion of the norm} is given by
\begin{align}
n \| \V \|_{\p}^{\p} %
= \frac{1}{{\rm Vol}(B(r))}  \frac{\pi^{\frac{n}{2}} }{\Gamma \left( \frac{n}{2}\right)}  \int_{0}^r  \rho^{\p} \rho^{n-1} d \rho 
 =  \frac{n }{  \p+n } r^{\p}, \text{ for }  n \in \mathbb{N}, \p \ge 0. \label{eq: mommentsUniform}
\end{align} 
Note that for  $n=1$ we have that $\| U \|_{\p}^{\p} =\E \left[ |U|^{\p}\right] $ and therefore from now on we will refer to $\| \U \|_{\p}^{\p}$ as $\p$-th moment of  $\U$. Naturally, for $n>1$, there are many other ways for defining the moments, see for example \cite{lutwak2013affine}. However, in view of the information theoretic problems we are interested in, such for example from previous work \cite{NewBoundsOnMMSE}, the definition in~\eqref{eq: defintion of the norm}  arises naturally. 

\begin{definition}
\label{def:MMPE}
We define the minimum mean $\p$-th error (MMPE) %
of estimating $\X$ from $\Y$   as
\begin{subequations}
\begin{align}
\mmpe(\X | \Y;\p)
&:=\inf_{f}  n^{-1} \E \left [  \Err^{\frac{\p}{2}}\left(\X, f(\Y)\right) \right],\\
&:= \inf_{f} \| \X - f(\Y) \|^{\p}_{\p}  
\end{align}
 \label{eq: definition of MMPE} 
\end{subequations} 
and  where  the minimization is 
over all possible 
Borel measurable  functions $f(\Y)$. Whenever the optimal MMPE estimator  exists,  we shall denote it by  $f_{\p}(\X|\Y)$.\footnote{The restriction to measurable functions, in Definition~\ref{def:MMPE}, is necessary.  See \cite{wise1985note} for surprising complications that can arise without this assumption.}

\end{definition} 

We shall denote 
\begin{align}
\mmpe(\X | \Y; \p)=\mmpe(\X,\snr,\p), 
\end{align}
if  $\Y$ and $\X$ are related as
\begin{align}
\Y=\sqrt{\snr} \ \X+\Z,
\label{eq:channel model}
\end{align}
\noindent
where  $\Z,\X,\Y \in \mathbb{R}^n$,   $ \Z~\sim \mathcal{N}({\bf 0}, \I)$ is independent of $\X$,  and  $\snr \ge 0$ is the SNR. When it will be necessary to emphasize the SNR at the output $\Y$, we will denote it with $\Y_{\snr}$.  Since the distribution of the noise is fixed 
$\mmpe(\X| \Y;\p)$ is completely determined by distribution of $\X$ and $\snr$ and there is no ambiguity in using the notation $\mmpe(\X,\snr,\p)$.  Applications to the Gaussian noise channel will be the main focus of this paper.

For $\p=2$, the MMPE reduces to the MMSE,  that is, $\mmpe(\X| \Y ;2)=\mmse(\X|\Y)$ and  $f_2(\X|\Y)=\E[\X|\Y]$. 
Note that there are other ways of defining the loss function in~\eqref{eq: definition of MMPE}; our definition   in~\eqref{eq: definition of MMPE} is  motivated by:
\begin{itemize}
\item 
For $X \in \mathbb{R}^1$ the error in \eqref{eq: definition of MMPE} reduces to a natural expression with loss function  given by $\Err^{\frac{\p}{2}} (X,f(Y))= |X-f(Y)|^{\p}$;
\item  
The definition in \eqref{eq: definition of MMPE} naturally appears in applications of H{\"o}lder's or Jensen's inequalities to $\mmse(X|Y)$; %
and 
\item The norm in \eqref{eq: defintion of the norm} used in the definition of \eqref{eq: definition of MMPE}  can be related to information theoretic quantities, such as differential entropy and Reyni entropy, via the vector moment entropy inequality from \cite{MomentsEntropyInequality}.
\end{itemize}
 
We shall also look at %
the $\p$-th error achieved by the suboptimal (unless $\p=2$) estimator  $\E[\X|\Y]$,  that is,
\begin{align}
n^{-1}\E \left[ \Err^{\frac{\p}{2}} \left(\X,\E[\X| \Y] \right) \right]=\| \X-\E[\X| \Y] \|_{\p}^{\p},
\label{eq: higher order moments}
\end{align}
which represents higher order moments of the MMSE  loss function and serves (see below) as  an upper bound on~\eqref{eq: definition of MMPE}.


\section{Properties of the Optimal MMPE Estimator} 
\label{sec:MMPE: properties of estimator}

\subsection{Existence of Optimal Estimator} 
\label{sec:MMPE:existanceofOPt}
It is important to point out that $\| \X-\E[\X|\Y]  \|_{\p} $  in general is not equal to the MMPE, as  $\E[\X|\Y]$ might not be the optimal estimator under the $\p$-th norm. The first result of this section shows that for AWGN channel the optimal estimator $f_{\p}(\X|\Y={\bf y})$ indeed exists.

\begin{prop}
\label{prop:existence of optimal estimator}
For $\mmpe(\X,\snr,\p)$, $\p > 0$, $\snr>0$ the optimal estimator 
is given by the following point-wise relationship 
\begin{align}
f_{\p}( \X | \Y={\bf y}) = \arg \min_{ {\bf v} \in \mathbb{R}^n } \E \left[ \Err^{\frac{\p}{2}} (\X, {\bf v}) |\Y={\bf y}\right].\label{eq: optimal estimator}
\end{align}
Moreover, if  $ \| \X \|_{\p}< \infty$ then \eqref{eq: optimal estimator}  is also valid for $\snr=0^{+}$. 
\end{prop}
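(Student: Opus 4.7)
The plan is to reduce the functional infimum to a pointwise minimization using the tower property. Define
\begin{align}
h(\v, \y) := \E\left[\Err^{\p/2}(\X, \v) \,\big|\, \Y=\y\right] = \E\left[\|\X-\v\|^{\p} \,\big|\, \Y=\y\right].
\end{align}
For any Borel measurable estimator $f$, tower gives $n^{-1}\E[\Err^{\p/2}(\X,f(\Y))] = n^{-1}\E[h(f(\Y),\Y)]$. Hence it suffices to show that, for almost every $\y$, the map $\v \mapsto h(\v,\y)$ attains its minimum on $\mathbb{R}^n$, and that a Borel measurable selection $f_\p(\y) \in \arg\min_\v h(\v,\y)$ exists; any such $f_\p$ will then satisfy $n^{-1}\E[h(f(\Y),\Y)] \geq n^{-1}\E[h(f_\p(\Y),\Y)]$ for every measurable $f$, attaining the MMPE.

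For the pointwise existence, I would exploit the Gaussian tail of the posterior. When $\snr>0$, the conditional law of $\X$ given $\Y=\y$ is absolutely continuous with respect to the prior $P_\X$, with Radon--Nikodym derivative proportional to $\exp\!\bigl(-\tfrac12\|\y-\sqrt{\snr}\,\x\|^2\bigr)$. The exponential factor dominates any polynomial in $\|\x\|$, so $\E[\|\X\|^{q}\mid\Y=\y]<\infty$ for every $q>0$ and for $\y$ in a full-measure set. Three consequences follow: (i) $h(\v,\y)<\infty$ for every $\v\in\mathbb{R}^n$ by the triangle inequality; (ii) $\v \mapsto h(\v,\y)$ is continuous on $\mathbb{R}^n$ by dominated convergence, using the local dominator $(r+\|\X\|)^\p$ on any ball $\{\|\v\|\leq r\}$; and (iii) since $\|\X-\v\|\geq \bigl|\,\|\v\|-\|\X\|\,\bigr|$, one has $h(\v,\y)\to\infty$ as $\|\v\|\to\infty$. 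Continuity and coercivity on $\mathbb{R}^n$ yield a nonempty, compact minimizing set $M(\y):=\arg\min_\v h(\v,\y)$.

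Measurability of the selection then follows from the Kuratowski--Ryll-Nardzewski theorem: $h$ is jointly measurable and continuous in $\v$ (a Carathéodory integrand), and the correspondence $\y\mapsto M(\y)$ is nonempty, closed-valued, and weakly measurable, so a Borel measurable selector $f_\p$ exists. For $\p>1$ strict convexity of $\v\mapsto\|\x-\v\|^\p$ actually makes $M(\y)$ a singleton, and a direct continuity argument suffices. Finally, for the boundary case $\snr=0^+$, the observation $\Y$ is independent of $\X$, so the problem collapses to $\inf_{\v\in\mathbb{R}^n}\,n^{-1}\E[\|\X-\v\|^\p]$; the hypothesis $\|\X\|_\p<\infty$ recovers finiteness, continuity, and coercivity of the objective by the same triangle-inequality arguments, and existence of a minimizer follows identically. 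The main obstacle is the measurable-selection step when $\p\leq 1$, where convexity (and hence uniqueness of the minimizer) is lost and one must carefully verify that $M(\y)$ is a measurable multifunction in order to invoke Kuratowski--Ryll-Nardzewski.
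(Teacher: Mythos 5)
Your proposal is correct and follows essentially the same route as the paper's proof: reduce the functional infimum to a pointwise minimization via the tower property, use the Gaussian tail of the posterior to obtain finiteness of all conditional moments, establish continuity of ${\bf v}\mapsto h({\bf v},{\bf y})$ by dominated convergence, and conclude attainment (where the paper shows boundedness of a minimizing sequence by a Fatou's-lemma contradiction, you invoke coercivity; the two arguments are interchangeable). The one point where you go beyond the paper is in explicitly securing Borel measurability of the argmin selector via a measurable-selection theorem --- the paper defines $f_{\p}$ pointwise by the argmin and leaves the measurability of that selection implicit --- so this is a tightening of the same argument rather than a divergence from it.
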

\begin{proof}
See Appendix~\ref{app:existence of optimal estimator}.  
\end{proof}

A  result similar to that in Proposition~\ref{prop:existence of optimal estimator} can be found in \cite[Theorem 4.1.1]{lehmann2006theory} where it has been shown that  for a given $\X$  an estimator $f_{\p}( \X | \Y)$ is  optimal  provided that  the minimum on the right hand side of \eqref{eq: optimal estimator} exists. In contrast to  \cite[Theorem 4.1.1]{lehmann2006theory}, Proposition~\ref{prop:existence of optimal estimator} shows that  the minimum in \eqref{eq: optimal estimator} exists for any $\X$, and $f_{\p}( \X | \Y)$ is the MMPE optimal estimator for any $\X$. 

Proposition~\ref{prop:existence of optimal estimator} immediately implies the following corollary on the interchange of the expectation and infimum which will be used in many of the following proofs. 

\begin{corollary} 
\label{lem: exchange of inf and expect}
For $\p > 0$ and $\snr>0$
\begin{align}
\mmpe(\X,\snr,\p) = \inf_{f} \frac{1}{n}\E \left[\Err^{\frac{\p}{2}}(\X,f(\Y))\right]= \frac{1}{n} \E \left[ \inf_{f} \E \left[ \Err^{\frac{\p}{2}}(\X,f(\Y))| \Y\right]\right].
\end{align}
\end{corollary}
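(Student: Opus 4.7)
The plan is to deduce Corollary~\ref{lem: exchange of inf and expect} as a direct consequence of Proposition~\ref{prop:existence of optimal estimator}, which already does the hard work of establishing that the pointwise minimizer $f_{\p}(\X\mid\Y=\mathbf{y})$ exists (and is measurable in $\mathbf{y}$) for every $\mathbf{y}$. Once pointwise existence is in hand, the interchange of $\inf$ and $\E$ is almost a one-line argument via the tower property of conditional expectation.

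First I would write the middle quantity using the tower property: for any Borel measurable candidate $f$,
\begin{equation*}
\E\!\left[\Err^{\frac{\p}{2}}(\X,f(\Y))\right]
=\E\!\left[\,\E\!\left[\Err^{\frac{\p}{2}}(\X,f(\Y))\,\big|\,\Y\right]\right].
\end{equation*}
Since $f(\Y)$ is $\sigma(\Y)$-measurable, the inner conditional expectation is, for $\Prob_{\Y}$-almost every $\mathbf{y}$, bounded below by $\inf_{\mathbf{v}\in\mathbb{R}^n}\E[\Err^{\p/2}(\X,\mathbf{v})\mid\Y=\mathbf{y}]$. Taking the outer expectation and then the infimum over $f$ yields the ``$\ge$'' direction:
\begin{equation*}
\inf_{f}\tfrac{1}{n}\E\!\left[\Err^{\frac{\p}{2}}(\X,f(\Y))\right]
\;\ge\;\tfrac{1}{n}\E\!\left[\inf_{f}\E\!\left[\Err^{\frac{\p}{2}}(\X,f(\Y))\,\big|\,\Y\right]\right],
\end{equation*}
where the inner infimum is identified with the pointwise infimum over $\mathbf{v}\in\mathbb{R}^n$.

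For the reverse inequality, I would invoke Proposition~\ref{prop:existence of optimal estimator} to obtain the measurable pointwise minimizer $f_{\p}(\X\mid\Y=\mathbf{y})$. Because this $f_{\p}$ attains the inner infimum at every $\mathbf{y}$, it is itself a legitimate candidate estimator for the outer minimization, so
\begin{equation*}
\tfrac{1}{n}\E\!\left[\inf_{f}\E\!\left[\Err^{\frac{\p}{2}}(\X,f(\Y))\,\big|\,\Y\right]\right]
=\tfrac{1}{n}\E\!\left[\Err^{\frac{\p}{2}}(\X,f_{\p}(\X\mid\Y))\right]
\;\ge\;\inf_{f}\tfrac{1}{n}\E\!\left[\Err^{\frac{\p}{2}}(\X,f(\Y))\right],
\end{equation*}
which closes the loop. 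Combining both inequalities gives the claimed identity and shows that both sides equal $\mmpe(\X,\snr,\p)$.

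The only subtlety, and what I would flag as the main (but already resolved) obstacle, is measurability: one needs the map $\mathbf{y}\mapsto f_{\p}(\X\mid\Y=\mathbf{y})$ to be Borel measurable so that it qualifies as an estimator in Definition~\ref{def:MMPE}, and one needs $\mathbf{y}\mapsto\inf_{\mathbf{v}}\E[\Err^{\p/2}(\X,\mathbf{v})\mid\Y=\mathbf{y}]$ to be measurable so that the outer expectation is well-defined. Both points are guaranteed by Proposition~\ref{prop:existence of optimal estimator}, so no additional selection-theorem argument is needed here. The proof is therefore a short corollary, with the tower property doing the bookkeeping and the existence result doing the heavy lifting.
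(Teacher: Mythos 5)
Your proposal is correct and follows essentially the same route as the paper: both directions of the sandwich argument (the tower-property inequality $\inf_f \E[\cdot] \ge \E[\inf_f \E[\cdot\mid\Y]]$ and the use of the pointwise minimizer $f_\p(\X\mid\Y)$ from Proposition~\ref{prop:existence of optimal estimator} as an admissible candidate to close the loop) are exactly the chain of inequalities in the paper's proof. Your explicit remark about measurability of the selection $\mathbf{y}\mapsto f_\p(\X\mid\Y=\mathbf{y})$ is a reasonable extra caveat but does not change the argument.
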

\begin{proof}
In the proof of Proposition~\ref{prop:existence of optimal estimator} it is shown that \begin{align*}
\E \left[ \inf_{f} \E \left[ \Err^{\frac{\p}{2}}(\X,f(\Y))| \Y\right]\right]=\E \left[ \Err^{\frac{\p}{2}}(\X,f_p(\X|\Y))\right], 
\end{align*}
for $f_\p(\X|\Y)$ in \eqref{eq: optimal estimator}.
Therefore, we have the following chain of inequalities
\begin{align}
\E \left[ \Err^{\frac{\p}{2}}(\X,f_\p(\X|\Y))\right]&=\E \left[\inf_{f} \E \left[ \Err^{\frac{\p}{2}}(\X,f(\Y))| \Y\right]\right] \le \inf_{f}  \E \left[ \Err^{\frac{\p}{2}}(\X,f(\Y)) \right] \notag\\
&\le  \E \left[ \Err^{\frac{\p}{2}}(\X,f_\p(\X|\Y))\right].
\end{align}
This concludes the proof. 
\end{proof}

 \subsection{Orthogonality-like Property}
 \label{sec:orthogonality}
The MMPE for $\p \neq 2$ differs from MMSE
in a number of aspects. The main difference is that the norm defined in~\eqref{eq: defintion of the norm}  is not a Hilbert space norm in general (unless $\p=2$); as a result, there is no notion of inner product or orthogonality, and $f_\p(\X|\Y)$, unlike $\E[\X| \Y],$ can no longer be thought of as an orthogonal projection. Therefore, the orthogonality principle---an important tool in the analysis of the MMSE---is no longer available when studying the MMPE for $\p \neq 2$.  However, an orthogonality-like property can indeed be shown for the MMPE.  
\begin{prop} 
\label{prop:orthogonality like property} \emph{(Necessary and Sufficient Condition for the Optimality of $f_\p( \X |\Y)$).}
For any $\X$, any $\snr>0$, $\p \ge 1$, $f_\p( \X |\Y)$  is an optimal estimator if and only if 
\begin{subequations}
\begin{align}
\E \left[  \Err^{{\frac{\p-2}{2}}} \left( \X, f_p( \X |\Y) \right) \cdot \left( \X-f_\p( \X |\Y)\right)^T \cdot g(\Y)\right]=0, \label{eq: orthoglike}
\end{align}
for any  deterministic function $g: \mathbb{R}^n \to \mathbb{R}^n $, that is,
\begin{align}
  \E \left[  \left({\bf W}^T {\bf W}\right)^{\frac{\p-2}{2}} \cdot {\bf W}^T  \cdot g(\Y)\right]=0,
\end{align} 
\end{subequations}
 where ${\bf W}= \X-f_\p( \X |\Y) $.
 Moreover, for $0 \le \p < 1$  the condition in \eqref{eq: orthoglike} is necessary for optimality. 
\end{prop}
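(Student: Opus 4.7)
My plan is to reduce the problem to a pointwise minimization via Proposition~\ref{prop:existence of optimal estimator}, apply a first-variation argument to extract the necessity, and then use convexity to promote the necessary condition to a sufficient one when $\p \ge 1$. By Proposition~\ref{prop:existence of optimal estimator}, for (almost) every $\mathbf{y}$ the optimum $f_\p(\X|\Y=\mathbf{y})$ minimizes the deterministic function
\[
F_{\mathbf{y}}(\mathbf{v}) := \E\!\left[\Err^{\p/2}(\X, \mathbf{v}) \,\middle|\, \Y = \mathbf{y}\right]
\]
over $\mathbf{v} \in \mathbb{R}^n$, so the proof reduces to a first-order optimality condition for a parametric convex-analysis problem indexed by $\mathbf{y}$.

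For necessity I would differentiate $F_{\mathbf{y}}$ formally. Since $\Err(\X, \mathbf{v}) = (\X-\mathbf{v})^T(\X-\mathbf{v})$, the chain rule gives
\[
\nabla_{\mathbf{v}} F_{\mathbf{y}}(\mathbf{v}) = -\p \, \E\!\left[\Err^{(\p-2)/2}(\X,\mathbf{v})\,(\X - \mathbf{v}) \,\middle|\, \Y = \mathbf{y}\right].
\]
Setting this to zero at $\mathbf{v} = f_\p(\X|\Y=\mathbf{y})$, taking the inner product with an arbitrary $g(\mathbf{y}) \in \mathbb{R}^n$, and invoking the tower property of conditional expectation yields \eqref{eq: orthoglike}. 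Conversely, if \eqref{eq: orthoglike} holds for every Borel $g$, then the vector-valued conditional expectation $\E[\Err^{(\p-2)/2}(\X,f_\p)(\X - f_\p) \mid \Y]$ must vanish almost surely, so $f_\p$ is pointwise stationary.

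To promote stationarity to global optimality in the regime $\p \ge 1$, I would invoke convexity: the map $\mathbf{v} \mapsto (\|\X-\mathbf{v}\|^2)^{\p/2} = \|\X - \mathbf{v}\|^{\p}$ is convex for $\p \ge 1$ (the Euclidean norm is convex and $t \mapsto t^{\p}$ is convex and nondecreasing on $[0,\infty)$), so $F_{\mathbf{y}}$ is convex in $\mathbf{v}$ and every stationary point is a global minimizer. For $0 \le \p < 1$ convexity fails and only necessity survives, matching the closing sentence of the proposition.

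The main technical obstacle is justifying the differentiation under the conditional expectation, since the factor $\Err^{(\p-2)/2}(\X,\mathbf{v})$ is singular on $\{\X = \mathbf{v}\}$ whenever $\p < 2$. I would sidestep a direct derivative computation by instead using difference quotients $[F_{\mathbf{y}}(f_\p + t g) - F_{\mathbf{y}}(f_\p)]/t$: for $\p \ge 1$ these are monotone in $t$ by per-sample convexity, so monotone convergence produces the one-sided directional derivatives and their vanishing characterizes the optimum. For $0 \le \p < 1$, I would instead regularize with $(\Err + \epsilon)^{\p/2}$, derive the analogous identity for $\epsilon > 0$ (where the integrand is smooth in $\mathbf{v}$), and pass $\epsilon \downarrow 0$ via dominated convergence, with the dominating function supplied by the moment control implicit in $\mmpe(\X,\snr,\p) < \infty$.
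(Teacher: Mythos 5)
Your proposal follows essentially the same route as the paper's proof in Appendix~\ref{app:prop:orthogonality like property}: a first-variation (directional-derivative) argument yields \eqref{eq: orthoglike} as a stationarity condition for necessity, and convexity for $\p\ge 1$ upgrades stationarity to global optimality --- the paper phrases the latter as nonnegativity of the second variation in $\epsilon$, you phrase it as convexity of the pointwise objective $F_{\bf y}(\mathbf{v})$, and these are the same fact. The genuine differences are in execution. You condition on $\Y={\bf y}$ first via Proposition~\ref{prop:existence of optimal estimator} and treat a finite-dimensional problem, whereas the paper perturbs the estimator globally by $f+\epsilon g$ and interchanges limit and expectation by dominated convergence; your monotone-difference-quotient justification for $\p\ge1$ is actually the cleaner of the two, since the paper's stated bound controls each difference quotient separately rather than uniformly in $\epsilon$. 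One soft spot in your plan is the regime $0\le\p<1$: the gradient of the regularized loss $(\Err+\epsilon)^{\p/2}$ is bounded only by $\p\,\Err^{(\p-1)/2}$, which is singular on the event $\{\X=f_\p(\X|\Y)\}$ and need not be integrable under the conditional law (for instance when $\X$ is discrete and the optimal estimate sits on an atom, as in Proposition~\ref{prop: estimator for two point} with small $\p$); moreover the regularized stationarity identity holds at the minimizer of the \emph{regularized} problem, not at $f_\p$, so an extra stability step is needed to transfer it. The paper's own handling of this singular regime is comparably informal, so this is a shared weakness rather than a defect unique to your write-up, but if you want the $\p<1$ necessity claim to be airtight you must say explicitly how $\Err^{\frac{\p-2}{2}}(\X-f_\p)$ is interpreted and integrated on $\{\X=f_\p(\X|\Y)\}$.
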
 
\begin{proof}
See Appendix~\ref{app:prop:orthogonality like property}. 
\end{proof}
Note that   Proposition~\ref{prop:orthogonality like property} for $n=1$ and $\p \in \mathbb{R}^{+}$ reduces to 
\begin{align}
\E [ |X- f_\p(X|Y)|^{\p-2}(X- f_\p(X|Y)) g(Y)]=0,
\end{align}
which for ${\frac{\p}{2}} \in \mathbb{N}$ further reduces to 
\begin{align}
\E [ (X- f_\p(X|Y))^{\p-2}g(Y)]=0.
\end{align} 
Moreover,  for $\p=2$ Proposition~\ref{prop:orthogonality like property} reduces to the familiar orthogonality principle 
\begin{align}
\E \left[  \left( \X-f_2( \X |\Y)\right)^T \cdot g(\Y)\right]= \E \left[  \left( \X-\E[ \X |\Y]\right)^T \cdot g(\Y)\right]=0. 
\end{align} 

\begin{remark}
In the analysis of the MMSE the orthogonality property is an important tool, used for example  to show that $\E[\X| \Y]$ is the unique minimizer. The argument goes as follows: assume that we have another optimal estimator $f(\Y) \neq \E[\X| \Y]$, then by orthogonality principle, 
\begin{align}
0=\E[ (\X-\E[\X| \Y])^T g(\Y)]- \E[ (\X-f(\Y))^T g(\Y)]= \E[ (f(\Y)-E[\X|\Y])^T g(\Y)].
\end{align} 
By choosing $g(\Y)=  (f(\Y)-E[\X|\Y])$ we see that $\E[ (f(\Y)-E[\X|\Y])^T  (f(\Y)-E[\X|\Y]) ] > 0$, arriving at a contradiction.  This implies that $\E[\X| \Y]$ is the unique estimator up to a set of measure zero.

In \cite[Lemma 1]{akyol2012conditions}, by replicating the above argument and by assuming that ${\frac{\p}{2}} \in \mathbb{N}$ and $n=1$, it was shown that the optimal MMPE estimator is \emph{unique}. However,  since the proof relies heavily on the assumption that  ${\frac{\p}{2}} \in \mathbb{N}$ and $n=1$,   this argument cannot be extended in a straightforward way to $\p\in \mathbb{R}^{+}$ or $n>1$. 

 However, uniqueness of the MMPE optimal estimator can be shown for $\p>1$ (i.e., strictly convex loss functions) by using Proposition~\ref{prop:existence of optimal estimator} in  conjunction with \cite[Corollary 4.1.4]{lehmann2006theory}.
\end{remark} 

\subsection{Examples of Optimal MMPE Estimators} 
\label{sec:examplesOfOptimalEstimators}
 In general we do not have a closed form solution for the MMPE optimal estimator in~\eqref{eq: optimal estimator}. 
Interestingly, the optimal estimator for  Gaussian inputs can be found and   is linear for all $\p \ge 1$. Note that similar results have been  demonstrated in \cite{sherman1958non} and \cite{akyol2012conditions} for scalar Gaussian inputs.    Next we extend this result to vector inputs and give two alternative proofs of the linearity of the optimal MMPE estimator for Gaussian  inputs, via Proposition~\ref{prop:existence of optimal estimator} and via Proposition~\ref{prop:orthogonality like property}.

\begin{prop}
\label{prop:GaussianMMPE}
For input $\X_G \sim \mathcal{N}(0,\I)$  and $\p \ge 1$
\begin{subequations}
\begin{align}
\mmpe(\X_G,\snr,\p)=  \frac{ \| \Z \|_{\p}^{\p}}{(1+\snr)^\frac{\p}{2}}, 
\end{align}
with optimal estimator given by 
\begin{align}
f_p(\X_G|\Y={\bf y})=\frac{\sqrt{\snr} \ {\bf y}}{1+\snr}.
\end{align}
\end{subequations}
\end{prop}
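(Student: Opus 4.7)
The plan is to guess that the MMPE optimal estimator coincides with the usual (MMSE) conditional mean $f_\p(\X_G|\Y={\bf y}) = \sqrt{\snr}\,{\bf y}/(1+\snr)$ and then verify this guess via two independent routes, as the remark preceding the proposition hints at. Identifying the candidate is essentially free once one recalls the Gaussian posterior, so the real work is in certifying optimality under the non-quadratic loss, and then reading off the value of the MMPE.

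First approach, via Proposition~\ref{prop:orthogonality like property}. Plugging in the guessed linear estimator and simplifying yields the error representation $\X_G - f_\p(\X_G|\Y) = (\X_G - \sqrt{\snr}\,\Z)/(1+\snr)$, which is Gaussian with covariance $\I/(1+\snr)$. The key observation is that this error is \emph{independent} of $\Y$, because $\X_G - \sqrt{\snr}\,\Z$ and $\Y = \sqrt{\snr}\,\X_G + \Z$ are jointly Gaussian with zero cross-covariance. The orthogonality-like condition in Proposition~\ref{prop:orthogonality like property} then reduces, by independence, to showing $\E[\|\W\|^{\p-2}\W^T] = {\bf 0}$ for $\W\sim\mathcal{N}(0,\I)$, which holds by coordinate-wise sign-symmetry of $\W$: the weight $\|\W\|^{\p-2}$ is invariant under $W_i \mapsto -W_i$ while $W_i$ flips sign. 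Since the orthogonality-like condition is \emph{sufficient} for $\p \ge 1$, this certifies optimality.

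Second approach, via Proposition~\ref{prop:existence of optimal estimator}. Conditional on $\Y={\bf y}$, the posterior $\X_G|\Y={\bf y}$ is Gaussian with mean ${\bf m}({\bf y}) := \sqrt{\snr}\,{\bf y}/(1+\snr)$ and covariance $\I/(1+\snr)$; equivalently it is distributed as ${\bf m}({\bf y}) + \tilde\W$ with $\tilde\W \sim \mathcal{N}({\bf 0},\I/(1+\snr))$. The proposition reduces optimization to the pointwise minimization of ${\bf v} \mapsto \E[\|{\bf m}({\bf y}) - {\bf v} + \tilde\W\|^\p]$. For $\p \ge 1$ the map ${\bf x} \mapsto \|{\bf x}\|^\p$ is convex and even, and $\tilde\W \stackrel{d}{=} -\tilde\W$; the midpoint inequality $\|{\bf a}+\tilde\W\|^\p + \|-{\bf a}+\tilde\W\|^\p \ge 2\|\tilde\W\|^\p$ combined with this symmetry yields $\E[\|{\bf a}+\tilde\W\|^\p] \ge \E[\|\tilde\W\|^\p]$, attained at ${\bf a}={\bf 0}$. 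Hence ${\bf v} = {\bf m}({\bf y})$ is pointwise optimal, which again gives the claimed linear estimator.

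With optimality in hand, the numerical value of the MMPE is routine: the error $(\X_G - \sqrt{\snr}\,\Z)/(1+\snr)$ is distributed as $\Z/\sqrt{1+\snr}$, and pulling the scalar out of the norm in~\eqref{eq: defintion of the norm} gives
\begin{align*}
\mmpe(\X_G,\snr,\p) = \left\|\frac{\Z}{\sqrt{1+\snr}}\right\|_\p^\p = \frac{\|\Z\|_\p^\p}{(1+\snr)^{\p/2}},
\end{align*}
where $\|\Z\|_\p^\p$ is the explicit Gamma-function expression from~\eqref{eq:moments of Gaussian}. The main subtlety is the vector-valued orthogonality step: unlike the $\p=2$ case, one cannot simply invoke ``zero mean of the error,'' and must instead exploit coordinate sign-symmetry of the Gaussian error together with the weight $\|\W\|^{\p-2}$ to kill the cross terms. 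The hypothesis $\p \ge 1$ enters identically in both routes, via sufficiency of the orthogonality-like condition and via convexity of $\|\cdot\|^\p$.
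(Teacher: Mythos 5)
Your proposal is correct, and its overall architecture mirrors the paper's: the paper likewise proves this proposition twice, once via the orthogonality-like condition of Proposition~\ref{prop:orthogonality like property} in the main text and once via the pointwise minimization of Proposition~\ref{prop:existence of optimal estimator} in Appendix~\ref{app:proof: prop:GaussianMMPE}. Your first route is essentially the paper's main proof, but you execute a step the paper glosses over: the paper asserts $\E[f(\W)g(\Y)]=0$ for \emph{any} two functions, which as written is only the factorization $\E[f(\W)]\,\E[g(\Y)]$ by independence; one still needs $\E\bigl[\|\W\|^{\p-2}\W^T\bigr]={\bf 0}$, which you correctly supply via coordinate-wise sign symmetry of the Gaussian error. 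Your second route genuinely diverges from the paper's Appendix~\ref{app:proof: prop:GaussianMMPE}: there the argument is scalar ($n=1$), reducing to $\min_a \E[|Z-a|^\p]$ and analyzing the sign of the derivative $f'(a)$ (with a Leibniz-rule justification for differentiating under the expectation), whereas your midpoint-convexity-plus-symmetry argument
\begin{align*}
\E\left[\left\|{\bf a}+\tilde\W\right\|^{\p}\right] \ge \E\left[\left\|\tilde\W\right\|^{\p}\right]
\end{align*}
needs no differentiation and works directly for all $n\ge 1$, which is cleaner and matches the generality claimed in the proposition. The final computation of the value of the MMPE is identical in both treatments.
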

\begin{proof}
The proof follows by observing that  ${\bf W}=\X_G-\frac{\sqrt{\snr}  \Y }{1+\snr}$ has a Gaussian distribution  and  is independent of $\Y$. So, for any  two functions $f(\cdot)$ and $g(\cdot)$  we have
\begin{align}
\E[f( {\bf W}) g(\Y)]=0. \label{eq: Gaussian Independent for any function}
\end{align}
Therefore, by using \eqref{eq: Gaussian Independent for any function}  for estimator $f_p(\X_G|\Y={\bf y})=\frac{\sqrt{\snr} }{1+\snr} \ {\bf y}$ the necessary and sufficient conditions in Proposition~\ref{prop:orthogonality like property} hold and thus the linear estimator must be an optimal one. Finally observe that 
\begin{align}
 \left\| \X_G-\frac{\sqrt{\snr}}{1+\snr}  \Y \right\|_{\p}^{\p} = \|\hat{\Z}\|_{\p}^{\p} = \frac{ \| \Z \|_{\p}^{\p}}{(1+\snr)^{\frac{\p}{2}}},
 \end{align}
 where  we have used   $\hat{\Z}=\X_G-\frac{\sqrt{\snr} }{1+\snr} \Y \sim \mathcal{N} \left(0, \frac{1}{1+\snr} \I\right)$.

For a proof that uses only Proposition~\ref{prop:existence of optimal estimator} see Appendix~\ref{app:proof: prop:GaussianMMPE}. 
\end{proof} 

The optimal MMPE estimator is in general a function of $p$ as shown next. 

\begin{prop}
\label{prop: estimator for two point} 
For $X=\{ x_1,x_2\}$ with $\mathbb{P}[X=x_1]=1-\mathbb{P}[X=x_2]=q \in (0,1)$ and for $\p \ge 1$ we have that
\begin{subequations}
\begin{align}
f_\p(X|Y=y)&= \frac{x_1 \cdot q^{{\frac{1}{\p-1}}} \cdot\eu^{-\frac{(y-\sqrt{\snr}x_1)^2}{2(\p-1)}} +x_2 \cdot(1-q)^{{\frac{1}{\p-1}}}\cdot \eu^{-\frac{(y-\sqrt{\snr}x_2)^2}{2(\p-1)}}}{  q^{{\frac{1}{\p-1}}}\cdot\eu^{-\frac{(y-\sqrt{\snr}x_1)^2}{2(\p-1)}} + (1-q)^{{\frac{1}{\p-1}}}\cdot\eu^{-\frac{(y-\sqrt{\snr}x_2)^2}{2(\p-1)}}}. \label{eq: optimal est for binary}
\end{align} 
In particular,  for  $\p=1$, we have that \begin{align}
f_{\p=1}(X|Y=y)=\left \{  \begin{array}{ll} x_1,  & a \ge 1 \\
x_2, & a <1
\end{array}\right. ,
\end{align} 
\end{subequations}
where $a=\frac{q \ \eu^{- \frac{(y-\sqrt{\snr}x_1)^2}{2}} }{(1-q) \ \eu^{- \frac{(y-\sqrt{\snr}x_2)^2}{2}}}=\frac{q}{q-1} \eu^{- \snr \frac{(x_1-x_2)(x_1+x_2)}{2}+\sqrt{\snr} y (x_1-x_2)}$.
\end{prop}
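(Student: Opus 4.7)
The plan is to invoke Proposition~\ref{prop:existence of optimal estimator} to reduce the problem to the pointwise minimization
\[
f_\p(X\mid Y=y)=\arg\min_{v\in\mathbb{R}} \E[|X-v|^\p\mid Y=y],
\]
and then to carry out the one-dimensional minimization explicitly, exploiting the fact that the posterior of $X$ given $Y=y$ is supported on only two points. By Bayes' rule, writing
\[
\alpha(y):=q\,\eu^{-(y-\sqrt{\snr}x_1)^2/2},\qquad \beta(y):=(1-q)\,\eu^{-(y-\sqrt{\snr}x_2)^2/2},
\]
the conditional probabilities are $\Prob[X=x_1\mid Y=y]=\alpha/(\alpha+\beta)$ and $\Prob[X=x_2\mid Y=y]=\beta/(\alpha+\beta)$, so the objective becomes
\[
g(v)\;=\;\frac{1}{\alpha+\beta}\Bigl(\alpha\,|x_1-v|^\p+\beta\,|x_2-v|^\p\Bigr).
\]
Since $\alpha+\beta>0$ is a constant in $v$, I can drop it and minimize $\tilde g(v):=\alpha|x_1-v|^\p+\beta|x_2-v|^\p$.

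For $\p>1$, assume WLOG $x_1>x_2$. A short convexity check (each of $|x_1-v|^\p$ and $|x_2-v|^\p$ is strictly convex, and both are strictly decreasing in $v$ on $(-\infty,x_2]$ and strictly increasing on $[x_1,\infty)$) shows that the unique minimizer lies in $[x_2,x_1]$. On this interval $\tilde g(v)=\alpha(x_1-v)^\p+\beta(v-x_2)^\p$, so setting $\tilde g'(v)=0$ yields
\[
\beta(v-x_2)^{\p-1}=\alpha(x_1-v)^{\p-1},
\]
hence $(v-x_2)/(x_1-v)=(\alpha/\beta)^{1/(\p-1)}$. Solving this linear equation for $v$ gives
\[
v=\frac{x_1\,\alpha^{1/(\p-1)}+x_2\,\beta^{1/(\p-1)}}{\alpha^{1/(\p-1)}+\beta^{1/(\p-1)}},
\]
and plugging in the definitions of $\alpha$ and $\beta$ produces exactly the formula~\eqref{eq: optimal est for binary}.

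For $\p=1$ the objective $\tilde g(v)=\alpha|x_1-v|+\beta|x_2-v|$ is piecewise linear. On $(-\infty,x_2]$ and $[x_1,\infty)$ it is strictly monotone, so any minimizer lies in $[x_2,x_1]$, where $\tilde g(v)=\alpha(x_1-v)+\beta(v-x_2)$ has constant slope $\beta-\alpha$. Thus the minimum is attained at $v=x_1$ if $\alpha\ge\beta$, i.e.\ if $a:=\alpha/\beta\ge 1$, and at $v=x_2$ if $a<1$, yielding the claimed threshold rule. (The case $a=1$ gives a degenerate flat segment, and any choice achieves the minimum; one may pick $v=x_1$ by convention.)

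The only delicate point is the domain reduction $v\in[x_2,x_1]$; once that is in place, everything reduces to routine calculus (for $\p>1$) or slope inspection (for $\p=1$). The extension to $\p\in(0,1)$ is not needed here because the statement is restricted to $\p\ge 1$; for $\p<1$ the function $\tilde g$ becomes concave on each side of $x_i$, and an interior critical point is a local maximum rather than a minimum, which is why the clean formula~\eqref{eq: optimal est for binary} is stated only for $\p\ge 1$.
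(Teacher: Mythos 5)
Your proof is correct and follows essentially the same route as the paper's: reduce to the pointwise scalar minimization via Proposition~\ref{prop:existence of optimal estimator}, apply Bayes' rule to obtain the two-term objective $\alpha|x_1-v|^\p+\beta|x_2-v|^\p$, and solve the first-order condition on $[x_2,x_1]$. The only (cosmetic) difference is how global optimality of the interior critical point is certified -- you use monotonicity outside $[x_2,x_1]$ together with convexity inside, whereas the paper explicitly compares the three candidate values $g(x_1)$, $g(x_2)$, and $g(v^*)$ via the inequality $\tfrac{a}{(a^{1/(\p-1)}+1)^{\p-1}}\le\min(1,a)$; both are valid.
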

\begin{proof}
See Appendix~\ref{app: proof of prop: estimator for two point}.
\end{proof}

Proposition~\ref{prop: estimator for two point} will be useful in demonstrating several examples and   counter examples in the following sections. Note that for the practically relevant case of BPSK modulation, or $x_1=-x_2=1, q=\frac{1}{2}$  the optimal estimator in \eqref{eq: optimal est for binary} reduces to 
\begin{subequations}
\begin{align}
 f_\p(X|Y=y)= \tanh \left(\frac{y \sqrt{\snr}}{\p-1} \right), \label{eq:optimal est for BPSK}
 \end{align}
 which   for $\p=1$   is the hard decision decoder
\begin{align}
f_{\p=1}(X|Y=y)= \left \{  \begin{array}{ll} -1,  & y  \le 1 \\
+ 1, & y > 1
\end{array}\right. .
\end{align}
\end{subequations}

By Proposition~\ref{prop: estimator for two point} we can show that the orthogonality principle  only holds for $\p=2$ (when MMPE corresponds to MMSE)  as shown in Fig.~\ref{fig:Orthonognality}, where we plot $h(p) := \E[(X-f_\p(X|Y)) Y]$ vs. $\p$ for BPSK input and observe it is zero only for $\p=2$.

%
%

    \begin{figure*}
          \begin{subfigure}[t]{0.5\textwidth}
           \centering
                \includegraphics[width=7cm]{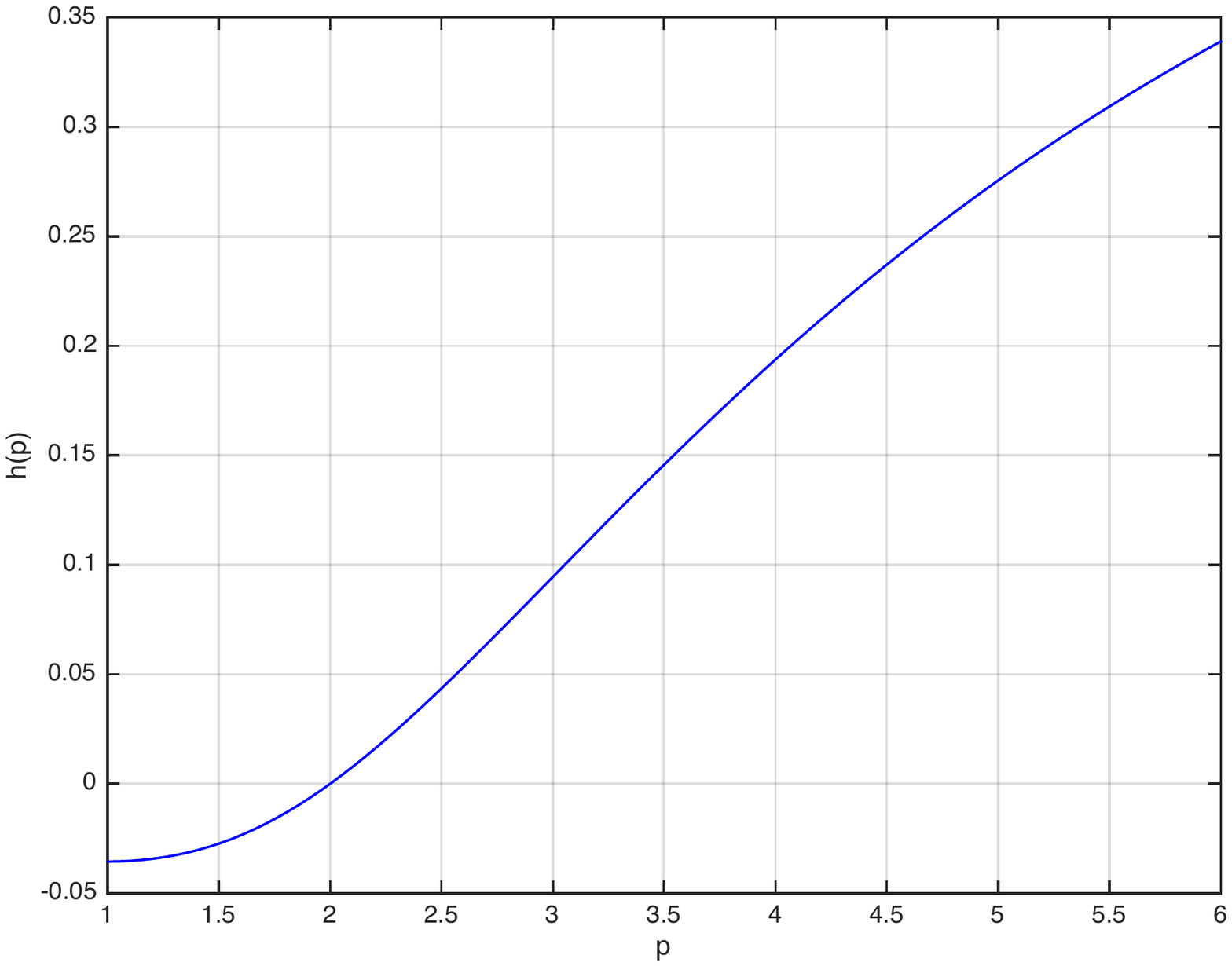}
                \caption{Plot of   $h(\p) := \E[(X-f_\p(X|Y)) Y]$ vs. $\p$, for $X\in \{ \pm 1 \}, \mathbb{P}[X=1]=\frac{1}{2}$ and $\snr=1$ and $f_\p(X|Y)$  given in \eqref{eq:optimal est for BPSK}. }
                \label{fig:Orthonognality}
        \end{subfigure}%
           \begin{subfigure}[t]{0.5\textwidth}
           \centering
                \includegraphics[width=7cm]{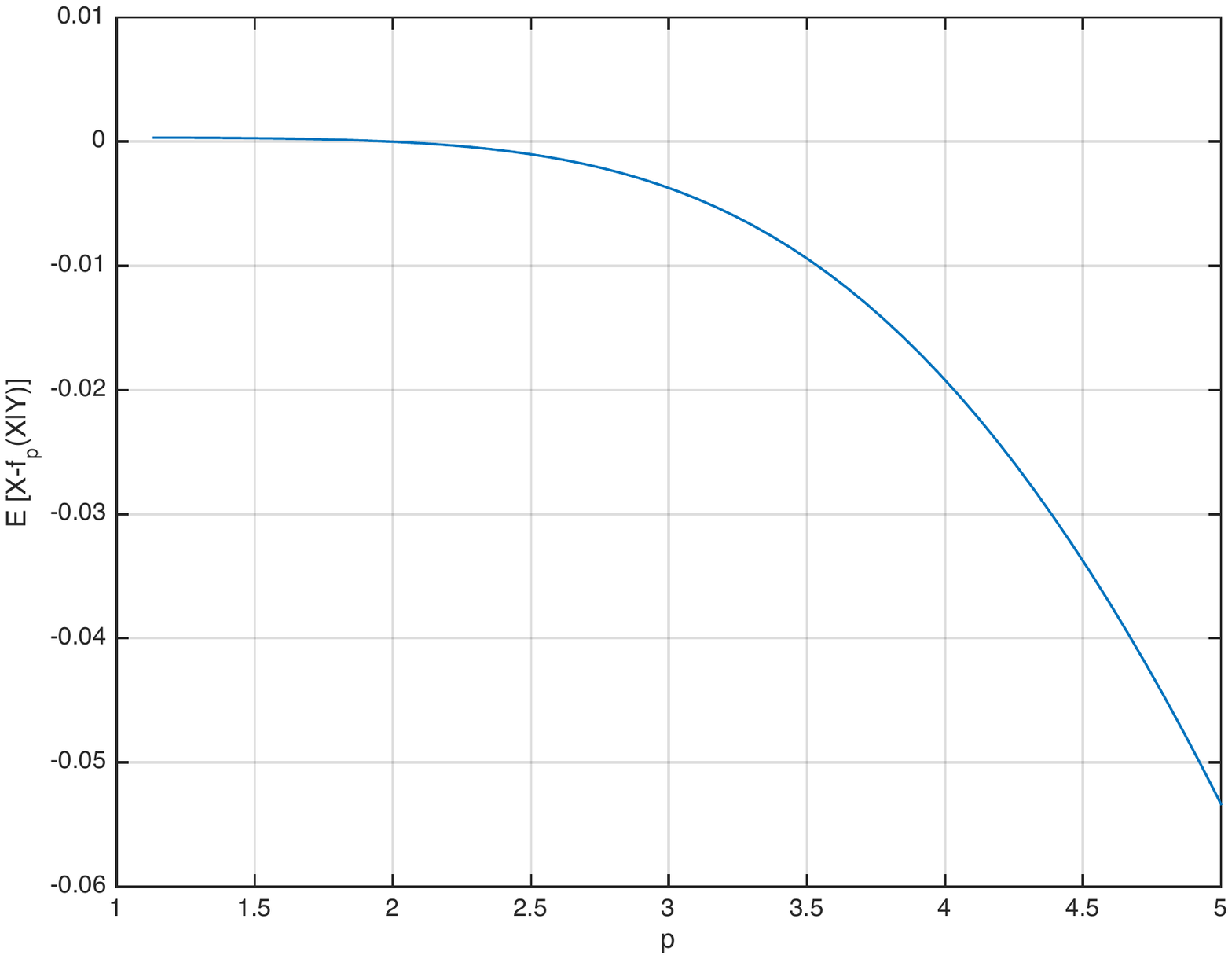}
                \caption{Plot of $\E[(X- f_\p(X|Y)]$ vs. $\p$, for $X\in\{-3,1\}$, $ \mathbb{P}[X=-3]=0.01$ and $\snr=1$ with $f_\p(X|Y)$ given in \eqref{eq: optimal est for binary}.}
                \label{fig:BiasExample}
        \end{subfigure}
       \caption{Counter examples for the orthogonality principle and the bias of the MMPE optimal estimator.}
        \label{fig: Propes of optimal estimator}
\end{figure*}

\subsection{Basic Properties of the Optimal MMPE Estimator} 
\label{sec:basicpropOfEstim}

Interestingly many of the  known properties of $f_2(\X|\Y)=\E[\X| \Y]$ for MMSE  are still exhibited by 
 $f_\p(\X| \Y)$ for any $\p>0$.
\begin{prop} 
\label{prop:opt est}
For any $\p>0$ the  optimal MMPE estimator  has the following properties:
\begin{enumerate} 
\item if $ 0 \le X \in \mathbb{R}^1$ then $ 0\le f_\p(X|Y)$, 
\item (Linearity) $f_\p(a \X+b| \Y)=af_\p( \X| \Y)+b$ for $a,b \in \mathbb{R}$,
\item (Stability) $f_\p( g(\Y)  | \Y) =g(\Y)$ for any deterministic function $g( \cdot)$,
\item (Idempotent) $f_\p( f_\p(\X| \Y) | \Y) =f_\p(\X| \Y)$,
\item (Degradedness) $f_\p \left(\X |\Y_{\snr_0}, \Y_{\snr} \right)= f_\p \left( \X |\Y_{\snr_0} \right)$ for a Markov chain $\X  \to \Y _{\snr_0} \to \Y_{\snr}$,
\item (Orthogonality-like Principle) See Proposition~\ref{prop:orthogonality like property}.
\end{enumerate}
\end{prop}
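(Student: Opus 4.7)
The plan is to reduce every item to the pointwise characterization of the optimal estimator given by Proposition~\ref{prop:existence of optimal estimator}, namely
\[
f_\p(\X|\Y={\bf y}) \;=\; \arg\min_{{\bf v}\in\mathbb{R}^n} \, \E\!\left[\Err^{\frac{\p}{2}}(\X,{\bf v})\,\big|\,\Y={\bf y}\right],
\]
so that each structural property of $f_\p$ becomes a property of this finite-dimensional minimization problem. Item (6) is nothing more than a forward reference to Proposition~\ref{prop:orthogonality like property} and requires no argument.

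For (1), working in $n=1$, I would observe that whenever $X\ge 0$ and $v\in\mathbb{R}$, the positive part $v^{+}=\max(v,0)$ satisfies $|X-v^{+}|\le |X-v|$ pointwise, so $\E[|X-v^{+}|^\p|Y=y]\le \E[|X-v|^\p|Y=y]$; hence the pointwise minimizer can be chosen nonnegative (and is forced to be nonnegative by the uniqueness of the minimizer for $\p>1$). For (2), I would perform the change of variable ${\bf v}=a{\bf w}+b$ inside the minimization for $a\X+b$: expanding $\Err^{\frac{\p}{2}}(a\X+b,a{\bf w}+b)=|a|^{\p}\,\Err^{\frac{\p}{2}}(\X,{\bf w})$ and factoring out $|a|^{\p}$ turns the problem into the original minimization over ${\bf w}$, and since ${\bf v}\mapsto {\bf w}=(\bf v-b)/a$ is a bijection on $\mathbb{R}^n$, the argmin transforms as $f_\p(a\X+b|\Y={\bf y})=a\,f_\p(\X|\Y={\bf y})+b$.

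For (3), when $\X=g(\Y)$, the conditional law of $\X$ given $\Y={\bf y}$ is a point mass at $g({\bf y})$, so $\E[\Err^{\frac{\p}{2}}(g({\bf y}),{\bf v})|\Y={\bf y}]=\|g({\bf y})-{\bf v}\|^{\p}$ is minimized uniquely at ${\bf v}=g({\bf y})$ with value $0$. Item (4) is then immediate by applying (3) to the (deterministic) function of $\Y$ given by $g(\Y):=f_\p(\X|\Y)$. For (5), the Markov chain $\X\to\Y_{\snr_0}\to\Y_{\snr}$ yields $p_{\X|\Y_{\snr_0},\Y_{\snr}}(\cdot|{\bf y}_0,{\bf y})=p_{\X|\Y_{\snr_0}}(\cdot|{\bf y}_0)$, so
\[
\E\!\left[\Err^{\frac{\p}{2}}(\X,{\bf v})\,\big|\,\Y_{\snr_0}={\bf y}_0,\Y_{\snr}={\bf y}\right]
=\E\!\left[\Err^{\frac{\p}{2}}(\X,{\bf v})\,\big|\,\Y_{\snr_0}={\bf y}_0\right],
\]
whose argmin in ${\bf v}$ depends on ${\bf y}_0$ only, giving $f_\p(\X|\Y_{\snr_0},\Y_{\snr})=f_\p(\X|\Y_{\snr_0})$.

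None of these arguments should pose a serious obstacle: once Proposition~\ref{prop:existence of optimal estimator} is in hand, each item is a one-line manipulation of the pointwise minimization. The only place where minor care is needed is (2), where one must verify that the scalar $a$ can be arbitrary in $\mathbb{R}$ (including $a<0$, so that $|a|^{\p}$ appears rather than $a^{\p}$) and that uniqueness (when it holds) is preserved under the affine reparametrization; for $a=0$ the claim is a degenerate consequence of (3).
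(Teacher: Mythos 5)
Your proposal is correct and follows essentially the same route as the paper: every item is reduced to the pointwise argmin characterization from Proposition~\ref{prop:existence of optimal estimator}, with the same change of variable for linearity (including the observation that the factor is $|a|^{\p}$), the same point-mass argument for stability, and the same use of the equality of posteriors for degradedness. The only cosmetic difference is in item (1), where you replace a candidate $v$ by $v^{+}$ directly rather than arguing by contradiction as the paper does, but the underlying inequality is identical.
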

\begin{proof}
See Appendix~\ref{app:prop:opt est}.
\end{proof}

\begin{remark}(Average Bias of the MMPE optimal Estimator)
\label{sec:Bias}
An estimator $f_\p(\X|\Y)$ is said to be  unbiased on average if
$\E[ \X-f_\p(\X|\Y) ]={\bf 0}.$
In general $f_\p(\X|\Y)$ is unbiased on average only for $\p=2$, since
\begin{align}
\E[ f_{\p=2}(\X|\Y) ]=\E[ \E[\X| \Y] ]=\E[\X].
\end{align}
%
Fig.~\ref{fig:BiasExample} shows that in general the optimal MMPE estimator is biased  on average;
it plots $\E[X- f_\p(X|Y)]$ vs. $\p$ for $X\in\{-3,1\}: \mathbb{P}[X=-3]=0.01$ and $\snr=1$, with $f_\p(X|Y)$ as in Proposition~\ref{prop: estimator for two point}. This comes as no surprise as it is very  common in Bayesian estimation that the optimal estimator is biased  \cite{kay1993fundamentals}.

However, the optimal MMPE estimator 
is unbiased in the  sense that the $(\p-1)$-th moment of the bias is zero. This can be seen from the orthogonality like property in Proposition~\ref{prop:orthogonality like property} by taking $g(\Y)$ to be the vector of all one's 
\begin{align}
\E \left[  \Err^{\frac{\p-2}{2}} \left( \X, f_\p( \X |\Y) \right) \cdot \left( \X-f_\p( \X |\Y)\right)^T \right]=0.
\end{align}

\end{remark}

%
%
%
%
%
%
%

\section{ Properties of the MMPE}
\label{sec:prop of MMPE}
In this section we explore properties of the MMPE as a function of SNR and of the input distribution. 

\subsection{Basic Properties}
\label{sec: basic prop}
The next two properties of the MMPE  directly follow from the properties of $f_\p(\X |\Y)$  in Proposition~\ref{prop:opt est}. 

\begin{prop}
\label{prop:Scaling} For  any $\p>0$
\begin{subequations}
\begin{align}
 \mmpe(\X+a,\snr,\p)&=\mmpe(\X,\snr,\p), \\
 \mmpe(a \X,\snr,\p)&= a^{\p} \mmpe(\X, a^2 \snr,\p).
 \end{align}
 \end{subequations}
\end{prop}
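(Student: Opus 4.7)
The plan is to reduce both identities to a change of variables in the estimator, using Proposition~\ref{prop:existence of optimal estimator} to guarantee that the infimum is attained and using the explicit structure of the AWGN channel \eqref{eq:channel model}.

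For the translation invariance, I would observe that the channel output corresponding to input $\X+a$ is
\begin{align*}
\Y' = \sqrt{\snr}(\X+a)+\Z = \Y + \sqrt{\snr}\,a,
\end{align*}
which is a deterministic shift of the output $\Y$ associated with $\X$. Estimators $\hat f$ of $\X+a$ from $\Y'$ therefore correspond bijectively to estimators $f$ of $\X$ from $\Y$ via $f(\y) = \hat f(\y + \sqrt{\snr}\,a) - a$, and this bijection preserves the error pointwise:
\begin{align*}
(\X+a) - \hat f(\Y') \;=\; \X - f(\Y).
\end{align*}
Hence $\|(\X+a)-\hat f(\Y')\|_\p^\p = \|\X-f(\Y)\|_\p^\p$, and taking the infimum over estimators on each side yields the first identity. (Equivalently, one can simply invoke the linearity property of $f_\p$ in Proposition~\ref{prop:opt est}.)

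For the scaling identity, I would write the output for input $a\X$ as
\begin{align*}
\Y' \;=\; \sqrt{\snr}\,(a\X)+\Z \;=\; \sqrt{a^2\snr}\,\X + \sign(a)\,\Z,
\end{align*}
and use that $\sign(a)\,\Z \stackrel{d}{=} \Z$ because $\Z\sim\mathcal{N}(0,\I)$ is symmetric. Consequently $(a\X,\Y')$ has the same joint distribution as $(a\X,\tilde\Y)$, where $\tilde\Y=\sqrt{a^2\snr}\,\X+\Z$ is the AWGN output for $\X$ at SNR $a^2\snr$. Every estimator $f$ of $a\X$ from $\tilde\Y$ corresponds bijectively, for $a\neq 0$, to an estimator $g=f/a$ of $\X$ from $\tilde\Y$, with error $a\X - f(\tilde\Y) = a\bigl(\X - g(\tilde\Y)\bigr)$. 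Therefore $\Err(a\X,f(\tilde\Y)) = a^2\,\Err(\X,g(\tilde\Y))$, and raising to the $\p/2$ power produces the factor $|a|^\p$ (which the proposition writes as $a^\p$, tacitly taking $a\ge 0$ or absorbing the sign). Taking infima and normalizing by $n$ yields the second identity; the case $a=0$ is trivial since both sides vanish.

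The main obstacle is purely bookkeeping: verifying that the estimator map $f\mapsto \hat f$ (respectively $f\mapsto g$) is a measurable bijection so that the infima on both sides of each identity genuinely agree, and handling the sign of $a$ via the noise symmetry $-\Z\stackrel{d}{=}\Z$. Both steps are routine once Proposition~\ref{prop:existence of optimal estimator} is in hand.
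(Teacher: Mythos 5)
Your proof is correct and matches the paper's (unstated, one-line) justification: the paper simply remarks that both identities follow from the properties of $f_\p$ in Proposition~\ref{prop:opt est}, i.e.\ exactly the translation/linearity correspondence you spell out via a bijection on estimators together with the SNR reparametrization $\sqrt{\snr}\,a\X = \sqrt{a^2\snr}\,\X$. The only nit is the displayed identity $\Y'=\sqrt{a^2\snr}\,\X+\sign(a)\Z$, which should read $\sign(a)\Y'=\sqrt{a^2\snr}\,\X+\sign(a)\Z$ before invoking $\sign(a)\Z\overset{d}{=}\Z$; since you already flag the sign-handling as the routine step, this does not affect the argument.
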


Proposition~\ref{prop:Scaling} implies that the MMPE, like the MMSE, is invariant under translations, and that  scaling  the input results in   scaling the SNR and the error.

\subsection{ Estimation of the Input is Equivalent to Estimation of the Noise} 
\label{sec:estimatonEquivalence}
The following lemma is commonly applied  in the analysis of the  MMSE. 

\begin{lem} 
\label{lem: equivalence of errors}
For $\X,\Z,\Y$ given in \eqref{eq:channel model}
\begin{subequations}
\begin{align} 
\sqrt{\snr}(\X-\E[\X|\Y])=-(\Z-\E[\Z|\Y]).
\end{align}
Moreover,
\begin{align}
\sqrt{\snr} \ \| \X-  \E[\X|\Y] \|_\p =   \| \Z-  \E[\Z|\Y] \|_\p.
\end{align}
\end{subequations}
\end{lem}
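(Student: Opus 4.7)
The plan is to argue directly from the channel relation $\Y = \sqrt{\snr}\,\X + \Z$ by isolating $\Z$ and taking conditional expectations, then to invoke homogeneity of $\|\cdot\|_\p$ for the norm identity.

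First I would write $\Z = \Y - \sqrt{\snr}\,\X$ and take conditional expectation on both sides given $\Y$. Since $\Y$ is trivially measurable with respect to itself, $\E[\Y\mid\Y] = \Y$, and by linearity of conditional expectation we obtain
\begin{align}
\E[\Z\mid\Y] \;=\; \Y - \sqrt{\snr}\,\E[\X\mid\Y].
\end{align}
Subtracting this from $\Z = \Y - \sqrt{\snr}\,\X$ yields
\begin{align}
\Z - \E[\Z\mid\Y] \;=\; -\sqrt{\snr}\bigl(\X - \E[\X\mid\Y]\bigr),
\end{align}
which is the first claim after multiplying through by $-1$.

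For the norm identity, I would apply the definition \eqref{eq: defintion of the norm} to $\mathbf{U} = \sqrt{\snr}(\X - \E[\X\mid\Y])$. Using $\Trc\bigl((c\mathbf{U})(c\mathbf{U})^T\bigr) = c^2\,\Trc(\mathbf{U}\mathbf{U}^T)$ with $c=\sqrt{\snr}$, one obtains the homogeneity relation $\|c\mathbf{U}\|_\p = |c|\,\|\mathbf{U}\|_\p$; in particular $\|-\mathbf{U}\|_\p = \|\mathbf{U}\|_\p$ since $(-\mathbf{U})(-\mathbf{U})^T = \mathbf{U}\mathbf{U}^T$. Combining this with the pointwise identity from the previous step gives
\begin{align}
\sqrt{\snr}\,\|\X-\E[\X\mid\Y]\|_\p \;=\; \|-(\Z-\E[\Z\mid\Y])\|_\p \;=\; \|\Z-\E[\Z\mid\Y]\|_\p,
\end{align}
as desired.

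There is no real obstacle here; the lemma is a direct algebraic consequence of the channel model together with the homogeneity of $\|\cdot\|_\p$, which is a by-product of its definition via $\Trc(\mathbf{U}\mathbf{U}^T)^{\p/2}$. The only subtlety worth flagging is that the identity holds pointwise (almost surely), not just in $L_\p$, so no integrability hypotheses on $\X$ beyond those needed to define $\E[\X\mid\Y]$ are required for the first statement; the second statement implicitly requires $\X,\Z\in L_\p$, which is automatic for Gaussian $\Z$ and assumed for $\X$ throughout the paper.
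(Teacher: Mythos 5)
Your proof is correct; the paper actually states Lemma~\ref{lem: equivalence of errors} without proof (treating it as a standard fact), and your argument—isolating $\Z$ from the channel equation, applying linearity of conditional expectation given $\Y$, and then invoking homogeneity of the norm in \eqref{eq: defintion of the norm}—is exactly the canonical derivation, mirroring the manipulation the paper carries out explicitly for the MMPE analog in Proposition~\ref{prop:equivalenceOfnoiseEst}. Nothing is missing.
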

Lemma~\ref{lem: equivalence of errors} states that estimating the noise is equivalent to estimating the input signal  if one  uses the conditional expectation as an estimator.

Next we show that an equivalent  statement holds for the MMPE.

\begin{prop} 
\label{prop:equivalenceOfnoiseEst}
For any $\X$, $\p>0$ and $\snr>0$, we have 
\begin{subequations}
\begin{align}
\sqrt{\snr} \ \mmpe^{\frac{1}{\p}}(\X| \Y; \p) =   \mmpe^{\frac{1}{\p}}(\Z|\Y; \p). \label{eq: equaivalence: noise-input}
\end{align} 
Moreover, 
\begin{align}
\sqrt{\snr}\  (\X-f_\p(\X|\Y) )= -(\Z-f_\p( \Z | \Y)).  \label{eq: equivalence estimators}
\end{align}
\end{subequations}
\end{prop}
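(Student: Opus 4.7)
The plan is to exploit the deterministic linear relationship $\Z = \Y - \sqrt{\snr}\,\X$ to set up a norm-preserving bijection between estimators of $\X$ and estimators of $\Z$ based on $\Y$.

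First, I would observe that any Borel measurable estimator $f(\Y)$ of $\X$ induces an estimator of $\Z$, namely
\begin{align}
g(\Y) := \Y - \sqrt{\snr}\, f(\Y),
\end{align}
and conversely, for any estimator $g(\Y)$ of $\Z$, the function $f(\Y) := (\Y - g(\Y))/\sqrt{\snr}$ is a measurable estimator of $\X$ (this requires $\snr > 0$). So $f \mapsto g$ is a bijection on measurable functions $\mathbb{R}^n \to \mathbb{R}^n$.

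Next I would compute the error under this bijection. Substituting $\Z = \Y - \sqrt{\snr}\,\X$ into $\Z - g(\Y)$ gives
\begin{align}
\Z - g(\Y) = (\Y - \sqrt{\snr}\,\X) - (\Y - \sqrt{\snr}\, f(\Y)) = -\sqrt{\snr}\,\bigl(\X - f(\Y)\bigr).
\end{align}
Therefore, using the definition of the norm in~\eqref{eq: defintion of the norm} and the fact that $\Trc^{\p/2}(c\U\U^T c^T) = |c|^{\p}\Trc^{\p/2}(\U\U^T)$ for any scalar $c$,
\begin{align}
\|\Z - g(\Y)\|_\p = \sqrt{\snr}\,\|\X - f(\Y)\|_\p.
\end{align}

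Taking the infimum over all measurable $f$ on the right, and noting that this corresponds via the bijection to an infimum over all measurable $g$ on the left, yields
\begin{align}
\mmpe^{1/\p}(\Z \mid \Y;\p) = \sqrt{\snr}\,\mmpe^{1/\p}(\X \mid \Y;\p),
\end{align}
which is \eqref{eq: equaivalence: noise-input}. For the estimator identity~\eqref{eq: equivalence estimators}, I would invoke Proposition~\ref{prop:existence of optimal estimator} to assert that the MMPE-optimal estimators $f_\p(\X|\Y)$ and $f_\p(\Z|\Y)$ both exist. Since the bijection $f \mapsto g$ scales the $\p$-th norm by the constant $\sqrt{\snr}$, a minimizer on one side is mapped to a minimizer on the other, so $f_\p(\Z|\Y) = \Y - \sqrt{\snr}\, f_\p(\X|\Y)$ (pointwise, up to a set of measure zero). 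Rearranging and using $\Z = \Y - \sqrt{\snr}\,\X$ gives \eqref{eq: equivalence estimators}.

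There is essentially no main obstacle: the argument is algebraic once one notices the bijection. The only subtlety is ensuring well-definedness of the optimal estimators, which is already handled by Proposition~\ref{prop:existence of optimal estimator}, and checking that the bijection $f \leftrightarrow g$ preserves measurability, which is immediate since it is affine in $\Y$ and $f$.
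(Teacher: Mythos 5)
Your proposal is correct and follows essentially the same route as the paper: both substitute $\X=(\Y-\Z)/\sqrt{\snr}$ to turn the error $\X-f(\Y)$ into $-(\Z-g(\Y))/\sqrt{\snr}$ with $g(\Y)=\Y-\sqrt{\snr}f(\Y)$, observe that $f\mapsto g$ is a bijection on measurable estimators, and pass the infimum across it, then use Proposition~\ref{prop:existence of optimal estimator} to transfer attainability and identify $f_\p(\Z|\Y)=\Y-\sqrt{\snr}f_\p(\X|\Y)$. No gaps.
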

\begin{proof}
From the definition of the MMPE in \eqref{eq: definition of MMPE}
\begin{align}
\mmpe^{\frac{1}{\p}}(\X,\snr, \p)&=  \inf_{f({\bf y})} \| \X-f(\Y) \|_\p \label{eq: infimimu first step} \\
&=  \inf_{f({\bf y})} \left\| \frac{1}{\sqrt{\snr}}(\Y-\Z)-f(\Y) \right\|_\p \notag \\
&=\frac{1}{\sqrt{\snr}}  \inf_{f({\bf y})} \left\| \Z-\left( \sqrt{\snr}f(\Y)-\Y\right) \right\|_\p \notag\\
&=\frac{1}{\sqrt{\snr}}  \inf_{g({\bf y}): \ g({\bf y})= {\bf y}-\sqrt{\snr} f({\bf y})} \left\| \Z-g(\Y) \right\|_\p  \label{eq: infimum transformation}\\
&= \frac{1}{\sqrt{\snr}}  \mmpe^{\frac{1}{\p}}(\Z| \Y; \p).
\end{align}
This  shows  the equality in  \eqref{eq: equaivalence: noise-input}. 
Moreover, since $f_\p(\X|\Y)$ exists and the infimum in \eqref{eq: infimimu first step} is attainable by Proposition~\ref{prop:existence of optimal estimator}, so is the infimum in \eqref{eq: infimum transformation}. Therefore, from  \eqref{eq: infimum transformation}  we have that  $f_\p(\Z | \Y)$ exists  and is given by 
\begin{align}
f_\p(\Z | \Y)= \Y-\sqrt{\snr} f_\p(\X|\Y)=  \sqrt{\snr}\X +\Z-\sqrt{\snr} f_\p(\X|\Y),
\end{align}
which leads to \eqref{eq: equivalence estimators}.
This concludes the proof. 
\end{proof} 

\subsection{Change of Measure}
\label{sec: change of Measure}
The next result  enables us to change the expectation from $\Y_{\snr}$ to $\Y_{\snr_0}$ in \eqref{eq: definition of MMPE} whenever $\snr \le \snr_0$. This is particularly useful when we know the MMPE, or the structure of the optimal MMPE estimator, at one SNR value but not at another smaller SNR value. 
\begin{prop} 
\label{prop:change of measure}
For any $\X$,   $\snr \in (0, \snr_0]$ and  $\p > 0$, we have 
\begin{align}
\mmpe(\X,\snr,\p) = \inf_{f}\frac{1}{n}\E \left[ \Err^\frac{\p}{2}(\X,f(\Y_{\snr_0}))  \sqrt{\frac{\snr}{\snr_0 }} \eu^{\frac{\snr_0-\snr}{2\snr_0}  \sum_{i=1}^n Z_i^2 }\right].  \label{eq:change Measure equat} 
\end{align}
\end{prop}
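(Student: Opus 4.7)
The identity is a change-of-measure: the expectation defining $\mmpe(\X,\snr,\p)$ is against the law of $\Y_{\snr}$, and we want to rewrite it as an expectation against the law of $\Y_{\snr_0}$ with an explicit Radon-Nikodym-like weight. My plan is to realize this directly at the level of the defining integral via a rescaling of the output variable, rather than through any likelihood ratio of channel transitions (which would leave undesired $\X$-dependent terms).

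Starting from
\begin{align*}
n\cdot\mmpe(\X,\snr,\p) = \inf_{f}\int p_{\X}(\x)\int (2\pi)^{-n/2} \eu^{-\|\y-\sqrt{\snr}\x\|^2/2}\,\Err^{\p/2}(\x,f(\y))\,d\y\,d\x,
\end{align*}
I substitute $\y = \sqrt{\snr_0/\snr}\,\y'$ in the inner integral. This introduces a Jacobian $(\snr_0/\snr)^{n/2}\,d\y'$, rescales the quadratic form to $(\snr/\snr_0)\|\y'-\sqrt{\snr_0}\x\|^2$, and composes the estimator as $f(\sqrt{\snr_0/\snr}\,\y')$. I then split the rescaled exponent as
\begin{align*}
-\frac{\snr}{\snr_0}\cdot\frac{\|\y'-\sqrt{\snr_0}\x\|^2}{2} = -\frac{\|\y'-\sqrt{\snr_0}\x\|^2}{2} + \frac{\snr_0-\snr}{2\snr_0}\|\y'-\sqrt{\snr_0}\x\|^2,
\end{align*}
so that the integrand factors into the conditional density of $\Y_{\snr_0}\mid\X$ (which recombines with $p_{\X}(\x)$ to form an expectation under the joint law of $(\X,\Y_{\snr_0})$) times the exponential weight $\eu^{((\snr_0-\snr)/(2\snr_0))\sum_{i=1}^{n}Z_i^2}$, where $\Z = \Y_{\snr_0}-\sqrt{\snr_0}\X \sim \mathcal{N}(0,\I)$ is the noise in the $\Y_{\snr_0}$ channel.

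The final step is to commute the infimum with the rescaling of $f$. The map $f \mapsto \tilde f$ with $\tilde f(\y') := f(\sqrt{\snr_0/\snr}\,\y')$ is a bijection on the class of Borel measurable functions $\mathbb{R}^n\to\mathbb{R}^n$, so the infimum over $f$ on the original side matches the infimum over $\tilde f$ on the rewritten side. Proposition~\ref{prop:existence of optimal estimator} guarantees both infima are attained, making the swap immediate. Collecting the $1/n$ normalization together with the prefactor produced by the Jacobian then yields the claimed identity; the hypothesis $\snr\in(0,\snr_0]$ is used to ensure that the exponential weight has nonnegative exponent, consistent with translating the tighter Gaussian density into the broader one and keeping the scaling factor $\sqrt{\snr_0/\snr}\ge 1$.

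The main (and only) nontrivial obstacle is justifying the infimum-interchange under the change of variable: once one observes that the postcomposition $f\mapsto f\circ(\sqrt{\snr_0/\snr}\,\cdot)$ is a bijection of estimator classes, the rest is a direct calculation. No deeper property of the MMPE—beyond the existence of the optimizer from Proposition~\ref{prop:existence of optimal estimator}—is required.
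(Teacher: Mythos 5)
Your strategy is, in substance, the same change of measure the paper performs: the paper adjoins auxiliary noise $\Z'\sim\mathcal{N}({\bf 0},\tfrac{\snr_0-\snr}{\snr}\I)$, observes that $\hat\Y_{\snr}=\Y_{\snr_0}+\Z'$ is informationally equivalent to $\Y_{\snr}$ after rescaling, and then takes the likelihood ratio of $p_{\hat\Y_{\snr}|\X}$ against $p_{\Y_{\snr_0}|\X}$ --- two Gaussians with the \emph{same} mean $\sqrt{\snr_0}{\bf x}$, which is precisely why the weight depends only on $\sum_{i=1}^n Z_i^2$. Your substitution realizes the same fact at the level of the integral, and you correctly diagnose why the naive likelihood ratio of $p_{\Y_{\snr}|\X}$ against $p_{\Y_{\snr_0}|\X}$ would leave $\X$-dependent cross terms. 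The genuine problem is that your substitution goes the wrong way. With ${\bf y}=\sqrt{\snr_0/\snr}\,{\bf y}'$ the quadratic form becomes
\begin{align*}
\left\|\sqrt{\tfrac{\snr_0}{\snr}}\,{\bf y}'-\sqrt{\snr}\,{\bf x}\right\|^2=\tfrac{\snr_0}{\snr}\left\|{\bf y}'-\tfrac{\snr}{\sqrt{\snr_0}}\,{\bf x}\right\|^2,
\end{align*}
whose center is $\tfrac{\snr}{\sqrt{\snr_0}}{\bf x}$, not $\sqrt{\snr_0}{\bf x}$, and whose scale is $\snr_0/\snr$, not $\snr/\snr_0$; the claimed factorization into $p_{\Y_{\snr_0}|\X}$ times a function of $\sum Z_i^2$ then fails. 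The substitution you need is the inverse one, ${\bf y}=\sqrt{\snr/\snr_0}\,{\bf y}'$, after which the exponent split and the recombination you wrote are exactly right. Your closing remark that the prefactor is $\sqrt{\snr_0/\snr}\ge 1$ --- the reciprocal of the $\sqrt{\snr/\snr_0}$ in the statement --- is a symptom of this inversion.

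Two smaller points. First, carried out in the correct direction your Jacobian is $(\snr/\snr_0)^{n/2}$, so the route actually yields the identity with prefactor $(\snr/\snr_0)^{n/2}$ rather than $\sqrt{\snr/\snr_0}$; this is also what is forced by requiring the weight to integrate to one under $p_{\Y_{\snr_0}|\X}$, since $\E[\eu^{t\sum Z_i^2}]=(1-2t)^{-n/2}$ with $t=\tfrac{\snr_0-\snr}{2\snr_0}$ equals $(\snr_0/\snr)^{n/2}$. The two prefactors agree only for $n=1$; the paper's own derivation normalizes $\mathcal{N}(\sqrt{\snr_0}{\bf x},(1+\sigma^2)\I)$ by $(1+\sigma^2)^{1/2}$ in place of $\det\left((1+\sigma^2)\I\right)^{1/2}=(1+\sigma^2)^{n/2}$, which is where the discrepancy originates, so your (corrected) computation is arguably the more reliable one. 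Second, the interchange of infima does not require attainment from Proposition~\ref{prop:existence of optimal estimator}: postcomposition with the invertible linear map is a bijection of the class of Borel estimators, so the two infima coincide by definition.
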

\begin{proof}
See Appendix~\ref{app: prop:change of measure}.
\end{proof}
One must be careful when evaluating Proposition~\ref{prop:change of measure}. 
For example, since we have that
\begin{align*}
\lim_{\snr \to 0^{+}}  \sqrt{\frac{\snr}{\snr_0 }} \eu^{\frac{\snr_0-\snr}{2\snr_0} Z^2 } =0,
\end{align*}
at first glance it appears that the expectation on the right of \eqref{eq:change Measure equat} is zero while $\mmpe(X, 0,\p)$ is not, thus 
violating the equality. 
However, a more careful examination shows that when  $\snr \to 0$ the limit and expectation  in \eqref{eq:change Measure equat} cannot be exchanged; indeed we have that
\begin{align*}
& \lim_{\snr \to 0^{+}} \E\left[\sqrt{\frac{\snr}{\snr_0 }} \eu^{\frac{\snr_0-\snr}{2\snr_0} Z^2 } \right]
 = \lim_{\snr \to 0^{+}} \sqrt{\frac{\snr}{\snr_0 }}  \E\left[\eu^{\frac{\snr_0-\snr}{2\snr_0} Z^2 } \right] \notag\\
&  = \lim_{\snr \to 0^{+}}  \sqrt{\frac{\snr}{\snr_0 }}   \frac{1}{\sqrt{1-\frac{\snr_0-\snr}{\snr_0}}}=1,
\end{align*}
where in the last equality we used the 
moment generating function of  the Cauchy r.v. $Z^2$.
As an example, Proposition~\ref{prop:change of measure} for $X\sim \mathcal{N}(0,1)$ 
with the optimal linear estimator from Proposition~\ref{prop:GaussianMMPE}, i.e., $f(y)=ay$ for some $a$,
evaluates to 
\begin{align*}
&\E \left[  \Err(X,f(Y_{\snr_0}))  \sqrt{\frac{\snr}{\snr_0 }} \eu^{\frac{\snr_0-\snr}{2\snr_0}  Z^2 }\right] \notag\\
&\stackrel{a)}{=}(1-\sqrt{\snr_0} a)^2 \sqrt{\frac{\snr}{\snr_0 }} \E[X^2] \E \left[ \eu^{\frac{\snr_0-\snr}{2\snr_0}  Z^2 }\right]  +a^2 \sqrt{\frac{\snr}{\snr_0 }} \E \left[Z^2\eu^{\frac{\snr_0-\snr}{2\snr_0}  Z^2 } \right] \\ 
&\stackrel{b)}{=}\frac{1}{1+\snr}, 
\end{align*}
where the equalities follow from:
a) linearity of expectation and the fact that $Z$ and $X$ are independent; and
b) since $\E \left[ \eu^{\frac{\snr_0-\snr}{2\snr_0}  Z^2 }\right] =\sqrt{\frac{\snr_0}{\snr }}$ and  $\E \left[Z^2\eu^{\frac{\snr_0-\snr}{2\snr_0}  Z^2 } \right]=\left(\frac{\snr_0}{\snr}\right)^{3/2}$ and by choosing $a=\frac{\snr}{\sqrt{\snr_0}(1+\snr)}$ in order to minimize the expression in a).\footnote{Note that this optimal $a$ is evident from the specific change of measure that we have used. Instead of having the estimator according to Proposition~\ref{prop:GaussianMMPE} as $\frac{ \sqrt{\snr}}{1 + \snr}$ we get it with the normalization  by $\sqrt{\frac{\snr}{\snr_0}}$. }

\section{Bounds on The MMPE }
\label{sec:bound on the MMPE}
In this section we develop bounds on the MMPE, many of which generalize well known MMSE bounds. However, we also show bounds that are unique to the MMPE and emphasize the usefulness  of the MMPE.
\subsection{Extension of Basic MMSE Bounds}
\label{sec:basic Bounds}
An important upper bound on the  MMSE often used in practice is the LMMSE.
\begin{prop}
\label{prop:LMMSE bounds} 
(\emph{LMMSE}  \cite{GuoMMSEprop}.) For any input $\X$ and $\snr>0$
\begin{subequations}
\begin{align}
\mmse(\X,\snr) \le \frac{1}{\snr}. \label{eq: LMMSE decorrolator} 
\end{align}
If  $ \| \X \|_2^2  = \sigma^2< \infty$, then for any $\snr \ge 0$ 
\begin{align}
\mmse(\X,\snr) \le \frac{\sigma^2}{1+\sigma^2 \snr},  \label{eq: LMMSE power}
\end{align}
\end{subequations}
where equality in~\eqref{eq: LMMSE power}  is achieved  iff $\X \sim \mathcal{N}(0, \sigma^2 \I)$.
\end{prop}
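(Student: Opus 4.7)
The plan is to upper-bound the MMSE by the error of specific linear (hence suboptimal) estimators, optimizing over a linear family, and then to identify the equality case via Gaussian achievability plus a converse characterization. For the unconditional bound \eqref{eq: LMMSE decorrolator}, I would plug the zero-forcing estimator $f(\Y)=\Y/\sqrt{\snr}$ into the infimum defining the MMSE. From the channel model $\Y=\sqrt{\snr}\X+\Z$ one has $\X-f(\Y)=-\Z/\sqrt{\snr}$, so
\[
\mmse(\X,\snr) \le \frac{1}{n\snr}\E[\|\Z\|^2] = \frac{1}{\snr},
\]
using $\Z\sim\mathcal{N}(\mathbf{0},\I)$. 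This step requires no assumption on the input distribution.

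For the tighter bound \eqref{eq: LMMSE power} under the moment constraint $\|\X\|_2^2=\sigma^2$, I would use the scalar-linear family $f(\Y)=a\Y$ with $a$ to be optimized. Using independence of $\X$ and $\Z$, $\E[\Z]=\mathbf{0}$, and $\E[\Tr(\X\X^T)]=n\sigma^2$, a direct expansion gives
\[
\frac{1}{n}\E[\|\X-a\Y\|^2] \;=\; \sigma^2 - 2a\sqrt{\snr}\sigma^2 + a^2\bigl(1+\snr\sigma^2\bigr),
\]
a quadratic in $a$ minimized at $a^\star=\sqrt{\snr}\sigma^2/(1+\snr\sigma^2)$, with minimum value $\sigma^2/(1+\snr\sigma^2)$. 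Since the MMSE is defined as an infimum, this yields the claimed upper bound. Note that a matrix-valued estimator $\mathbf{A}\Y$ could give a sharper trace bound in terms of $\K_{\X}$, but the scalar family suffices to match the stated per-dimension bound.

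Achievability of equality by $\X\sim\mathcal{N}(\mathbf{0},\sigma^2\I)$ follows from Proposition~\ref{prop:GaussianMMPE} at $\p=2$ together with the scaling identity in Proposition~\ref{prop:Scaling} to pass from unit variance to covariance $\sigma^2\I$; one obtains exactly $\sigma^2/(1+\snr\sigma^2)$. The main obstacle is the converse: equality in \eqref{eq: LMMSE power} forces the MMSE-optimal estimator $\E[\X|\Y]$ to coincide with $a^\star\Y$ almost surely. I would then invoke a classical characterization for additive Gaussian-noise channels (posterior linearity of the conditional mean implies a Gaussian prior) to conclude that $\X$ itself must be Gaussian. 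Isotropy $\K_{\X}=\sigma^2\I$ would then be extracted by applying a Jensen/Schur-concavity argument to the strictly concave map $\lambda\mapsto\lambda/(1+\snr\lambda)$ acting on the eigenvalues of $\K_{\X}$ subject to $\Tr(\K_{\X})=n\sigma^2$: equality in Jensen's inequality forces all eigenvalues to coincide. Writing the posterior-linearity step cleanly, without circular appeal to I-MMSE-type identities, is the delicate point of the converse.
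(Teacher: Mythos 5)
Your proposal is correct, but note that the paper does not actually prove this proposition: it is stated as a known result and attributed to \cite{GuoMMSEprop}, so there is no in-paper argument to compare against. Your derivation is the standard self-contained one. The two upper bounds are airtight: the zero-forcing estimator $f(\Y)=\Y/\sqrt{\snr}$ gives $\X-f(\Y)=-\Z/\sqrt{\snr}$ and hence \eqref{eq: LMMSE decorrolator}, and the scalar family $a\Y$ with the paper's normalization $\|\X\|_2^2=\frac{1}{n}\E[\Tr(\X\X^T)]=\sigma^2$ gives exactly the quadratic $\sigma^2-2a\sqrt{\snr}\sigma^2+a^2(1+\snr\sigma^2)$ minimized to $\sigma^2/(1+\snr\sigma^2)$. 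For the ``iff'' you have correctly isolated the two nontrivial ingredients. A clean way to organize the converse is the chain
\begin{align*}
\mmse(\X,\snr)\;\le\;\frac{1}{n}\Trc\!\left(\K_{\X}\left(\I+\snr\K_{\X}\right)^{-1}\right)\;\le\;\frac{\sigma^2}{1+\sigma^2\snr},
\end{align*}
where the second inequality is your Jensen step applied to the strictly concave map $\lambda\mapsto\lambda/(1+\snr\lambda)$ on the eigenvalues of $\K_{\X}$ (together with zero mean); equality there forces $\K_{\X}=\sigma^2\I$, while equality in the first inequality forces $\E[\X|\Y]$ to be linear, and the classical characterization of linear posterior means under additive Gaussian noise then yields Gaussianity of $\X$. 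That characterization is the heaviest ingredient and, as you note, must be invoked as an external fact rather than via I-MMSE identities; with it in hand your argument is complete.
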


The next bound generalizes  Proposition~\ref{prop:LMMSE bounds}   to higher order errors.
\begin{prop}
\label{prop:higher moments bound 1}
For $\snr \ge 0$,  $0< {\rm q} \le \p$, and input $\X$
\begin{subequations}
\begin{align}
n^{\frac{\p}{ \rm q}-1} \mmpe^\frac{\p}{\rm q}(\X,\snr,{\rm q}) \le \mmpe(\X,\snr,\p) \le \| \X-\E[\X| \Y] \|_{\p}^{\p},   \label{eq: chain bound on MMPE}
\end{align}
and where 
\begin{align} 
\| \X-\E[\X| \Y] \|_{\p}^{\p} &\stackrel{\text{for  $\p \ge 2$}}{\le} 2^{\p} \min   \left( \frac{  \| \Z \|_{\p}^{\p}  }{\snr^{\frac{\p}{2}}} ,  \| \X \|_{\p}^{\p}  \right),  \label{eq: 1/snr bound for higher orders}\\
\| \X-\E[\X| \Y] \|_{\p}^{\p} &\stackrel{\text{for  $1 \le \p \le 2$}}{\le}  \min   \left( \frac{  \left( \|\Z \|_{\p}+n^{\frac{1}{2}-\frac{1}{\p}}\| \Z \|_2 \right)^{\p}  }{\snr^{\frac{\p}{2}}} ,   \left( \|\X \|_p+n^{\frac{1}{2}-\frac{1}{\p}}\| \X \|_2 \right)^{\p}  \right),  \label{eq: 1/snr bound for higher orders: 1/2<p<1}\\
 \mmpe(\X,\snr,\p)&\stackrel{\text{for  $\p \ge 0$}}{\le} \min   \left( \frac{  \| \Z \|_{\p}^{\p}}{\snr^{\frac{\p}{2}}} ,   \| \X \|_{\p}^{\p}  \right), \label{eq: p-th error bounds} 
\end{align}

\end{subequations}
where  $  \| \Z \|_{\p}^{\p}$ is given in \eqref{eq:moments of Gaussian}. 
\end{prop}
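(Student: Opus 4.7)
The plan is to establish the five inequalities in turn, using only Lemma~\ref{lem: equivalence of errors}, the triangle inequality from \eqref{eq: Minkowski}, and Jensen's inequality applied in the appropriate convexity direction.

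For the chain \eqref{eq: chain bound on MMPE}, the right-hand bound is immediate: $\E[\X | \Y]$ is one particular (Borel measurable) estimator, so the $\p$-th norm of its error upper-bounds the MMPE. For the left-hand bound, I would apply Jensen's inequality to the convex map $t \mapsto t^{\p/\mathrm{q}}$ (valid since $\p \ge \mathrm{q} > 0$): for any estimator $f$, letting $A = \Trc^{\mathrm{q}/2}((\X-f(\Y))(\X-f(\Y))^T)$, Jensen gives $\E[A^{\p/\mathrm{q}}] \ge (\E[A])^{\p/\mathrm{q}}$, which in terms of the norm \eqref{eq: defintion of the norm} reads $n \| \X - f(\Y) \|_{\p}^{\p} \ge (n\| \X - f(\Y) \|_{\mathrm{q}}^{\mathrm{q}})^{\p/\mathrm{q}}$, i.e., $\|\X - f(\Y) \|_{\p}^{\p} \ge n^{\p/\mathrm{q} - 1} \| \X - f(\Y) \|_{\mathrm{q}}^{\p}$. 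Taking the infimum over $f$ on both sides, together with monotonicity of $x \mapsto x^{\p}$ on $[0, \infty)$, delivers the claim.

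For \eqref{eq: 1/snr bound for higher orders} and \eqref{eq: 1/snr bound for higher orders: 1/2<p<1}, I first apply Lemma~\ref{lem: equivalence of errors} to get $\| \X - \E[\X|\Y] \|_{\p} = \snr^{-1/2} \| \Z - \E[\Z|\Y] \|_{\p}$, and then the triangle inequality (valid for $\p \ge 1$) to get $\|\Z - \E[\Z|\Y]\|_{\p} \le \|\Z\|_{\p} + \|\E[\Z|\Y]\|_{\p}$, which reduces matters to controlling $\|\E[\Z|\Y]\|_{\p}$. Componentwise conditional Jensen gives $(\E[Z_i | \Y])^2 \le \E[Z_i^2 | \Y]$, and summing over $i$ yields $\sum_i (\E[Z_i|\Y])^2 \le \E[\sum_i Z_i^2 | \Y]$. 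For $\p \ge 2$, the map $t \mapsto t^{\p/2}$ is convex, so a second application of conditional Jensen gives $(\E[\sum_i Z_i^2|\Y])^{\p/2} \le \E[(\sum_i Z_i^2)^{\p/2}|\Y]$; chaining these and taking an unconditional expectation yields $\|\E[\Z|\Y]\|_{\p} \le \|\Z\|_{\p}$, hence $\|\Z - \E[\Z|\Y]\|_{\p} \le 2\|\Z\|_{\p}$, which after dividing by $\sqrt{\snr}$ and raising to the $\p$-th power produces the first term in the minimum of \eqref{eq: 1/snr bound for higher orders}. The alternative $2^{\p} \|\X\|_{\p}^{\p}$ follows from the triangle inequality $\|\X - \E[\X|\Y]\|_{\p} \le \|\X\|_{\p} + \|\E[\X|\Y]\|_{\p}$ combined with the same conditional Jensen bound applied with $\X$ in place of $\Z$.

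For $1 \le \p \le 2$ the second Jensen step reverses, since $t \mapsto t^{\p/2}$ is now concave. Starting from $\sum_i (\E[Z_i|\Y])^2 \le \E[\sum_i Z_i^2|\Y]$, raising to the $\p/2$ power and taking an unconditional expectation, Jensen now runs the other way: $\E[(\E[\sum_i Z_i^2|\Y])^{\p/2}] \le (\E[\sum_i Z_i^2])^{\p/2} = n^{\p/2} \|\Z\|_2^{\p}$, so $\|\E[\Z|\Y]\|_{\p}^{\p} \le n^{\p/2-1} \|\Z\|_2^{\p}$, i.e., $\|\E[\Z|\Y]\|_{\p} \le n^{1/2 - 1/\p} \|\Z\|_2$. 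Substituting into the triangle inequality and raising to the $\p$-th power produces \eqref{eq: 1/snr bound for higher orders: 1/2<p<1}; the $\|\X\|_{\p}$ variant is obtained by the identical argument with $\X$ in place of $\Z$. Finally, for \eqref{eq: p-th error bounds} no triangle inequality is needed and the bound holds for all $\p \ge 0$: the suboptimal estimator $f \equiv 0$ gives $\mmpe(\X, \snr, \p) \le \|\X\|_{\p}^{\p}$, while $f(\Y) = \Y/\sqrt{\snr}$ makes the error equal to $-\Z/\sqrt{\snr}$, yielding $\mmpe(\X, \snr, \p) \le \|\Z\|_{\p}^{\p}/\snr^{\p/2}$. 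The subtle point in this proof is the $1 \le \p \le 2$ case, where Jensen must be invoked first conditionally in the convex direction (for $t^2$) and then unconditionally in the concave direction (for $t^{\p/2}$), producing the additional $n^{1/2 - 1/\p} \|\cdot\|_2$ correction term absent from the $\p \ge 2$ bound.
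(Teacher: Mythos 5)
Your proposal is correct and follows essentially the same route as the paper's proof in Appendix~\ref{app:bounds}: the suboptimality of $\E[\X|\Y]$ and Jensen for the chain \eqref{eq: chain bound on MMPE} (you apply the convex direction $t\mapsto t^{\p/{\rm q}}$ where the paper uses the equivalent concave direction $t\mapsto t^{{\rm q}/\p}$), then Lemma~\ref{lem: equivalence of errors} plus the triangle inequality plus a two-stage conditional/unconditional Jensen argument split on whether $\p/2\ge 1$ for \eqref{eq: 1/snr bound for higher orders}--\eqref{eq: 1/snr bound for higher orders: 1/2<p<1}, and the suboptimal estimators $f(\mathbf{y})=\mathbf{y}/\sqrt{\snr}$ and $f\equiv 0$ for \eqref{eq: p-th error bounds}. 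No gaps.
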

\begin{proof}
See Appendix~\ref{app:bounds}.
\end{proof}

It is interesting to point out that in the derivation of the bounds in Proposition~\ref{prop:higher moments bound 1} no assumption is put on the distribution of $\Z$, and thus the bounds hold in great generality. 
 If $\Z$ is composed of independent identically distributed (i.i.d.) Gaussian elements, then the moment $ \| \Z \|_{\p}^{\p}$ in Proposition~\ref{prop:higher moments bound 1} can be tightly approximated in terms of factorials as
\begin{align}
\frac{\Gamma \left( \frac{n}{2}+\frac{\p}{2}\right)}{\Gamma \left( \frac{n}{2}\right)} \le \frac{\Gamma \left( \left \lceil \frac{n}{2}+\frac{\p}{2} \right \rceil \right) }{\Gamma \left( \left \lfloor \frac{n}{2} \right \rfloor \right)}=\frac{\left( \left \lceil \frac{n}{2}+\frac{\p}{2} \right \rceil -1 \right) !}{\left( \left \lfloor \frac{n}{2} \right \rfloor -1\right)!} =O(n^\frac{\p}{2}),
\end{align}
which  is tight for even $n$ and integer $\frac{\p}{2}$.

It is not difficult to check that for $\p=2$ 
Proposition~\ref{prop:higher moments bound 1} reduces to Proposition~\ref{prop:LMMSE bounds}.
The reason that the bounds on $\| \X-\E[\X| \Y] \|_\p$ are only available for $\p \ge 2$, while the bounds on  $\mmpe(\X,\snr,p)$ are available for $\p \ge 0$, is because the proof of the bound in~\eqref{eq: 1/snr bound for higher orders}  uses Jensen's inequality, which requires $\p \ge 2$, while the proof of the bound in~\eqref{eq: p-th error bounds} does not.

\subsection{Gaussian Inputs are the Hardest to Estimate}
\label{sec:GaussianBOunds}
Note that the bounds in  Proposition~\ref{prop:higher moments bound 1} are similar to the bound in \eqref{eq: LMMSE decorrolator} and blow up at $\snr=0^{+}$. Therefore, it is  desirable to have bounds  as in \eqref{eq: LMMSE power}. The next result demonstrates such a bound and shows that Gaussian inputs are asymptotically the hardest to estimate. 
\begin{prop} 
\label{prop:GaussianHardesToEstimate}
For  $\snr \ge 0$, $\p \ge 1$, and  a  random variable $\X$ such that  $\| \X \|_{\p}^{\p} \le \sigma^{\p} \| \Z \|_{\p}^{\p}$, we have
\begin{subequations}
\begin{align}
\mmpe(\X, \snr, \p) &\le \kappa_{\p,\sigma^2 \snr} \cdot   \frac{\sigma^{\p}  \| \Z \|_{\p}^{\p} }{ (1+ \snr \sigma^2)^\frac{\p}{2}},  \label{eq: 1/(1+snr)  bound} 
\end{align}
where 
\begin{align}
   \text{ for $\p=2$: }&  \kappa_{\p, \sigma^2 \snr}^\frac{1}{\p} =1,\\
   \text{  for $\p \neq 2$: }&  1 \le  \kappa_{\p, \sigma^2\snr}^\frac{1}{\p} = \frac{1+\sqrt{\sigma^2 \snr} }{ \sqrt{1+\sigma^2 \snr}}   \le 1+\frac{1}{\sqrt{1+\sigma^2 \snr}}.
\end{align}
\end{subequations} 
Moreover,  a  Gaussian  $\X$ with  per-dimension variance $\sigma^2$ (i.e., $\X \sim \mathcal{N}( {\bf 0}, \sigma^2 \I)$)  asymptotically achieves the  bound in \eqref{eq: 1/(1+snr)  bound}, since  $ \lim_{\snr \to \infty} \kappa_{\p, \sigma^2  \snr}=1$.
\end{prop}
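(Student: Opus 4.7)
\textbf{Proof plan for Proposition~\ref{prop:GaussianHardesToEstimate}.}
The plan is to upper bound the MMPE by the $\p$-th error of a carefully chosen suboptimal linear estimator. Guided by Proposition~\ref{prop:GaussianMMPE} and the scaling relation in Proposition~\ref{prop:Scaling}, the natural candidate is the one that would be MMPE-optimal if $\X$ were Gaussian with per-dimension variance $\sigma^2$, namely
\begin{align}
f(\Y) \;=\; \frac{\sqrt{\snr}\,\sigma^2}{1+\snr\,\sigma^2}\,\Y.
\end{align}
Substituting $\Y=\sqrt{\snr}\,\X+\Z$ yields the decomposition
\begin{align}
\X - f(\Y) \;=\; \frac{1}{1+\snr\,\sigma^2}\,\X \;-\; \frac{\sqrt{\snr}\,\sigma^2}{1+\snr\,\sigma^2}\,\Z,
\end{align}
which expresses the residual error as a linear combination of the input and the noise with non-negative coefficients summing nicely.

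Next I would invoke the Minkowski inequality~\eqref{eq: Minkowski} (valid since $\p\ge 1$) together with the hypothesis $\|\X\|_\p \le \sigma\,\|\Z\|_\p$ to obtain
\begin{align}
\|\X - f(\Y)\|_\p \;\le\; \frac{\sigma\|\Z\|_\p}{1+\snr\,\sigma^2} + \frac{\sqrt{\snr}\,\sigma^2\|\Z\|_\p}{1+\snr\,\sigma^2} \;=\; \frac{\sigma\|\Z\|_\p\bigl(1+\sqrt{\snr\,\sigma^2}\bigr)}{1+\snr\,\sigma^2}.
\end{align}
Setting $t:=\snr\,\sigma^2$ and observing the algebraic identity $\dfrac{1+\sqrt{t}}{1+t}=\dfrac{1}{\sqrt{1+t}}\cdot\dfrac{1+\sqrt{t}}{\sqrt{1+t}}$, raising to the $\p$-th power and using $\mmpe(\X,\snr,\p)\le \|\X-f(\Y)\|_\p^\p$ immediately produces the claimed inequality with $\kappa_{\p,t}^{1/\p} = \frac{1+\sqrt{t}}{\sqrt{1+t}}$. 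The two elementary bounds $1\le \kappa_{\p,t}^{1/\p}\le 1+\frac{1}{\sqrt{1+t}}$ then follow by squaring the inequalities $\sqrt{1+t}\le 1+\sqrt{t}\le \sqrt{1+t}+1$, which reduce respectively to $2\sqrt{t}\ge 0$ and $\sqrt{t}\le \sqrt{1+t}$; the $\p=2$ case is the familiar LMMSE bound where the cross term has zero expectation, giving $\kappa=1$ directly.

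For the asymptotic tightness claim I would take $\X=\sigma\,\X_G$ with $\X_G\sim\mathcal{N}(\bm{0},\I)$, which satisfies the hypothesis with equality since $\|\sigma\X_G\|_\p^\p=\sigma^\p\|\Z\|_\p^\p$ by \eqref{eq:moments of Gaussian}. Combining Proposition~\ref{prop:GaussianMMPE} with the scaling identity in Proposition~\ref{prop:Scaling} gives
\begin{align}
\mmpe(\sigma\X_G,\snr,\p) \;=\; \sigma^\p\,\mmpe(\X_G,\sigma^2\snr,\p) \;=\; \frac{\sigma^\p\,\|\Z\|_\p^\p}{(1+\sigma^2\snr)^{\p/2}},
\end{align}
so the ratio between this exact MMPE and the right-hand side of \eqref{eq: 1/(1+snr)  bound} is exactly $1/\kappa_{\p,\sigma^2\snr}$, which tends to $1$ as $\snr\to\infty$ since $\kappa_{\p,t}^{1/\p}=\tfrac{1+\sqrt{t}}{\sqrt{1+t}}\to 1$ as $t\to\infty$.

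The only mild obstacle is recognizing the correct suboptimal estimator—one might be tempted to use $\sqrt{\snr}/(1+\snr)\cdot\Y$, which would waste the variance information $\sigma^2$—and handling the fact that for $\p\neq 2$ the orthogonality principle fails, so we cannot discard the cross term and must retain it via the triangle inequality, which is precisely what introduces the factor $\kappa_{\p,\sigma^2\snr}$ that is absent in the $\p=2$ LMMSE bound \eqref{eq: LMMSE power}.
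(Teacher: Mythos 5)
Your proposal is correct and follows essentially the same route as the paper: bound the MMPE by the $\p$-th error of the Gaussian-optimal linear estimator, split the residual into input and noise terms, and apply the triangle inequality together with the moment hypothesis $\|\X\|_\p\le\sigma\|\Z\|_\p$, with tightness from Proposition~\ref{prop:GaussianMMPE}. The only cosmetic difference is that the paper first proves the $\sigma=1$ case and then invokes the scaling identity of Proposition~\ref{prop:Scaling}, whereas you plug the $\sigma$-scaled estimator in directly; the two computations are equivalent.
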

\begin{proof}
See Appendix~\ref{app: GaussianHardesToEstimate}.
\end{proof}

\subsection{Interpolation Bounds and Continuity}
\label{sec:continuity}
One of the key advantages of using the MMPE is that  the MMPE of order $q$ can be tightly predicted based on  the knowledge of the MMPE at lower orders $\p$ and higher orders ${\rm r}$.
At the heart of this analysis is the  interpolation result of  $L_\p$ spaces~\cite{folland2013real}:
given $0 < \p \le {\rm q} \le {\rm r} $ and $\alpha \in (0,1)$ such that $\frac{1}{\rm q}=\frac{\alpha}{\p}+\frac{1-\alpha}{\rm r}$, the ${\rm q}$-th norm  can be bounded as
\begin{align}
\| V \|_{\rm q} \le \| V \|_{\p}^{\alpha}   \| V \|_{\rm r} ^{(1-\alpha)},
 \label{eq: interpollation of the norm}
\end{align}
%
which implies that the norm is log-convex
and thus a continuous function of $\p$~\cite[Theorem 5.1.1]{webster1994convexity}.
Next, we present several interpolation results for the MMPE.
\begin{prop}\label{prop:logconvex}
\emph{(Log-Convexity and Interpolation.)}
For any $0 < \p \le {\rm q} \le {\rm r}   \le \infty$ and  $\alpha  \in (0,1)$    such that 
\begin{subequations}
\begin{align}
\frac{1}{\rm q}=\frac{\alpha}{\p}+\frac{\bar{\alpha}}{\rm r} 
\Longleftrightarrow
\alpha=\frac{{\rm q}^{-1}-{\rm r}^{-1}}{\p^{-1}-{\rm r}^{-1}},  \label{eq: interpolation parameters} 
\end{align}
where $\bar{\alpha}=1-\alpha$,  we have   for any $f(\Y)$
\begin{align}
 \| \X -f(\Y)\|_{\rm q} &\le  \| \X -f(\Y)\|_{\p}^\alpha\  \| \X -f(\Y) \|_{\rm r}^{\bar{\alpha}}. \label{eq:general interpolation} 
 \end{align}
In particular,
   \begin{align}
    \| \X -\E[\X | \Y]\|_{\rm q} &\le  \| \X -\E[\X | \Y] \|_{\p}^\alpha \  \| \X -\E[\X | \Y] \|_{\rm r}^{\bar{\alpha}}. \label{eq: higher moments log-convex} 
    \end{align}
    Moreover,
    \begin{align}
    \mmpe^{\frac{1}{ \rm q}}(\X,\snr,{\rm q}) \le  \inf_f \| \X -f(\Y) \|_{\p}^\alpha  \| \X -f(\Y) \|_{\rm r}^{\bar{\alpha}}  \label{eq: interpolation 3}.
    \end{align}
    In particular,
    \begin{align}
        \mmpe^{\frac{1}{\rm q}}(\X,\snr,{\rm q})  &\le   \| \X -f_r(\X|\Y) \|_{\p}^{\alpha}  \ \mmpe^{\frac{\bar{\alpha}}{\rm r}}(\X,\snr,{\rm r}), \label{eq: interpolation 4}\\
        \mmpe^{\frac{1}{{\rm q}}}(\X,\snr,{\rm q})  &\le  \mmpe^{\frac{\alpha}{\p}}(\X,\snr,\p)      \| \X -f_\p(\X|\Y) \|_{{\rm r}}^{ \bar{\alpha}}.\label{eq: interpolation 5}
    \end{align}
\end{subequations} 
\end{prop}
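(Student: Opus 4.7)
The proof is essentially a direct lift of the standard $L_\p$ interpolation inequality \eqref{eq: interpollation of the norm} to the randomised setting, followed by routine choices of the test estimator $f(\Y)$. My plan is as follows.

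First I would establish \eqref{eq:general interpolation} by applying the scalar log-convexity of $L_\p$ norms to the nonnegative scalar r.v.\ $W := \sqrt{\Trc((\X-f(\Y))(\X-f(\Y))^T)}$, i.e.\ the Euclidean length of the error vector. Under our vector-norm definition \eqref{eq: defintion of the norm} we have $\|\X-f(\Y)\|_{\p}^{\p} = \tfrac{1}{n}\E[W^{\p}]$, so $\|\X-f(\Y)\|_{\p} = n^{-1/\p}\|W\|_{\p}$ (with $\|\cdot\|_{\p}$ on the right the classical scalar norm). The standard Hölder-based interpolation inequality $\|W\|_{\rm q}\le \|W\|_{\p}^{\alpha}\|W\|_{\rm r}^{\bar\alpha}$, obtained by writing $W^{\rm q}=W^{\alpha {\rm q}}\cdot W^{\bar\alpha {\rm q}}$ and applying Hölder with conjugate exponents $\p/(\alpha {\rm q})$ and ${\rm r}/(\bar\alpha {\rm q})$ (which are conjugate precisely because of the relation in \eqref{eq: interpolation parameters}), then gives \eqref{eq:general interpolation} once we verify that the $n^{1/\p},n^{1/{\rm q}},n^{1/{\rm r}}$ prefactors balance: the exponent of $n$ on the right is $\alpha/\p+\bar\alpha/{\rm r}=1/{\rm q}$, matching the left side.

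The specialisation \eqref{eq: higher moments log-convex} is then immediate by choosing $f(\Y)=\E[\X\mid\Y]$ in \eqref{eq:general interpolation}. For \eqref{eq: interpolation 3}, I take the infimum over all Borel measurable $f$ on both sides: the left side becomes $\mmpe^{1/{\rm q}}(\X,\snr,{\rm q})$ by Definition~\ref{def:MMPE}, while on the right the infimum of the product is an upper bound for any particular choice of $f$, hence the stated inequality.

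Finally, \eqref{eq: interpolation 4} and \eqref{eq: interpolation 5} follow by plugging specific estimators into \eqref{eq: interpolation 3}. For \eqref{eq: interpolation 4} I substitute $f=f_{\rm r}(\X\mid\Y)$, which by Proposition~\ref{prop:existence of optimal estimator} is a well-defined MMPE-optimal estimator, so that $\|\X-f_{\rm r}(\X\mid\Y)\|_{\rm r}^{\rm r}=\mmpe(\X,\snr,{\rm r})$, and the remaining $\|\X-f_{\rm r}(\X\mid\Y)\|_{\p}^{\alpha}$ factor is left as-is. Analogously, \eqref{eq: interpolation 5} follows by choosing $f=f_\p(\X\mid\Y)$. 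The only subtle point, which is the main thing to be careful about, is that the $n^{1/\p}$ prefactors in the vector norm definition genuinely do cancel in the interpolation exponent relation; once that is checked, every remaining step is a clean application of Hölder and of the definition of the optimal estimators guaranteed by Proposition~\ref{prop:existence of optimal estimator}.
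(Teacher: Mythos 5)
Your proposal is correct and follows essentially the same route as the paper: apply the standard H\"older-based $L_{\p}$ interpolation inequality to the scalar error random variable, then specialize $f(\Y)$ to $\E[\X|\Y]$, take the infimum for \eqref{eq: interpolation 3}, and substitute $f_{\rm r}(\X|\Y)$ and $f_{\p}(\X|\Y)$ for \eqref{eq: interpolation 4} and \eqref{eq: interpolation 5}. Your version is in fact slightly more careful than the paper's in spelling out the conjugate exponents and verifying that the $n^{1/\p}$ normalization factors cancel via $\alpha/\p+\bar{\alpha}/{\rm r}=1/{\rm q}$.
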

\begin{proof}
The bound in \eqref{eq:general interpolation} follows by applying \eqref{eq: interpollation of the norm} 
with $V=\Err(\X,f(\Y)) \in \mathbb{R}$. The bounds in \eqref{eq: higher moments log-convex}  follow by choosing $f(\Y)=\E[\X|\Y]$.

The bound in  \eqref{eq: interpolation 3} follows by 
\begin{align}
 \mmpe^{\frac{1}{ \rm q}}(\X,\snr,{\rm q})=\inf_f \| \X-f(\Y) \|_{\rm q} & \le \inf_f \| \X -f(\Y) \|_{\p}^\alpha  \| \X -f(\Y) \|_{\rm r}^{\bar{\alpha}} ,
 \end{align}
where  the last  inequality follows from \eqref{eq: interpollation of the norm} by choosing $V=\Err(\X,f(\Y)) \in \mathbb{R}$. 

Finally, the bounds in \eqref{eq: interpolation 4} and  \eqref{eq: interpolation 5} follow by choosing $f(\Y)$ in \eqref{eq: interpolation 3} equal to $f_{\rm r}(\X|\Y)$ and  $f_{\p}(\X|\Y)$ respectively. This concludes the proof. 
 \end{proof}

From log-convexity we can deduce continuity.
\begin{prop}
\label{prop:continuity}
\emph{(Continuity.)}
For any $\X$ and $\snr>0$, $\mmpe(\X,\snr,\p)$ and $\| \X -\E[\X | \Y]\|_{\p}$ are continuous functions of $\p >0$.
\end{prop}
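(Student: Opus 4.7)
The plan is to prove the two continuity statements separately, both leveraging the log-convexity established in Proposition~\ref{prop:logconvex}. For $\|\X - \E[\X|\Y]\|_{\p}$, inequality~\eqref{eq: higher moments log-convex} says exactly that $t \mapsto \log \|\X - \E[\X|\Y]\|_{1/t}$ is convex on $(0, \infty)$, and convex functions on an open interval are automatically continuous. The same argument applied to a generic fixed random vector $\X - g(\Y)$ shows that for any measurable $g$, the map $\p \mapsto \|\X - g(\Y)\|_{\p}$ is continuous in $\p$; this auxiliary fact will be reused below.

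For $\mmpe(\X,\snr,\p)$, set $\phi(\p) := \mmpe^{1/\p}(\X,\snr,\p) = \inf_f \|\X - f(\Y)\|_{\p}$; continuity of $\mmpe$ follows from continuity of $\phi$ via $\mmpe = \phi^{\p}$. Two preliminary facts are key: (i) $\phi$ is non-decreasing in $\p$, because Lyapunov's inequality makes each $\p \mapsto \|\X - f(\Y)\|_{\p}$ non-decreasing and $\phi$ is the pointwise infimum of such curves; and (ii) by Proposition~\ref{prop:existence of optimal estimator}, for any fixed $\p_0 > 0$ the optimal estimator $f_{\p_0}(\X|\Y)$ exists with $\phi(\p_0) = \|\X - f_{\p_0}(\X|\Y)\|_{\p_0}$. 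Using $f_{\p_0}$ as a suboptimal estimator at order $\p$ yields $\phi(\p) \le \|\X - f_{\p_0}(\X|\Y)\|_{\p}$, whose right side is continuous in $\p$ by the auxiliary fact above. Hence $\limsup_{\p \to \p_0} \phi(\p) \le \phi(\p_0)$, and combined with monotonicity this gives right-continuity at $\p_0$ immediately.

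Left-continuity at $\p_0$ is the subtle step. The plan is to invoke interpolation bound~\eqref{eq: interpolation 5}, relabelling $(p, q, r)$ there as $(\p, \p_0, r)$ for some fixed $r > \p_0$ and $\p < \p_0$, which gives
\begin{equation*}
\phi(\p_0) \le \phi(\p)^{\alpha_{\p}} \, \|\X - f_{\p}(\X|\Y)\|_r^{1-\alpha_{\p}}, \qquad \alpha_{\p} = \frac{\p_0^{-1} - r^{-1}}{\p^{-1} - r^{-1}} \xrightarrow{\p \uparrow \p_0} 1.
\end{equation*}
Provided $\|\X - f_{\p}(\X|\Y)\|_r$ remains bounded uniformly for $\p$ in a left neighbourhood of $\p_0$, passing to the limit yields $\phi(\p_0) \le \liminf_{\p \uparrow \p_0} \phi(\p)$, which together with monotonicity completes left-continuity. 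The hard part is thus securing this uniform bound on $\|\X - f_{\p}(\X|\Y)\|_r$: for $\p \ge 1$, convexity of the loss forces $f_{\p}(\X|\Y)$ into the convex hull of the conditional essential range of $\X$ given $\Y$, so a moment hypothesis such as $\|\X\|_r < \infty$ for some $r > \p_0$ is the natural assumption under which the argument closes.
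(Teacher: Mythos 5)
Your treatment of $\| \X -\E[\X | \Y]\|_{\p}$ coincides with the paper's (log-convexity from \eqref{eq: higher moments log-convex} implies continuity). For the MMPE itself you take a genuinely different route. The paper bounds $\left|\mmpe(\X,\snr,\p)-\mmpe(\X,\snr,{\rm q})\right|$ by the maximum of the two one-sided differences $\| \X -f_{\rm q}(\X| \Y) \|_{\p}^{\p}-\| \X -f_{\rm q}(\X| \Y) \|_{\rm q}^{\rm q}$ and $\| \X -f_{\p}(\X| \Y) \|_{\rm q}^{\rm q}-\| \X -f_\p(\X| \Y) \|_{\p}^{\p}$, and then appeals to ``continuity of the norm.'' You instead work with $\phi(\p)=\mmpe^{1/\p}$, establish monotonicity in $\p$, obtain right-continuity from monotonicity plus the suboptimal-estimator bound $\phi(\p)\le\|\X-f_{\p_0}(\X|\Y)\|_{\p}$, and attack left-continuity through the interpolation inequality \eqref{eq: interpolation 5}. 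The two arguments meet the same obstruction in the hard direction: one must control the error of the order-$\p$ optimal estimator in a norm of a \emph{different} order, uniformly as $\p$ varies (in the paper's proof this is hidden in the first term of the max, where the random variable $\X-f_{\rm q}(\X|\Y)$ itself moves with ${\rm q}$; the phrase ``continuity of the norm'' only covers the second term, where the estimator is frozen). You are more explicit than the paper about this being the delicate point, which is to your credit.

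The one real weakness is how you close that point: you add the hypothesis $\|\X\|_r<\infty$ for some $r>\p_0$, which is not in the statement (the proposition is asserted for \emph{any} $\X$ with $\snr>0$; note that by \eqref{eq: p-th error bounds} the MMPE is finite for every $\X$ once $\snr>0$, so no moment condition on $\X$ is implicit). The extra hypothesis is removable. For $\p\ge 1$, comparing $f_\p$ against the candidate $\E[\X_{\bf y}]$ and using Jensen gives, pointwise in ${\bf y}$, $\|\E[\X_{\bf y}]-f_\p(\X|\Y={\bf y})\|\le\left(\E\left[\|\X_{\bf y}-\E[\X_{\bf y}]\|^{\p}\right]\right)^{1/\p}$, whence for $1\le\p\le r$
\begin{align*}
\| \X -f_\p(\X|\Y)\|_{r}\;\le\;\| \X -\E[\X|\Y]\|_{r}+\|\E[\X|\Y]-f_\p(\X|\Y)\|_{r}\;\le\; C\,\| \X -\E[\X|\Y]\|_{r},
\end{align*}
and the right side is bounded independently of $\X$ via Lemma~\ref{lem: equivalence of errors} and Proposition~\ref{prop:higher moments bound 1} (the noise, not the input, controls these moments when $\snr>0$). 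With that substitution your argument proves the stated result for $\p_0\ge1$ without extra assumptions; for $\p_0<1$ one still needs a separate (but analogous) uniform bound, since the triangle inequality and the convexity comparison above are unavailable there.
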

\begin{proof}
Continuity of $\| \X -\E[\X | \Y]\|_{\p} $ follows from log-convexity in \eqref{eq: higher moments log-convex} while the continuity of MMPE follows from  
\begin{align*}
&  \lim_{{\rm q} \to \p} \left|\mmpe(\X,\snr,\p)-\mmpe(\X,\snr,{\rm q})  \right|  \\
  & \le \lim_{{\rm q} \to \p}
 \max \left(  \| \X -f_{\rm q}(\X| \Y) \|_{\p}^{\p}-\| \X -f_{\rm q}(\X| \Y) \|_{\rm q}^{\rm q},   \| \X -f_{\p}(\X| \Y) \|_{\rm q}^{\rm q}-\| \X -f_\p(\X| \Y) \|_{\p}^{\p} \right) \\
 &=0,
\end{align*}
where the last inequality is due to the continuity of the norm. 
\end{proof}

    An interesting question  is whether  the following interpolation inequality holds:
      \begin{align}
      \mmpe^{\frac{1}{{\rm q}}}(\X,\snr,{\rm q}) \le   \mmpe^{\frac{\alpha}{\p}}(\X,\snr,\p) \  \mmpe^{\frac{\bar{\alpha}}{{\rm r}}}(\X,\snr,{\rm r})  \label{eq: conjecture Interp}
      \end{align}
      instead of \eqref{eq: interpolation 4} and \eqref{eq: interpolation 5}. 
A counter example to the interpolation inequality in \eqref{eq: conjecture Interp} is shown in Fig.~\ref{fig:InterpolationCounterExample} where we take a binary input  $X \in \{\pm 1\}$ equality likely, $\p=2,{\rm r}=8$,  and $\snr=1$  and show:
\begin{itemize}
\item  The MMPE of $X$  of order ${\rm q}$ versus $\alpha \in [0,1]$ where  ${\rm q}$ is computed according to \eqref{eq: interpolation parameters} (blue-solid line); 
\item  The interpolation bound in \eqref{eq: interpolation 4} (purple dashed-dotted line);
\item The interpolation bound in  \eqref{eq: interpolation 5} (yellow solid-dotted line);
\item The interpolation bound in \eqref{eq: interpolation 3} with $f(Y)=f_{ \frac{\p+{\rm r}}{2} }(X|Y)$ (green dashed line); and
\item The right-hand side of the conjectured inequality in \eqref{eq: conjecture Interp} (red-dotted line). \end{itemize} 
This shows that \eqref{eq: conjecture Interp} is not true in general. 
\begin{figure}
        \centering
        \includegraphics[width=7.5cm]{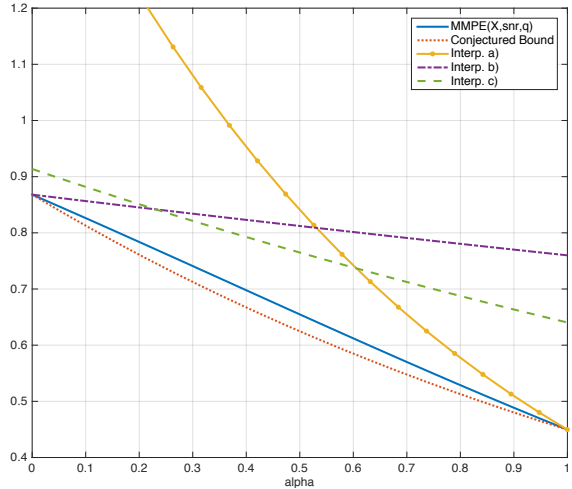}
        \caption{ Interpolation bounds from Proposition~\ref{prop:logconvex} and the conjectured bound in \eqref{eq: conjecture Interp}  versus $\alpha$. Clearly the conjectured bound is below the true MMPE thus \eqref{eq: conjecture Interp} cannot be true.  }
       \label{fig:InterpolationCounterExample}
\end{figure}

\subsection{Bounds on Discrete Inputs}
\label{sec:bounds on MMPE discrete}

So far, by using Proposition~\ref{prop:higher moments bound 1}, we have shown that the  MMPE as a function of $\snr$ decreases  as $O \left(\frac{ 2^{\frac{\p}{2}} n^{\frac{\p}{2}-1}}{\snr^\frac{\p}{2}}\right)$. Next we show that the MMPE can decrease exponentially in $\snr$. Such a behavior has been already observed for the MMSE in \cite{OptimalPowerAllocationOfParChan} and \cite{ mmseDim}. 

\begin{prop}
\label{prop:bound on discrete inputs}
Let $\X_D$ be a discrete r.v. with $|\supp(\X_D)|=N$ and $\mathbb{P}[\X_D= {\bf x}_i ]=p_i$ for ${\bf x}_i \in  \supp(\X_D)$ then
\begin{subequations}
\begin{align}
\mmpe(\X_D,\snr,\p)& \le  d_{\max}^{\p}(\X_D)  \frac{ \sum_{i=1}^N p_i \bar{Q} \left( \frac{n}{2}; \frac{\snr \ d_{ {\bf x}_i}^2(\X_D) }{8}\right)}{ n }  \\
&\le  d_{\max}^{\p}(\X_D) \frac{ \bar{Q} \left( \frac{n}{2}; \frac{\snr \ d^2_{ \min}(\X_D) }{8}\right)}{n },
\end{align} 
where 
\begin{align}
d_{ {\bf x}_i}(\X_D)&= \min_{{\bf x}_j \in \supp(\X_D): i \neq j}  \|  {\bf x}_j-{\bf x}_i \|, \\
d_{ \min}(\X_D)&= \min_{{\bf x}_i \in \supp(\X_D)}  d_{ {\bf x}_i}(\X_D),\\
d_{ \max}(\X_D)&= \max_{{\bf x}_j \in \supp(\X_D)}  \|  {\bf x}_j-{\bf x}_i \|.
\end{align}
\end{subequations}
\end{prop}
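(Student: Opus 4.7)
The plan is to upper-bound the MMPE by the $\p$-th error of an explicit suboptimal estimator, namely the nearest-codepoint (maximum-likelihood style) decoder
\begin{align*}
f(\Y) := \arg\min_{{\bf x}_j \in \supp(\X_D)} \bigl\| \Y - \sqrt{\snr}\,{\bf x}_j \bigr\|.
\end{align*}
Since the output of $f$ always lies in $\supp(\X_D)$, the $\p$-th error $\|\X_D - f(\Y)\|^{\p}$ vanishes whenever the decoder is correct and is at most $d_{\max}^{\p}(\X_D)$ whenever it is not. Hence
\begin{align*}
\mmpe(\X_D,\snr,\p) \le \frac{1}{n}\E\bigl[\|\X_D - f(\Y)\|^{\p}\bigr] \le \frac{d_{\max}^{\p}(\X_D)}{n}\sum_{i=1}^{N} p_i\,\Prob\bigl[f(\Y) \neq {\bf x}_i \mid \X_D = {\bf x}_i\bigr],
\end{align*}
so the problem reduces to controlling each conditional error probability.

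For this step I would use a standard geometric argument. Conditioning on $\X_D = {\bf x}_i$ and writing $\Y = \sqrt{\snr}\,{\bf x}_i + \Z$, the triangle inequality gives, for every $j \neq i$,
\begin{align*}
\bigl\| \Y - \sqrt{\snr}\,{\bf x}_j \bigr\| \ge \sqrt{\snr}\,\| {\bf x}_j - {\bf x}_i\| - \|\Z\| \ge \sqrt{\snr}\,d_{{\bf x}_i}(\X_D) - \|\Z\|.
\end{align*}
Whenever $\|\Z\| < \sqrt{\snr}\,d_{{\bf x}_i}(\X_D)/2$, this lower bound strictly exceeds $\|\Z\| = \|\Y - \sqrt{\snr}{\bf x}_i\|$, so the decoder selects ${\bf x}_i$. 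Contrapositively, the decoding-error event is contained in $\bigl\{\|\Z\|^2 \ge \snr\,d_{{\bf x}_i}^2(\X_D)/4\bigr\}$.

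Finally, since $\|\Z\|^2/2 \sim \Gamma(n/2,1)$ for $\Z \sim \mathcal{N}({\bf 0},\I)$, the tail probability evaluates as
\begin{align*}
\Prob\bigl[\|\Z\|^2 \ge \snr\,d_{{\bf x}_i}^2(\X_D)/4\bigr] = \Prob\bigl[\tfrac{1}{2}\|\Z\|^2 \ge \snr\,d_{{\bf x}_i}^2(\X_D)/8\bigr] = \bar{Q}\bigl(n/2;\,\snr\,d_{{\bf x}_i}^2(\X_D)/8\bigr),
\end{align*}
which is exactly the factor $1/8$ appearing in the claim. Substituting back yields the first inequality; the second follows at once from $d_{{\bf x}_i}(\X_D) \ge d_{\min}(\X_D)$ together with the monotonicity of $\bar{Q}(n/2;\cdot)$ in its second argument. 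The only mildly delicate point is the geometric containment: using the per-point nearest-neighbor distance $d_{{\bf x}_i}(\X_D)$ rather than any looser proxy is precisely what produces the sharp constant $8$; the remaining steps are routine norm manipulations and the recognition of $\|\Z\|^2/2$ as a $\Gamma(n/2,1)$ variate.
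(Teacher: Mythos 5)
Your proof is correct and follows essentially the same route as the paper's: a suboptimal decoder with outputs in $\supp(\X_D)$, the error magnitude bounded by $d_{\max}^{\p}(\X_D)$, and the per-point error probability reduced to the event $\|\Z\|\ge \sqrt{\snr}\,d_{{\bf x}_i}(\X_D)/2$, whose tail is the stated incomplete-gamma expression. The only (immaterial) difference is that the paper uses the ball-indicator decoder, for which this event is exactly the decoding-error event, whereas your nearest-point decoder requires the extra triangle-inequality containment — which you supply correctly, and which has the small added benefit that your decoder always outputs a support point.
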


\begin{proof}
See Appendix~\ref{app:prop:bound on discrete inputs}.
\end{proof}
The exponential behavior of the MMPE of discrete inputs can be clearly seen for the case  $n=1$ as follows. By using $\bar{Q} \left( \frac{1}{2};a^2 \right)= 2 Q(  \sqrt{2 } a )$  we have that 
\begin{align}
\mmpe(X_D,\snr,p) \le  2 d_{\max}^{\p}(X_D)Q\left(   \sqrt{ \frac{ \snr d^2_{ \min}(X_D)   }{4} }\right) \le 2 d_{\max}^{\p}(X_D) \eu^{- \frac{   \snr     d^2_{ \min}(X_D) }{8}},
\end{align}
where the last inequality follows from  the Chernoff's bound $Q(x) \le \eu^{-\frac{x^2}{2}}$. 

Having developed bounds on the MMPE of discrete inputs, we are now in the position to demonstrate a phase transition phenomenon, that is, we show that as $n \to \infty$ the MMPE becomes a discontinuous function of the SNR.

\begin{prop} \label{prop:phasetrans}
For $\X_D  \in\{  \pm {\bf 1}\}$ (i.e., vector of all one's or all minus ones) equally likely  then 
\begin{align} 
\lim_{n \to \infty} \mmpe(\X_D,\snr,\p)  \le   \left \{\begin{array}{ll}  \lim_{n \to \infty} 4^\frac{\p}{2}  n^{\frac{\p}{2}-1} , &  \snr  < 1   \\
0, & \snr > 1  
\end{array}  \right.  . \label{eq:phasetrans}
\end{align}
\end{prop}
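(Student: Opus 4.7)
The plan is to apply Proposition~\ref{prop:bound on discrete inputs} to the two-point constellation $\supp(\X_D)=\{+\mathbf{1},-\mathbf{1}\}$, and then analyze the resulting generalized $Q$-function asymptotically as $n \to \infty$. Since there are only two equally likely support points, every pairwise distance is
\[
\|\mathbf{1}-(-\mathbf{1})\|=2\sqrt{n},
\]
so $d_{\min}(\X_D)=d_{\max}(\X_D)=2\sqrt{n}$. Substituting into Proposition~\ref{prop:bound on discrete inputs} yields
\[
\mmpe(\X_D,\snr,\p) \le \frac{(2\sqrt{n})^{\p}}{n}\,\bar{Q}\!\left(\frac{n}{2};\,\frac{\snr\, n}{2}\right)
= 4^{\p/2}\, n^{\p/2-1}\,\bar{Q}\!\left(\frac{n}{2};\,\frac{\snr\, n}{2}\right).
\]

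For the regime $\snr<1$, I would simply use the trivial bound $\bar{Q}(\cdot;\cdot)\le 1$, giving $\mmpe(\X_D,\snr,\p)\le 4^{\p/2} n^{\p/2-1}$ for every $n$, which establishes the first case of \eqref{eq:phasetrans} after passing to the limit.

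The substantive case is $\snr>1$. Here I would identify the generalized $Q$-function as a chi-squared tail probability: using the relation between $\bar Q$ and the incomplete gamma function together with the density of $\chi^2_n$, one has
\[
\bar{Q}\!\left(\frac{n}{2};\,\frac{\snr\, n}{2}\right)=\Prob\!\left[\chi^2_n > \snr\, n\right]
= \Prob\!\left[\frac{1}{n}\sum_{i=1}^{n} Z_i^2 > \snr\right],\qquad Z_i \stackrel{\text{iid}}{\sim}\mathcal{N}(0,1).
\]
Since $\snr>1=\E[Z_i^2]$, a standard Chernoff argument for sums of i.i.d.\ squared Gaussians gives a constant $c(\snr)>0$ such that $\Prob[\chi^2_n>\snr n]\le \eu^{-n\,c(\snr)}$ for all $n$. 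Multiplying by the polynomial prefactor $4^{\p/2} n^{\p/2-1}$, the product vanishes as $n\to\infty$, yielding the second case of \eqref{eq:phasetrans}.

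The only delicate point is making the large-deviation estimate quantitative enough that exponential decay in $n$ dominates $n^{\p/2-1}$ uniformly in $\p$; this reduces to optimizing the moment generating function $\E[\eu^{\theta Z^2}]=(1-2\theta)^{-1/2}$ for $\theta\in(0,1/2)$ and checking that the resulting rate $c(\snr)=\tfrac{1}{2}(\snr-1-\log\snr)$ is strictly positive whenever $\snr>1$. All remaining steps are routine bookkeeping.
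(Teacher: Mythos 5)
Your proposal is correct and follows the same skeleton as the paper's proof: both apply Proposition~\ref{prop:bound on discrete inputs} with $d_{\min}(\X_D)=d_{\max}(\X_D)=2\sqrt{n}$ to get $\mmpe(\X_D,\snr,\p)\le 4^{\p/2}n^{\p/2-1}\,\bar{Q}\left(\tfrac{n}{2};\tfrac{n}{2}\snr\right)$, and both dispose of the $\snr<1$ case with the trivial bound $\bar{Q}\le 1$. Where you diverge is the key analytic step for $\snr>1$: the paper simply invokes the known asymptotics $\lim_{x\to\infty}x^{p}\,\bar{Q}\bigl(x;(1+\epsilon)x\bigr)=0$ of the normalized incomplete gamma function (citing a reference), whereas you re-derive this by recognizing $\bar{Q}\left(\tfrac{n}{2};\tfrac{n}{2}\snr\right)=\Prob[\chi^2_n>\snr n]$ and running a Chernoff bound on the i.i.d.\ sum of squared Gaussians. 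Your route is self-contained and buys an explicit exponential rate $c(\snr)=\tfrac{1}{2}(\snr-1-\ln\snr)>0$ (note this should be the natural logarithm, not the paper's base-$2$ $\log$), which makes the domination of the polynomial prefactor $n^{\p/2-1}$ transparent; the paper's citation-based argument is shorter but gives no quantitative decay. Both are valid proofs of \eqref{eq:phasetrans}.
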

\begin{proof}
For $\X_D  \in \{  \pm {\bf 1}\}$ we have that 
\begin{align}
d_{\max}(\X_D)&=  d_{\min}(\X_D) =\sqrt{4 n}.
\end{align}

With the following well known limits \cite{gautschi1998incomplete}:
\begin{subequations}
\begin{align}
&\lim_{x \to \infty} x^p \bar{Q}\left(x;(1+\epsilon)x \right)=\lim_{x \to \infty} x^p \frac{ \Gamma \left(x;(1+\epsilon)x \right)}{\Gamma \left(x \right)}=0, \\
&\lim_{x \to \infty}  \bar{Q}\left(x;(1-\epsilon)x \right)=\lim_{x \to \infty} \frac{ \Gamma \left(x;(1-\epsilon)x \right)}{\Gamma \left(x \right)}=1,
\end{align} 
\label{eq:limits gamma function}
\end{subequations}
for any $\epsilon>0$ and $p \ge 0$, by using Proposition~\ref{prop:bound on discrete inputs} we have that  
\begin{align}
 \mmpe(\X_D,\snr,p) \le 4^\frac{\p}{2} n^{\frac{\p}{2}}  \frac{  \bar{Q}\left( \frac{n}{2}; \frac{\snr \ d^2_{ \min}(\X_D) }{8}\right)}{n } \le  4^\frac{\p}{2} n^{\frac{\p}{2}}  \frac{  \bar{Q} \left( \frac{n}{2}; \frac{n}{2} \snr\right)}{n } ,
 \end{align}
 and, in light of the limit in \eqref{eq:limits gamma function}, 
 we have that the bound in \eqref{eq:phasetrans} holds.
This concludes the proof. 

\end{proof}

\section{Conditional MMPE} 
\label{sec:conditional MMPE}
We define the conditional MMPE as follows.
\begin{definition} For any  $\X$ and  $\U$,  the conditional MMPE of $\X$ given $\U$ is defined as 
\begin{align}
\mmpe(\X,\snr,\p | \U) := \| \X- f_\p(\X| \Y_{\snr},\U ) \|_{\p}^{\p}.  \label{eq:def cond MMPE}
\end{align}
\end{definition}
The conditional MMPE in \eqref{eq:def cond MMPE} reflects the fact  that the optimal estimator has been given additional information in the form of $\U$. 
Note that when $ \Z$ is 
independent of $(\X,\U)$ we can write the conditional MMPE  for $\X_{\bf u} \sim P_{ \X | \U }(\cdot | {\bf u} )$ as
\begin{align}
\mmpe(\X,\snr,\p | \U) = \int  \mmpe(\X_{\bf u},\snr,\p ) \ dP_\U({\bf u}).
\end{align}

Since giving extra information does not increase the estimation  error,  we have the following result.
\begin{prop} 
\label{prop:cond reduces} (Conditioning Reduces the MMPE.)
For every $\snr \ge 0$, and  random variable $\X$, we have
\begin{align}
\mmpe(\X,\snr,\p ) \ge \mmpe(\X,\snr,\p | \U).
\end{align}
\end{prop}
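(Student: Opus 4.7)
The plan is to view both quantities as infima of the expected $\p$-th error over a class of measurable estimators and to observe that the class used for the conditional MMPE strictly contains the class used for the unconditional MMPE. More concretely, by Proposition~\ref{prop:existence of optimal estimator} and Corollary~\ref{lem: exchange of inf and expect}, we can write
\begin{align*}
\mmpe(\X,\snr,\p) &= \inf_{f} \tfrac{1}{n}\E\!\left[\Err^{\frac{\p}{2}}(\X,f(\Y))\right], \\
\mmpe(\X,\snr,\p\mid \U) &= \inf_{g} \tfrac{1}{n}\E\!\left[\Err^{\frac{\p}{2}}(\X,g(\Y,\U))\right],
\end{align*}
where $f$ ranges over Borel measurable maps of $\Y$ alone and $g$ ranges over Borel measurable maps of $(\Y,\U)$.

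The first step is to note that any $f(\Y)$ is a special case of $g(\Y,\U)$ obtained by taking $g$ to be constant in its second argument. Hence the feasible set for the conditional problem contains that of the unconditional problem, and taking the infimum over a larger set yields a smaller value, which is exactly the claim. Equivalently, I can plug the unconditional optimizer $f_\p(\X\mid\Y)$ into the conditional objective, giving
\begin{align*}
\mmpe(\X,\snr,\p\mid\U) \le \tfrac{1}{n}\E\!\left[\Err^{\frac{\p}{2}}(\X,f_\p(\X\mid\Y))\right] = \mmpe(\X,\snr,\p).
\end{align*}

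There is essentially no technical obstacle: existence of the optimal estimators has already been handled in Proposition~\ref{prop:existence of optimal estimator}, and measurability is immediate because any Borel function of $\Y$ is trivially a Borel function of $(\Y,\U)$. The only thing to double-check is that the representation of the conditional MMPE through \eqref{eq:def cond MMPE} is indeed the infimum over all measurable $g(\Y,\U)$; this follows by applying Proposition~\ref{prop:existence of optimal estimator} and Corollary~\ref{lem: exchange of inf and expect} to the enlarged observation $(\Y,\U)$, since the noise $\Z$ is independent of $(\X,\U)$ so the channel structure is preserved. With that, the proof reduces to a one-line ``monotonicity of infimum under enlargement of the feasible set'' argument.
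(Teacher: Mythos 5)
Your argument is correct and is exactly the reasoning the paper relies on: the paper gives no formal proof, only the remark that ``giving extra information does not increase the estimation error,'' which is precisely your observation that every estimator $f(\Y)$ is a special case of an estimator $g(\Y,\U)$, so the infimum over the larger class is no larger. Your formalization via Proposition~\ref{prop:existence of optimal estimator} applied to the enlarged observation $(\Y,\U)$ is a valid and faithful elaboration of the paper's one-line justification.
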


Finally,  the following Proposition generalizes \cite[Proposition 3.4]{guo2013interplay} and states that the MMPE estimation of $\X$ from two observations is equivalent to estimating $\X$ from a single observation with a higher SNR.
\begin{prop}
\label{prop:max ration comb}
For every $\X$ and $\p  \ge 0$, let $\U=\sqrt{\Delta} \cdot \X+\Z_{\Delta} $ where $\Z_{\Delta} \sim \mathcal{N}(0, \I)$ and where $(\X, \Z, \Z_{\Delta})$ are mutually independent. Then 
\begin{align}
\mmpe(\X,\snr_0,\p | \U ) = \mmpe(\X,\snr_0+\Delta,\p).
\end{align}
\end{prop}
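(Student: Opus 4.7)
The approach is to argue that the pair $(\Y,\U)$ contains exactly the same information about $\X$ as a single AWGN observation at SNR $\snr_0+\Delta$, and then invoke the point-wise characterization of the optimal estimator from Proposition~\ref{prop:existence of optimal estimator} to conclude that the MMPE values coincide.

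First I would construct the maximum-ratio-combined observation
\begin{align*}
\tilde{\Y} \;:=\; \frac{\sqrt{\snr_0}\,\Y+\sqrt{\Delta}\,\U}{\sqrt{\snr_0+\Delta}} \;=\; \sqrt{\snr_0+\Delta}\,\X+\tilde{\Z},\qquad \tilde{\Z}:=\frac{\sqrt{\snr_0}\,\Z+\sqrt{\Delta}\,\Z_\Delta}{\sqrt{\snr_0+\Delta}}.
\end{align*}
Since $\Z$ and $\Z_\Delta$ are independent $\mathcal{N}(\mathbf{0},\I)$ vectors independent of $\X$, the combined noise $\tilde{\Z}$ is $\mathcal{N}(\mathbf{0},\I)$ and independent of $\X$. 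Thus $\tilde{\Y}$ is the output of a single Gaussian channel with input $\X$ at SNR $\snr_0+\Delta$, so that $\mmpe(\X,\snr_0+\Delta,\p)$ can be computed from $\tilde{\Y}$ alone.

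Next I would verify that $\tilde{\Y}$ is a sufficient statistic for $\X$ given $(\Y,\U)$. Writing the joint density of $(\Y,\U)$ given $\X=\x$ and completing the square,
\begin{align*}
p(\y,\u\mid\x) \;\propto\; \exp\!\Bigl(-\tfrac{1}{2}\|\y-\sqrt{\snr_0}\,\x\|^{2}-\tfrac{1}{2}\|\u-\sqrt{\Delta}\,\x\|^{2}\Bigr) \;=\; h(\y,\u)\,\exp\!\Bigl(\sqrt{\snr_0+\Delta}\,\tilde{\y}^{T}\x-\tfrac{\snr_0+\Delta}{2}\|\x\|^{2}\Bigr),
\end{align*}
where the factor $h(\y,\u)$ does not depend on $\x$. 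By the Fisher--Neyman factorization, $\tilde{\Y}$ is sufficient for $\X$ and the posterior satisfies $p_{\X\mid\Y,\U}(\x\mid\y,\u)=p_{\X\mid\tilde{\Y}}(\x\mid\tilde{\y})$ almost surely.

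Finally, Proposition~\ref{prop:existence of optimal estimator} gives the optimal estimator point-wise as the minimizer of $\E[\Err^{\p/2}(\X,{\bf v})\mid\,\cdot\,]$, which depends only on the posterior distribution of $\X$. Since the two posteriors coincide, the conditional MMPE integrands coincide, and taking expectations yields $\mmpe(\X,\snr_0,\p\mid\U)=\mmpe(\X,\snr_0+\Delta,\p)$. The only nontrivial step is the sufficiency verification; after that the conclusion is immediate from the point-wise optimality characterization and does not require $\p=2$ or any Hilbert-space structure, which is why the argument extends uniformly from the MMSE case $\p=2$ to all $\p\ge 0$.
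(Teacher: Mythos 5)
Your proof is correct and takes essentially the same route as the paper's: maximal-ratio combining of $\Y_{\snr_0}$ and $\U$ into a single AWGN observation at SNR $\snr_0+\Delta$, followed by the observation that the posteriors of $\X$ given the pair and given the combined observation coincide, so the point-wise optimal estimator and hence the MMPE agree. The only difference is that you verify the posterior-equality step explicitly via the Fisher--Neyman factorization, whereas the paper cites the corresponding argument from \cite[Proposition 3.4]{guo2013interplay}.
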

\begin{proof}
For two independent observations 
$\Y_{\snr_0} =\sqrt{\snr_0} \X+\Z$ and  $ \Y_{\Delta} =\sqrt{\Delta} \X+\Z_\Delta$
where $\Z_\Delta$ and $\Z$ are independent, by using maximal ratio combining, we have that 
\begin{align*}
\Y_\snr &=\frac{\sqrt{\Delta}}{ \sqrt{\snr_0 +\Delta}} \Y_{\Delta} +\frac{\sqrt{\snr_0}}{ \sqrt{\snr_0 +\Delta}} \Y_{\snr_0}\\
& =  \sqrt{\snr_0+\Delta} \X +\W,
\end{align*}
where $\W \sim \mathcal{N}(0,\I)$. Next by using the same argument as in \cite[Proposition 3.4]{guo2013interplay}, we have that the conditional probabilities are
\begin{align}
p_{ \X | \Y_{\snr_0},\Y_{\Delta} }({\bf x}| {\bf y}_{\snr_0},  {\bf y}_{\Delta}) = p_{\X| \Y } ({\bf x} | {\bf y}_{\snr} )
\end{align}
for  ${\bf y}_{\snr}= \frac{\sqrt{\Delta}}{ \sqrt{\snr_0 +\Delta}} {\bf y}_{\Delta} +\frac{\sqrt{\snr_0}}{ \sqrt{\snr_0 +\Delta}} {\bf y}_{\snr_0}$. The equivalence of the posterior probabilities implies that the estimation of $\X$ from $\Y_{\snr}$ is as good as the estimation of $\X$ from $(\Y_{\snr_0},\Y_{\Delta})$.  This concludes the proof. 
\end{proof}

Propositions~\ref{prop:max ration comb} and Proposition~\ref{prop:cond reduces} imply that, for fixed $\X$ and $p$\begin{align}
\mmpe(\X,\snr,\p) \ge  \mmpe(\X,\snr,\p | \sqrt{\Delta}\X+\Z')  =\mmpe(\X,\snr+\Delta,\p),
\end{align}
and we have the following:
\begin{corollary}
$\mmpe(\X,\snr,\p)$ is a  \emph{non-increasing} function of $\snr$. 
\end{corollary}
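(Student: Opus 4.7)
The plan is essentially to piece together the two propositions immediately preceding the corollary, as the text already hints. The whole argument hinges on a clever construction: for any two SNR values $\snr_1 < \snr_2$, produce the higher-SNR observation by augmenting the lower-SNR one with an \emph{independent} Gaussian side-observation, and then invoke the fact that side information can only help estimation.

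Concretely, I would fix arbitrary $\snr \ge 0$ and $\Delta \ge 0$, and introduce an auxiliary random vector $\U = \sqrt{\Delta}\,\X + \Z'$ where $\Z' \sim \mathcal{N}({\bf 0},\I)$ is chosen independent of both $\X$ and the channel noise $\Z$ that generates $\Y_{\snr} = \sqrt{\snr}\,\X + \Z$. This independence is exactly the hypothesis needed to apply Proposition~\ref{prop:max ration comb}, which gives the identity
\begin{align*}
\mmpe(\X,\snr,\p \mid \U) = \mmpe(\X,\snr+\Delta,\p).
\end{align*}

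Next I would invoke Proposition~\ref{prop:cond reduces} (conditioning reduces the MMPE) to get
\begin{align*}
\mmpe(\X,\snr,\p) \ge \mmpe(\X,\snr,\p \mid \U).
\end{align*}
Chaining the two displays yields $\mmpe(\X,\snr,\p) \ge \mmpe(\X,\snr+\Delta,\p)$ for every $\Delta \ge 0$, which is precisely the claim that $\snr \mapsto \mmpe(\X,\snr,\p)$ is non-increasing.

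There is really no serious obstacle here; the corollary is a two-line consequence of the two preceding propositions. The only subtlety worth flagging in the write-up is that the auxiliary noise $\Z'$ must be constructed on a (possibly enlarged) probability space so that it is independent of $(\X,\Z)$; this is what makes the hypothesis of Proposition~\ref{prop:max ration comb} applicable and ensures the conditional MMPE is well-defined in the sense of the earlier definition. Once that is noted, the proof is immediate.
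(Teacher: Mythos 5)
Your proof is correct and follows exactly the paper's own argument: the text immediately preceding the corollary chains Proposition~\ref{prop:cond reduces} with Proposition~\ref{prop:max ration comb} via the auxiliary observation $\U=\sqrt{\Delta}\X+\Z'$ to obtain $\mmpe(\X,\snr,\p)\ge\mmpe(\X,\snr,\p\mid\U)=\mmpe(\X,\snr+\Delta,\p)$. Your added remark about constructing $\Z'$ on an enlarged probability space independent of $(\X,\Z)$ is a reasonable point of care but does not change the argument.
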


\section{Advance Bounds: SCPP Bound  and Its Complement}
\label{sec:SCPP}
The SCPP is a powerful too that can be used to show the advantage of Gaussian inputs over arbitrary inputs in certain channels with Gaussian noise. In conjunction with the I-MMSE relationship, the SCPP provides simple and insightful converse proofs to the capacity of multi-user AWGN channels.  The original proof of  the SCPP in \cite{GuoMMSEprop} and \cite{BustinMMSEparallelVectorChannel} relied on  bounding the MMSE. Next we give a simpler proof of the SCPP that does not require knowledge of the derivative the MMSE and can easily be extended to the MMPE of any order $\p$. 
 
 First observe that, in light of the bound in \eqref{eq: p-th error bounds}, for any $\snr>0$ we can always find  a $\beta \ge 0$ such that 
 \begin{align}
 \mmpe^{\frac{2}{\p}}(\X,\snr,\p)=\frac{ \beta \| \Z \|_{\p}^{2}  }{ 1+\beta \snr}, \text{ since } \lim_{\beta \to \infty}\frac{ \beta \| \Z \|_{\p}^{2}  }{ 1+\beta \snr}=  \frac{ \| \Z \|_{\p}^{2}  }{  \snr} .
 \end{align}
 
 Next we generalize the SCPP bound to the MMPE.
\begin{prop}
\label{prop: SCP generalized}
Let $\mmpe^{\frac{2}{\p}}(\X,\snr_0,\p)=\frac{ \beta \| \Z \|_{\p}^{2}  }{ 1+\beta \snr_0}$ for some $\beta \ge 0$. Then
\begin{subequations}
\begin{align}
\mmpe^{\frac{2}{\p}}(\X,\snr,p) \le  c_{\p} \cdot \frac{ \beta \| \Z \|_{\p}^{2}  }{ 1+\beta \ \snr},  \text{ for } \snr \ge \snr_0, \label{eq: SCP generalized} 
\end{align}
where 
\begin{align}
c_{\p}= 
\begin{cases}
2 & \p \ge 2 \\
1 & \p=2
\end{cases}.
\end{align} 
\end{subequations}
\end{prop}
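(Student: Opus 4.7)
The plan is to exploit Proposition~\ref{prop:max ration comb} to reinterpret $\mmpe(\X,\snr,\p)$ as the estimation error from two independent observations, and then supply an explicit suboptimal estimator whose error can be analyzed by the triangle inequality. Write $\Delta = \snr - \snr_0 \ge 0$ and let $\Y_{\snr_0} = \sqrt{\snr_0}\X + \Z$ and $\Y_\Delta = \sqrt{\Delta}\X + \Z_\Delta$ with $\Z_\Delta \sim \mathcal{N}({\bf 0},\I)$ independent of $(\X,\Z)$. By Proposition~\ref{prop:max ration comb}, estimating $\X$ from $\Y_{\snr}$ is equivalent to estimating $\X$ from $(\Y_{\snr_0},\Y_\Delta)$, so any suboptimal estimator based on this pair upper bounds the MMPE at $\snr$.

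I would then consider the convex combination
\begin{align*}
\hat{\X} = a \cdot \Y_\Delta/\sqrt{\Delta} + (1-a)\, f_\p(\X\mid \Y_{\snr_0}), \qquad a\in[0,1],
\end{align*}
so that the estimation error decomposes as $\X-\hat{\X} = -a\,\Z_\Delta/\sqrt{\Delta} + (1-a)\,(\X - f_\p(\X\mid \Y_{\snr_0}))$. The key structural feature is that the two summands are independent since $\Z_\Delta \perp (\X,\Z)$. Applying Minkowski's inequality~\eqref{eq: Minkowski} (valid since $\p\ge 1$) and using $\Z_\Delta \stackrel{d}{=} \Z$ together with the hypothesis $\mmpe^{1/\p}(\X,\snr_0,\p) = \sqrt{\beta}\,\|\Z\|_\p/\sqrt{1+\beta\snr_0}$, one obtains
\begin{align*}
\mmpe^{1/\p}(\X,\snr,\p) \le \|\Z\|_\p\left[\frac{a}{\sqrt{\Delta}} + \frac{(1-a)\sqrt{\beta}}{\sqrt{1+\beta\snr_0}}\right].
\end{align*}
The bracket is linear in $a$, so its minimum over $[0,1]$ equals $\min(1/\sqrt{\Delta},\,\sqrt{\beta}/\sqrt{1+\beta\snr_0})$. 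Squaring gives $\mmpe^{2/\p}(\X,\snr,\p) \le \|\Z\|_\p^2 / \max(\Delta,(1+\beta\snr_0)/\beta)$, and the elementary bound $\max(u,v)\ge (u+v)/2$ combined with the identity $\Delta + (1+\beta\snr_0)/\beta = (1+\beta\snr)/\beta$ delivers the claimed bound with $c_\p=2$.

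For the tighter constant $c_\p=1$ at $\p=2$, I would replace the triangle inequality with an orthogonality argument: since $\Z_\Delta$ is zero-mean and independent of $(\X,\Y_{\snr_0})$, the cross term in the mean-square error of $\hat{\X}$ vanishes, leaving
\begin{align*}
\mmse(\X,\snr) \le \frac{a^2 \|\Z\|_2^2}{\Delta} + (1-a)^2 \,\frac{\beta\|\Z\|_2^2}{1+\beta\snr_0}.
\end{align*}
Optimizing over $a$ (setting the derivative to zero) yields $a^\star = \beta\Delta/(1+\beta\snr)$, and substituting gives exactly $\beta\|\Z\|_2^2/(1+\beta\snr)$, recovering the classical SCPP.

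The conceptual novelty---and the only mildly nontrivial part---is that the proof avoids any appeal to the derivative of the MMSE (needed in \cite{GuoMMSEprop, BustinMMSEparallelVectorChannel}). The main obstacle is ensuring that the loss in using triangle inequality instead of Pythagoras for $\p\neq 2$ is only a bounded multiplicative factor; this is precisely what the max-average inequality provides, yielding the universal constant $c_\p=2$. The rest is routine optimization and bookkeeping with the assumed scaling law $\mmpe^{2/\p}(\X,\snr_0,\p) = \beta\|\Z\|_\p^2/(1+\beta\snr_0)$.
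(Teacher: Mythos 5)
Your proof is correct and follows essentially the same route as the paper: both invoke Proposition~\ref{prop:max ration comb} to replace $\Y_{\snr}$ by the pair $(\Y_{\snr_0},\Y_\Delta)$, analyze the same convex-combination suboptimal estimator via the triangle inequality, and recover the exact constant for $\p=2$ by the quadratic/orthogonality computation. The only (cosmetic) difference is the endgame: you minimize the linearized bound at an endpoint and apply $\max(u,v)\ge (u+v)/2$, whereas the paper fixes the interior weight $\gamma=\|\Z\|_{\p}^2/(\|\Z\|_{\p}^2+\Delta m)$ and applies $(a+b)\le\sqrt{2}\sqrt{a^2+b^2}$; both yield $c_{\p}=2$.
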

\begin{proof}
Let  $\snr=\snr_0+\Delta$ for $\Delta\geq 0,$ and let 
$\Y_{\Delta} =\sqrt{\Delta} \X+\Z_\Delta.$ 
Then
\begin{align*}
\Y_\snr &=\frac{\sqrt{\Delta}}{ \sqrt{\snr_0 +\Delta}} \Y_{\Delta} +\frac{\sqrt{\snr_0}}{ \sqrt{\snr_0 +\Delta}} \Y_{\snr_0}\\
& =  \sqrt{\snr_0+\Delta} \X +\W,
\end{align*}
where $\W \sim \mathcal{N}(0,\I)$. Next, let
\begin{align}
m:=  \mmpe^{\frac{2}{\p}}(\X,\snr_0,p)=\| \X -f_p (\X| \Y_{\snr_0}) \|_{\p}^2, \label{eq: m definition}
\end{align}
and define a suboptimal estimator  given $(\Y_\Delta, \Y_{\snr_0})$ as 
\begin{align}
\hat{\X}= \frac{(1-\gamma)}{\sqrt{\Delta}} \Y_\Delta +\gamma f_p(\X| \Y_{\snr_0}), \label{eq:sub-optimalEstimator}
\end{align}
for some $\gamma \in \mathbb{R}$ to be determined later. Then 
\begin{align*}
\X-\hat{\X}= \gamma( \X-f_p(\X| \Y_{\snr_0}) )-\frac{(1-\gamma)}{\sqrt{\Delta}}\Z_{\Delta},
\end{align*}
and
\begin{align}
\mmpe^{\frac{1}{\p}}(\X,\snr,p) & = \| \X- f_{\p}(\X| \Y_{\snr} ) \|_{\p} & \notag\\
& \stackrel{a)}{=} \| \X- f_{\p}(\X| \Y_\Delta, \Y_{\snr_0} ) \|_{\p}  \notag\\
& \stackrel{b)}{\le}  \| \X-\hat{\X}\|_{\p} = \left\| \gamma( \X-f_{\p}(\X| \Y_{\snr_0}) )-\frac{(1-\gamma)}{\sqrt{\Delta}}\Z_{\Delta} \right\|_{\p} \notag\\
& \stackrel{c)}{=}  \frac{ \left\| \| \Z \|_{\p}^2 ( \X-f_{\p}(\X| \Y_{\snr_0}) )-\sqrt{\Delta} \cdot m \cdot \Z_{\Delta} \right\|_{\p} }{\| \Z \|_{\p}^2+\Delta \cdot  m}, \label{eq: proof up to here SCPP}
\end{align}
where the (in)-equalities follow from: 
a) Proposition~\ref{prop:max ration comb};
b) by using the sub-optimal estimator in \eqref{eq:sub-optimalEstimator}; and
c) by choosing $\gamma=\frac{\| \Z \|_{\p}^2 }{\| \Z \|_{\p}^2+\Delta \cdot  m}$ for $m$ defined in \eqref{eq: m definition}.

Next, by applying the triangle inequality to \eqref{eq: proof up to here SCPP} we get 
\begin{align}
\mmpe^{\frac{1}{\p}}(\X,\snr,\p) 
&\le  \frac{ \left\| \| \Z \|_{\p}^2 ( \X-f_{\p}(\X| \Y_{\snr_0}) ) \right\|_{\p} +\left\|\sqrt{\Delta} \cdot m \cdot \Z_{\Delta} \right\|_{\p} }{\| \Z \|_{\p}^2+\Delta \cdot  m} \label{eq: triangle inequality}\\
&=  \frac{ \sqrt{m}\| \Z \|_{\p}  \cdot (\|\Z\|_{\p}  +\sqrt{\Delta} \cdot \sqrt{m}) }{\| \Z \|_{\p}^2+\Delta \cdot  m}  \\
& \le  \sqrt{2}  \frac{ \sqrt{m}  \| \Z \|_\p  }{ \sqrt{\| \Z \|_{\p}^2+\Delta \cdot  m}},
\end{align}
where in the last step we  used $(a+b) \le \sqrt{2} \sqrt{a^2+b^2}$. 

Note that for the case $\p=2$, instead of using the triangular inequality in~\eqref{eq: triangle inequality}, the term in \eqref{eq: proof up to here SCPP} can  be expanded into a quadratic equation for which it is not hard to see that the choice of $\gamma=\frac{\| \Z \|_{\p}^2 }{\| \Z \|_{\p}^2+\Delta \cdot  m}$ is optimal and leads to the bound
\begin{align*}
\mmpe^{\frac{1}{\p}}(\X,\snr,\p) \le \frac{ \sqrt{m}  \| \Z \|_{\p}  }{ \sqrt{\| \Z \|_{\p}^2+\Delta \cdot  m}}.
\end{align*}
The proof is concluded by noting that $\beta=   \frac{m}{ \| \Z \|_{\p}^2 -\snr_0 m}$.
\end{proof} 

\begin{remark}
We conjecture that the multiplicative constant $c_\p$  can be sharpened to $1$ for all $\p\ge 1$.  However, in order to make such a claim  one must solve the following optimization problem
\begin{align}
\min_{\gamma \in [0,1]}  \| (1-\gamma)\W + \gamma \Z  \|_\p, \label{eq:optimization problem}
\end{align} 
where $\W$ and $\Z$ are independent and $\Z \sim \mathcal{N}( {\bf 0},\I)$. 
Because it is not clear how to solve  \eqref{eq:optimization problem}  for $\p \neq 2$ and thus we leave it for the future work.  
\end{remark}

\begin{remark}
Note that the proof of Proposition~\ref{prop: SCP generalized} does not require the assumption that $\Z$ is  Gaussian and only requires the assumptions of Proposition~\ref{prop:max ration comb}. That is, we only require that a channel is such that the estimation of $\X$ from two observations is equivalent to estimating $\X$ from a single observation with a higher SNR. 
\end{remark}

\subsection{Complementary SCPP bound}
In this section we give a bound that complements the SCPP bound, that is, while  the SCPP bounds the MMPE for all $\snr \ge  \snr_0$,  we give a bound that bounds the MMPE for all $\snr \le \snr_0$ where it is assumed that the MMPE is known at $\snr_0$.

The next result enables us to bound the MMPE  at $\snr$ with values of the MMPE  at $\snr_0$ while varying  the order. 
\begin{prop} 
\label{prop:MMSE at low snr with MMPE}
For $ 0 < \snr \le \snr_0$, $\X$ and  $\p \ge 0$, we have
\begin{align*}
 & \mmpe(\X,\snr,\p) \le  \kappa _{n,t} \ \mmpe^{\frac{1-t}{1+t}} \left(\X,\snr_0,\frac{1+t}{1-t} \cdot \p \right) , \notag \\
& \text{where }  \kappa_{n,t}:=\left( \frac{2^n}{n^2} \right)^{\frac{t}{t+1}}\left( \frac{1}{1-t}\right)^{\frac{nt}{t+1}-\frac{1}{2}}, \;   t=\frac{\snr_0-\snr}{\snr_0}. 
  \end{align*}
\end{prop}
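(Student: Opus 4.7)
The plan is to combine the change of measure identity in Proposition~\ref{prop:change of measure} with H\"older's inequality, so as to trade off a higher-order MMPE at $\snr_0$ against the Gaussian weight factor.

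First, I would start from Proposition~\ref{prop:change of measure}: since $\snr \le \snr_0$,
\begin{align*}
\mmpe(\X,\snr,\p) \;=\; \inf_{f}\frac{1}{n}\,\E\!\left[\Err^{\frac{\p}{2}}(\X,f(\Y_{\snr_0}))\;\cdot\; W\right], \qquad W := \sqrt{\tfrac{\snr}{\snr_0}}\;\eu^{\frac{\snr_0-\snr}{2\snr_0}\sum_{i=1}^n Z_i^2}.
\end{align*}
Writing $t=(\snr_0-\snr)/\snr_0\in[0,1)$, we have $\snr/\snr_0 = 1-t$ and the exponential weight has parameter $t/2$.

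Next, I would apply H\"older's inequality to the product inside the expectation, with conjugate exponents
\begin{align*}
a=\frac{1+t}{1-t}, \qquad b=\frac{1+t}{2t}, \qquad \frac{1}{a}+\frac{1}{b}=1,
\end{align*}
so that the error piece gets raised to the $a$-th power, producing an error term of order $\p \cdot (1+t)/(1-t)$, precisely the order that appears in the target bound. This yields, for any estimator $f$,
\begin{align*}
\frac{1}{n}\,\E\!\left[\Err^{\frac{\p}{2}}\cdot W\right] \;\le\; \frac{1}{n}\,\Bigl(\E\!\left[\Err^{\frac{\p(1+t)}{2(1-t)}}(\X,f(\Y_{\snr_0}))\right]\Bigr)^{\!\frac{1-t}{1+t}} \Bigl(\E[W^{b}]\Bigr)^{\!\frac{1}{b}}.
\end{align*}
Specializing $f$ to the MMPE-optimal estimator $f_{\p(1+t)/(1-t)}(\X\,|\,\Y_{\snr_0})$ of order $\p(1+t)/(1-t)$ at SNR $\snr_0$ (which exists by Proposition~\ref{prop:existence of optimal estimator}) converts the first factor into $\bigl(n\cdot \mmpe(\X,\snr_0,\tfrac{1+t}{1-t}\p)\bigr)^{(1-t)/(1+t)}$, producing the desired MMPE on the right-hand side along with a stray $n^{-2t/(1+t)}$ factor after pulling out $1/n$.

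It then remains to compute the weight moment $\E[W^b]$. Since $b=(1+t)/(2t)$, one has $W^{b}=(1-t)^{(1+t)/(4t)}\eu^{\frac{1+t}{4}\sum_i Z_i^2}$, and because the $Z_i$ are i.i.d.\ standard Gaussian with $\E[\eu^{\alpha Z_i^2}]=(1-2\alpha)^{-1/2}$ for $\alpha<1/2$ (which holds here as $t<1$ forces $(1+t)/4<1/2$), I get
\begin{align*}
\E[W^b] \;=\; (1-t)^{\frac{1+t}{4t}}\Bigl(\tfrac{2}{1-t}\Bigr)^{\!n/2}, \qquad \bigl(\E[W^b]\bigr)^{1/b} \;=\; 2^{\,nt/(1+t)}\,(1-t)^{\,1/2-nt/(1+t)}.
\end{align*}
Assembling the three factors $n^{-2t/(1+t)}$, $2^{nt/(1+t)}$, and $(1-t)^{1/2-nt/(1+t)}$ reproduces exactly $\kappa_{n,t}=(2^n/n^2)^{t/(t+1)}\,(1-t)^{-(nt/(t+1)-1/2)}$, completing the proof.

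The only delicate point is the choice of the H\"older exponent $a$: it must be chosen so that the error exponent $\tfrac{\p}{2}\cdot a$ matches $\tfrac{1}{2}\cdot\tfrac{1+t}{1-t}\p$ (so that we land on a genuine MMPE at SNR $\snr_0$), while the conjugate $b$ must keep the exponent $\tfrac{1+t}{4}$ of $\sum Z_i^2$ below $\tfrac{1}{2}$ so that the Gaussian moment generating function is finite. Both are ensured precisely by $t<1$, i.e.\ $\snr>0$, which is exactly the hypothesis of the proposition. Routine algebra with Gaussian Laplace transforms handles the rest.
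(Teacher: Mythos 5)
Your proposal is correct and follows essentially the same route as the paper's proof: start from the change-of-measure identity of Proposition~\ref{prop:change of measure}, apply H\"older's inequality with exactly the conjugate pair $\bigl(\tfrac{1+t}{1-t},\tfrac{1+t}{2t}\bigr)$, and evaluate the weight factor via the chi-square (Gaussian) moment generating function, whose finiteness is guaranteed by $t<1$. The only cosmetic difference is that you keep the deterministic factor $\sqrt{\snr/\snr_0}=(1-t)^{1/2}$ inside the H\"older product rather than pulling it out first, which yields the same constant $\kappa_{n,t}$.
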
 
\begin{proof}
From Proposition~\ref{prop:change of measure} we have that 
\begin{align}
\mmpe(\X,\snr,\p)&=  \inf_{f} \frac{1}{n}\E \left[ \Err^\frac{\p}{2}(\X,f(\Y_{\snr_0}))  \sqrt{\frac{\snr}{\snr_0 }} \eu^{\frac{\snr_0-\snr}{2\snr_0}  \sum_{i=1}^n Z_i^2 }\right] \notag\\
&\stackrel{a)}{\le} \inf_{f}  \sqrt{\frac{\snr}{\snr_0 }}  \frac{1}{n}\left(\E \left[   \Err^{m\cdot \frac{\p}{2}}(\X,f(\Y_{\snr_0})) \right] \right)^{\frac{1}{m}}  \left( \E \left[ \eu^{\frac{r(\snr_0-\snr)}{2\snr_0}  \sum_{i=1}^n Z_i^2 }\right] \right)^{\frac{1}{r}} \notag\\
&\stackrel{b)}{=} \sqrt{\frac{\snr}{\snr_0 }} n^{\frac{1}{m}-1} \mmpe^{\frac{1}{m}}(\X,\snr_0,m\cdot \p) \left(1-\frac{r(\snr_0-\snr)}{\snr_0} \right)^{-\frac{n}{2r}},\label{eq:ChangeMeasureANYn}
\end{align}
where the (in)-equalities follow from: a) H{\"o}lder's inequality with conjugate exponents $1 \le m,r$ such that $\frac{1}{m}+\frac{1}{r}=1$; and b) by recognizing that the expectation of the exponential is the moment generating function of a Chi-square distribution of degree $n$, which exists only if $\frac{r  (\snr_0-\snr)}{2 \ \snr_0}  <\frac{1}{2}$.

Next, we let $t=\frac{\snr_0-\snr}{\snr_0}$ and let $r=\frac{t+1}{2t}$ in \eqref{eq:ChangeMeasureANYn}, so that 
$m=\frac{1+t}{1-t}$. Observe that  now  the bound  in \eqref{eq:ChangeMeasureANYn} holds for all values of $\snr \le \snr_0$ since 
\begin{align}
\frac{r(\snr_0-\snr)}{\snr_0} =r t= \frac{(t+1)t}{2t}=\frac{t+1}{2} <1.
\end{align}
With the choice of $m=\frac{1+t}{1-t}$ the bound in  \eqref{eq:ChangeMeasureANYn} becomes
\begin{align*}
 \mmpe(\X,\snr,\p) & \le \sqrt{\frac{\snr}{\snr_0}} \left(1-\frac{r(\snr_0-\snr)}{\snr_0} \right)^{-\frac{n}{2r}}  n^{\frac{1}{m}-1} \left(\mmpe(\X,\snr_0,m \cdot \p) \right)^{\frac{1}{m}}\\
 & = \sqrt{\frac{\snr}{\snr_0}} \left(1-r t \right)^{-\frac{nt}{t+1}} n^{\frac{-2t}{1+t}}  \left(\mmpe \left(\X,\snr_0, \frac{1+t}{1-t} \cdot \p\right) \right)^{\frac{1-t}{1+t}}\\
  &=  \left( \frac{2^n}{n^2} \right)^{\frac{t}{t+1}}\left( \frac{\snr_0}{\snr}\right)^{\frac{nt}{t+1}-\frac{1}{2}} \left(\mmpe \left(\X,\snr_0,\frac{1+t}{1-t} \cdot \p \right) \right)^{\frac{1-t}{1+t}} .
  \end{align*}
  This concludes the proof.  
\end{proof}
The bound in Proposition~\ref{prop:MMSE at low snr with MMPE} is the key in showing  new bounds on the phase transitions region for the MMSE, presented in the next section.

As an application of Proposition~\ref{prop:MMSE at low snr with MMPE} we show that the  MMPE is a continuous function of SNR. 

\begin{prop} 
\label{prop:continuitySNR}
For fixed $\X$ and $\p$, $\mmpe(\X,\snr,\p)$ is a continuous function of $\snr>0$. 
\end{prop}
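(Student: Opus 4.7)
Monotonicity of $\mmpe(\X,\cdot,\p)$ in $\snr$ (the Corollary following Proposition~\ref{prop:max ration comb}) ensures that the one-sided limits $L:=\lim_{\snr\to\snr_0^-}\mmpe(\X,\snr,\p)$ and $R:=\lim_{\snr\to\snr_0^+}\mmpe(\X,\snr,\p)$ exist at every $\snr_0>0$ and sandwich $M:=\mmpe(\X,\snr_0,\p)$ as $L\ge M\ge R$. The plan is to close both inequalities by leveraging the complementary SCPP bound (Proposition~\ref{prop:MMSE at low snr with MMPE}) together with the continuity of $\mmpe(\X,\snr_0,\cdot)$ in the order variable (Proposition~\ref{prop:continuity}).

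For the left-limit inequality $L\le M$, I would apply Proposition~\ref{prop:MMSE at low snr with MMPE} directly with low SNR $\snr<\snr_0$ and high SNR $\snr_0$. Setting $t=(\snr_0-\snr)/\snr_0\to 0^+$, the resulting bound
\[
\mmpe(\X,\snr,\p)\le\kappa_{n,t}\,\mmpe^{(1-t)/(1+t)}\!\Bigl(\X,\snr_0,\tfrac{1+t}{1-t}\p\Bigr)
\]
has its right-hand side tending to $M$, because $\kappa_{n,t}\to 1$, the exponent $(1-t)/(1+t)\to 1$, the shifted order $\tfrac{1+t}{1-t}\p\to\p$, and $\mmpe(\X,\snr_0,\cdot)$ is continuous at $\p$ by Proposition~\ref{prop:continuity}.

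The right-limit inequality $M\le R$ is the step I expect to be the main obstacle. A naive application of Proposition~\ref{prop:MMSE at low snr with MMPE} with low SNR $\snr_0$ and high SNR $\snr'>\snr_0$ yields $M\le\kappa_{n,t}[\mmpe(\X,\snr',q)]^{(1-t)/(1+t)}$ with $q=\tfrac{1+t}{1-t}\p$, and passing to the limit as $\snr'\to\snr_0^+$ would require controlling the joint $(\snr',q)$-limit of $\mmpe(\X,\snr',q)$, which cannot be decoupled from the $\snr$-continuity at order $\p$ we are trying to prove. The trick I plan to use is to preselect the order so that this rescaling cancels: applying Proposition~\ref{prop:MMSE at low snr with MMPE} at the pre-shifted order $\tfrac{1-t}{1+t}\p$ (instead of $\p$) produces
\[
\mmpe\!\Bigl(\X,\snr_0,\tfrac{1-t}{1+t}\p\Bigr)\le\kappa_{n,t}\,\mmpe^{(1-t)/(1+t)}(\X,\snr',\p),
\]
in which the right-hand side now evaluates MMPE at exactly order $\p$. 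Letting $\snr'\to\snr_0^+$, the left-hand side tends to $M$ by Proposition~\ref{prop:continuity} applied at the fixed SNR $\snr_0$, while the right-hand side tends to $R$ (using continuity of $x\mapsto x^\alpha$ at $x=R>0$ with $\alpha\to 1$, and the degenerate case $R=0$ being handled by $0^\alpha=0$ for $\alpha\in(0,1]$). This yields $M\le R$, closing the argument.
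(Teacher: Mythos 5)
Your proof is correct, and it uses the same key ingredients as the paper's proof: the complementary SCPP bound of Proposition~\ref{prop:MMSE at low snr with MMPE}, continuity in the order $\p$ from Proposition~\ref{prop:continuity}, and monotonicity in $\snr$. Your left-limit argument is essentially identical to the paper's entire proof. Where you genuinely diverge is in the right-limit direction: the paper writes ``assume without loss of generality that $\snr_0 \ge \snr$'' and only carries out the computation with the point of interest $\snr_0$ fixed and $\snr \uparrow \snr_0$, which establishes left-continuity. The symmetry claim is not as innocuous as it looks, because for $\snr' \downarrow \snr_0$ the bound $\mmpe(\X,\snr_0,\p) \le \kappa_{n,t}\,\mmpe^{(1-t)/(1+t)}(\X,\snr',\tfrac{1+t}{1-t}\p)$ has both the SNR argument and the order argument moving simultaneously, so step c) of the paper's proof (continuity in $\p$ at a \emph{fixed} SNR) does not directly apply; a monotone function that is merely left-continuous everywhere can still have jumps. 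Your device of applying Proposition~\ref{prop:MMSE at low snr with MMPE} at the pre-shifted order $\tfrac{1-t}{1+t}\p$, so that the order rescaling cancels on the right-hand side and the order perturbation is transferred to the side where the SNR is fixed, is a clean and correct way to close this case (and your handling of the degenerate limit $R=0$ is fine, since $\mmpe(\X,\snr',\p)$ is bounded above by $\mmpe(\X,\snr_0,\p)<\infty$). In short: same toolbox, but your write-up is more complete than the paper's own proof on the right-continuity half.
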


\begin{proof}
Assume without loss of generality that $\snr_0 \ge \snr$ 
\begin{align*}
&\lim_{\snr \to \snr_0}  |\mmpe(\X,\snr,\p)-\mmpe(\X,\snr_0,\p)|  \\
&\stackrel{a)}{=} \lim_{\snr \to \snr_0}  \mmpe(\X,\snr,\p)-\mmpe(\X,\snr_0,\p) \\
& \stackrel{b)}{\le}    \lim_{\snr \to \snr_0}  \kappa _{n,t} \ \mmpe^{\frac{1-t}{1+t}} \left(\X,\snr_0,\frac{1+t}{1-t} \cdot \p \right)-\mmse(\X,\snr_0,\p) \\
& \stackrel{c)}{=}\mmse(\X,\snr_0,\p)-\mmse(\X,\snr_0,\p)=0,
\end{align*}
where the (in)-equalities follow from: a) since the MMPE is a decreasing function of SNR and since $\snr_0 \ge \snr$; b) by using Proposition~\ref{prop:MMSE at low snr with MMPE}; and c)  by definition of $t$ in  Proposition~\ref{prop:MMSE at low snr with MMPE} we have that $ \lim_{\snr \to \snr_0} t=0$ and $ \lim_{\snr \to \snr_0} k_{n,t} =1$, and by continuity of the MMPE in $\p$ from Proposition~\ref{prop:continuity}.
This concludes the proof. 
\end{proof}

\section{Applications}
\label{sec: applications}

We next show how the MMPE  can be used to derive tighter versions of some well known  bounds. It is important to point out that even though the focus of this paper is on the AWGN setting, the results that follow (Theorem~\ref{prop:generalization of continuous Fano's inequality}, Theorem~\ref{prop:OWimproved} and Theorem~\ref{prop: Gap in Ozarow-Wyner at n infinity}) apply to any additive channel model in which the noise is an absolutely continuous random variable, without the need for the i.i.d. assumption.

\subsection{Bounds on the Differential Entropy}
\label{sec:boundDiffEntropy}
For any random vector $\U$ such that  $|\K_{\U}| < \infty, \ h( \U) < \infty,$ and any random vector $\V$, 
the following inequality is  considered to be a continuous analog of Fano's inequality~\cite{Cover:InfoTheory}:
\begin{align}  
 h(\U|\V) &\le \frac{n}{2} \log(2 \pi \eu  \ |\K_{\U| \V}|^{\frac{1}{n}} )  \\
 &\le  \frac{n}{2} \log(2 \pi \eu  \ \mmse(\U|\V) ),
\label{eq: mmse and entropy cov}
\end{align}
where  the inequality in \eqref{eq: mmse and entropy cov} is a consequence of the arithmetic-mean geometric-mean inequality, that is,  for any $0 \preceq {\bf A}$ we have used 
$| {\bf A} |^\frac{1}{n} = \left( \prod_{i=1}^n \lambda_i \right)^{\frac{1}{n}} \le \frac{\sum_{i=1}^n \lambda_i}{n}= \frac{\Trc({\bf A})}{n}$ 
where $\lambda_i$'s are the eigenvalues of ${\bf A}$. 

By applying \eqref{eq: mmse and entropy cov} to the AWGN setting, for any $\X$  such that $|\K_{\X}| < \infty, \ h( \X) < \infty,$ by using Proposition~\ref{prop:higher moments bound 1} with $q=1$, we can arrive at the trivial bound: for any $\p \ge 2$ 
\begin{align}
h(\X|\Y) \le  \frac{n}{2} \log\left(2 \pi \eu \cdot  n^ \frac{2-\p}{\p} \cdot \mmpe^ \frac{1}{p}(\X,\snr,\p) \right). 
\label{eq: trivial MMPE and entropy}
\end{align}

\noindent 
Next, we show that 
the inequality in \eqref{eq: mmse and entropy cov} can be generalized in terms of the norm in \eqref{eq: defintion of the norm}, and 
the trivial bound in \eqref{eq: trivial MMPE and entropy} can be improved. 
\begin{theorem}
\label{prop:generalization of continuous Fano's inequality}
For any $\U \in \mathbb{R}^n$ such that $  h(\U) < \infty$ and $ \| {\U} \|_\p < \infty$  for some $\p \in (0,\infty)$,  and for any $\V \in  \mathbb{R}^n$, we have
\begin{align}
h( \U | \V) \le \frac{n}{2}  \log \left( k_{n,\p}^2  \cdot n^{\frac{2}{\p}} \cdot  \mmpe^{\frac{2}{\p}}(\U|\V;p) \right),
\end{align} 
  where
\begin{align}
k_{n,\p} &:= \frac{ \sqrt{\pi}\left( \frac{\p}{n} \right)^{\frac{1}{\p}}\eu^{\frac{1}{\p}} \Gamma^{\frac{1}{n}} \left( \frac{n}{\p}+1\right)}{ \Gamma^{\frac{1}{n}}\left( \frac{n}{2}+1 \right)}= \sqrt{2 \pi \eu}  \frac{1}{n^{\frac{1}{2}} \left( \frac{\p}{2}\right)^{\frac{1}{2n}}} + o\left( \frac{n}{\p} \right).\label{eq: constant for moment entropy inequality}
\end{align}
\end{theorem}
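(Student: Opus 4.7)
My approach is to (i) reduce the conditional entropy $h(\U\,|\,\V)$ to the unconditional entropy of a suitably centered random vector, and then (ii) apply a maximum-entropy bound under a $\p$-th moment constraint.

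For step (i), let $\W := \U - f_\p(\U\,|\,\V)$, where $f_\p(\U\,|\,\V)$ denotes the optimal MMPE estimator, which exists by Proposition~\ref{prop:existence of optimal estimator}. Since $f_\p(\U\,|\,\V)$ is a deterministic function of $\V$, translation invariance of differential entropy gives $h(\U\,|\,\V) = h(\W\,|\,\V)$, and conditioning reduces entropy gives $h(\W\,|\,\V) \le h(\W)$. By the definition of the conditional MMPE we have $\|\W\|_\p^\p = \mmpe(\U\,|\,\V;\p)$, equivalently $\E[\,\|\W\|^\p\,] = n\cdot\mmpe(\U\,|\,\V;\p)$, where $\|\W\|$ denotes the Euclidean norm.

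For step (ii), I appeal to the moment-entropy inequality of \cite{MomentsEntropyInequality}: among all $\mathbb{R}^n$-valued random vectors with a prescribed value of $\E[\,\|\cdot\|^\p\,]$, the differential entropy is maximized by the generalized Gaussian density $g(w) = C\exp(-c\|w\|^\p)$, with $C$ and $c$ chosen to enforce normalization and the moment constraint. The quickest route to this is a Gibbs-type argument: because $\log g$ is affine in $\|w\|^\p$, one has $\E_f[\log g] = \E_g[\log g]$ for every density $f$ with the same $\p$-th moment, so non-negativity of the relative entropy $D(f\|g)$ forces $h(f) \le h(g)$.

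The remaining work is an explicit computation of $h(g)$. Using polar coordinates with surface area $2\pi^{n/2}/\Gamma(n/2)$ together with the identity $\int_0^\infty r^{\alpha-1}e^{-c r^\p}\,dr = \Gamma(\alpha/\p)/(\p\,c^{\alpha/\p})$, a short calculation yields
\begin{align*}
C \;=\; \frac{\Gamma(n/2+1)}{\pi^{n/2}\,\Gamma(n/\p+1)}\,c^{\,n/\p}, \qquad \E_g\!\left[\,\|W\|^\p\,\right] \;=\; \frac{n}{\p\,c}.
\end{align*}
Equating $\E_g[\,\|W\|^\p\,]$ with $n\|\W\|_\p^\p$ pins down $c$, and then $h(g) = -\log C + c\,\E_g[\|W\|^\p]$ collapses, after using $\Gamma(n/\p+1)=(n/\p)\Gamma(n/\p)$ to absorb the factor $n^{2/\p}$, into the claimed form $\tfrac{n}{2}\log\!\bigl(k_{n,\p}^2\,n^{2/\p}\,\|\W\|_\p^2\bigr)$, where $k_{n,\p}$ is precisely as in \eqref{eq: constant for moment entropy inequality}. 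The asymptotic expansion of $k_{n,\p}$ then follows by Stirling applied to $\Gamma^{1/n}(n/\p+1)$ and $\Gamma^{1/n}(n/2+1)$. The main obstacle I foresee is therefore only bookkeeping---matching the various factors of $\pi$, $e$, $\p$, $n$, and the Gamma-function ratios to the exact form of $k_{n,\p}$---rather than any deep new idea beyond the invocation of the moment-entropy inequality.
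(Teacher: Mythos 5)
Your proposal is correct and lands on exactly the right constant (I checked the generalized-Gaussian computation: with $c=1/(\p\,\|\W\|_\p^\p)$ one gets $h_{\eu}(g)=\tfrac{n}{2}\ln\pi+\ln\Gamma(\tfrac{n}{\p}+1)-\ln\Gamma(\tfrac{n}{2}+1)+\tfrac{n}{\p}\ln\p+\tfrac{n}{\p}+n\ln\|\W\|_\p$, which matches $n\ln(k_{n,\p}\,n^{1/\p}\|\W\|_\p)$ term by term). The core ingredient is the same as the paper's --- the moment--entropy inequality of \cite{MomentsEntropyInequality}, which you re-derive self-containedly via the Gibbs/maximum-entropy argument rather than citing it --- but the way you handle the conditioning is genuinely different. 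The paper applies the moment--entropy inequality \emph{pointwise} to the conditional law $\U_{\bf v}-g({\bf v})$ for each ${\bf v}$, obtaining $h(\U_{\bf v})\le n\log(k_{n,\p}n^{1/\p}\|\W_{\bf v}\|_\p)$, and then pushes the outer expectation inside the logarithm with Jensen's inequality; you instead set $\W=\U-f_\p(\U|\V)$ once, use translation invariance plus ``conditioning reduces entropy'' to get $h(\U|\V)\le h(\W)$, and apply the maximum-entropy bound a single time to the unconditional $\W$. The two chains incur different intermediate losses (the paper loses in the Jensen step, you lose $I(\W;\V)$ in the conditioning step) but terminate at the identical bound, since $\|\W\|_\p^\p=\E_\V\bigl[\|\W_{\bf v}\|_\p^\p\bigr]=\mmpe(\U|\V;\p)$. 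Your route has the minor advantage of being self-contained and of avoiding the Jensen step entirely; the paper's route keeps the argument conditional throughout, which is the form it reuses in the Ozarow--Wyner application. One small point of rigor you should make explicit: the Gibbs comparison $h(f)\le h(g)$ requires $D(f\|g)<\infty$ to be meaningful, which here is guaranteed because $h(\W)>-\infty$ is not needed for the upper bound and $\E_f[\log g]$ is finite by the moment constraint $\|\W\|_\p<\infty$.
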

\begin{proof}
See Appendix~\ref{proof:generalization of continuous Fano's inequality}.
\end{proof}

{Note that the result in  Theorem~\ref{prop:generalization of continuous Fano's inequality}   holds in great generality, i.e., the AWGN assumption is not necessary. As an application of Theorem~\ref{prop:generalization of continuous Fano's inequality} to the AWGN setting we have the following stronger version of the inequality in \eqref{eq: trivial MMPE and entropy}. }
\begin{corollary} For any $\X$ such that $h(\X) < \infty$ and $\| \X \|_p < \infty$ for some $\p \in (0,\infty)$, we have that 
 \begin{align*}
h(\X| \Y) \le \frac{n}{2}  \log \left( k_{n,\p}^2  \cdot n^{\frac{2}{\p}} \cdot \mmpe^{\frac{2}{\p}}(\X,\snr,\p) \right),
 \end{align*}
 where  $k_{n,\p}^2 $ is defined in \eqref{eq: constant for moment entropy inequality}.
 \end{corollary}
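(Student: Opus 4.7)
The plan is to obtain this corollary as an immediate instance of Theorem~\ref{prop:generalization of continuous Fano's inequality}. Set $\U = \X$ and $\V = \Y$, where $\Y = \sqrt{\snr}\,\X + \Z$ is the AWGN output defined in \eqref{eq:channel model}. The theorem's hypotheses ask for $h(\U) < \infty$ and $\|\U\|_\p < \infty$, and both are supplied verbatim by the corollary's assumptions on $\X$. Substituting into the theorem yields
\begin{align*}
h(\X \mid \Y) \le \frac{n}{2} \log\!\left( k_{n,\p}^2 \cdot n^{\frac{2}{\p}} \cdot \mmpe^{\frac{2}{\p}}(\X \mid \Y;\p) \right),
\end{align*}
and the notational convention $\mmpe(\X\mid\Y;\p) = \mmpe(\X,\snr,\p)$ introduced after Definition~\ref{def:MMPE} converts this into the claimed expression.

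A brief sanity check is that the right-hand side is finite, so that the bound is not vacuous. By Proposition~\ref{prop:higher moments bound 1}, in particular the bound \eqref{eq: p-th error bounds}, we have $\mmpe(\X,\snr,\p) \le \|\X\|_\p^{\p}$, which is finite under the assumption $\|\X\|_\p < \infty$. The constant $k_{n,\p}$ is finite and strictly positive for all admissible $n$ and $\p$, so the logarithm is well-defined.

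Because the entire content of the corollary is carried by the theorem, the only real obstacle here is establishing the theorem itself (done in Appendix~\ref{proof:generalization of continuous Fano's inequality}); no genuinely new step is required in the AWGN specialization. It is worth noting, however, that this specialization is the sharper replacement for the trivial chain in \eqref{eq: trivial MMPE and entropy}, which arose from combining the continuous Fano inequality \eqref{eq: mmse and entropy cov} with the ordering bound between $\mmpe(\X,\snr,1)$ and $\mmpe(\X,\snr,\p)$; the corollary bypasses that loss by going directly from $h(\X\mid\Y)$ to the $\p$-th order MMPE through the moment-entropy machinery embedded in Theorem~\ref{prop:generalization of continuous Fano's inequality}.
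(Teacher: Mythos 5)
Your proof is correct and matches the paper's own argument exactly: the corollary is obtained by specializing Theorem~\ref{prop:generalization of continuous Fano's inequality} to $\U=\X$ and $\V=\Y$, with the hypotheses of the theorem supplied directly by those of the corollary. The additional finiteness sanity check via Proposition~\ref{prop:higher moments bound 1} is a harmless bonus not present in the paper.
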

\begin{proof}
The proof follows by setting $\U=\X$ and   $\V=\Y$ in the statement of Theorem~\ref{prop:generalization of continuous Fano's inequality}.
\end{proof}

\subsection{Generalized Ozarow-Wyner Bound} 
\label{sec:genOWbound}
In~\cite{PAMozarow} the following ``Ozarow-Wyner lower bound'' on  the mutual information achieved by a discrete input $X_D$ 
transmitted over an AWGN channel was shown:
\begin{subequations}
\begin{align}
 &[H(X_D) -\gap]^{+} \le I(X_D;Y) \le H(X_D),  \label{eq:gap}
 \\
 &\gap \le \frac{1}{2} \log \left( \frac{\pi \eu}{6}\right) +\frac{1}{2} \log \left(1+\frac{ {\rm lmmse}(X,\snr)}{d_{\min}(X_D)^2} \right), 
\end{align} 
\label{eq:gap LMMSE}
\end{subequations}
where ${\rm lmmse}(X|Y)$ is the LMMSE. 
The advantage of the bound in~\eqref{eq:gap LMMSE} compared to existing bounds is its computational simplicity. 
The bound  on the $\gap$ in \eqref{eq:gap LMMSE} has been sharpened in \cite[Remark~2]{DytsoTINPublished} to 
\begin{align}
 \gap \le \frac{1}{2} \log \left( \frac{\pi \eu}{6}\right) +\frac{1}{2} \log \left(1+\frac{ \mmse(X,\snr)}{d_{\min}(X_D)^2} \right),
\end{align} 
since ${\rm lmmse}(X,\snr) \ge  \mmse(X,\snr)$.

Next, we generalize the bound in~\eqref{eq:gap LMMSE} to  discrete vector inputs and give the sharpest known bound on the gap term. 
\begin{theorem} \label{prop:OWimproved} (\emph{Generalized Ozarow-Wyner Bound})
Let $\X_D$ be  a discrete random vector with finite entropy, such that  $p_i=\mathbb{P}[\X_D= {\bf x}_i ]$, and ${\bf x}_i \in \supp(\X_D)$,  and let $\mathcal{K}_p $ be a set of  continuous random vectors, independent of $\X_D$, such that for every $\U \in\mathcal{K} $,   $ h(\U), \| \U \|_{\p} < \infty$,  and 
\begin{subequations}
\begin{align}
&\supp (\U+{\bf  x}_i)  \cap  \supp (\U+{\bf  x}_j) =\emptyset, \notag \\
& \ \forall \ {\bf x}_i, {\bf x}_j \in \supp(\X_D), i \neq j. \label{eq: assumption on U}
\end{align}
 Then for any $p > 0$
\begin{align}
 [H(\X_D) -\mathsf{gap}_\p]^{+} \le I(\X_D;\Y) \le H(\X_D), \label{eq: OWbound}
\end{align} 
where 
\begin{align}
 n^{-1} \mathsf{gap}_\p& \le  \inf_{\U \in \mathcal{K}_\p }  \left( G_{1,\p}(\U,\X_D) + G_{2,\p}(\U)  \right),  \notag\\
G_{1,\p}(\U,\X_D)&=  \log \left(   \frac{  \|  \U+\X_D-f_\p(\X_D|\Y) \|_\p}{ \| \U\|_\p} \right)  \\
& \stackrel{ \text{ for  $\p \ge 1$} }{\le}     \log \left( 1+\frac{ \mmpe^{\frac{1}{\p}}(\X_D,\snr,\p) }{\| \U \|_\p}  \right), \\
G_{2,\p}(\U)&= \log \left(  \frac{  k_{n,\p}  \cdot n^{\frac{1}{\p}} \cdot  \| \U \|_\p}{\eu^{ \frac{1}{n} h_{\eu}(\U)} }\right).
\end{align}
\label{eq:gapOWgeneral}
\end{subequations} 
\end{theorem}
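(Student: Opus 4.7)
The upper bound $I(\X_D;\Y) \le H(\X_D)$ is immediate since $\X_D$ is discrete, and the $[\cdot]^{+}$ on the lower bound is free from the trivial bound $I(\X_D;\Y) \ge 0$. The substance of the theorem is therefore the lower bound, equivalently an upper bound on $H(\X_D\mid\Y)$, for which I plan to adapt the classical Ozarow--Wyner dithering argument to the vector $L_\p$ setting.

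First, I would fix any $\U \in \mathcal{K}_\p$ and exploit the disjoint-support hypothesis \eqref{eq: assumption on U}: since the translates $\supp(\U + {\bf x}_i)$ are pairwise disjoint, a realization of $\X_D + \U$ determines $\X_D$ a.s., so $H(\X_D \mid \X_D + \U) = 0$. Using this together with the independence of $\U$ from $(\X_D,\Y)$, I would write
\begin{align*}
H(\X_D\mid\Y)
= I(\X_D;\X_D+\U\mid\Y)
= h(\X_D+\U\mid\Y) - h(\X_D+\U\mid\Y,\X_D)
= h(\X_D+\U\mid\Y) - h(\U).
\end{align*}
Then, for any deterministic $g:\mathbb{R}^n\to\mathbb{R}^n$, translation invariance and ``conditioning reduces differential entropy'' give $h(\X_D+\U\mid\Y) = h(\X_D+\U-g(\Y)\mid\Y) \le h(\X_D+\U-g(\Y))$.

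The next step is to convert this differential entropy into a moment, which is exactly the content of the moment-entropy inequality underlying Theorem~\ref{prop:generalization of continuous Fano's inequality}: for any $\V\in\mathbb{R}^n$ with finite differential entropy and finite $\p$-th norm,
\begin{align*}
\eu^{h(\V)/n} \;\le\; k_{n,\p}\cdot n^{1/\p}\cdot \|\V\|_\p .
\end{align*}
Applying this with $\V = \X_D+\U-g(\Y)$ and specializing $g(\Y)=f_\p(\X_D\mid\Y)$ (the MMPE-optimal estimator, which exists by Proposition~\ref{prop:existence of optimal estimator}), dividing by $n$, and adding and subtracting $\log\|\U\|_\p$ regroups the bound as
\begin{align*}
n^{-1}H(\X_D\mid\Y)
\;\le\; \log\!\left(\frac{\|\U+\X_D-f_\p(\X_D\mid\Y)\|_\p}{\|\U\|_\p}\right) + \log\!\left(\frac{k_{n,\p}\,n^{1/\p}\|\U\|_\p}{\eu^{h(\U)/n}}\right) = G_{1,\p}(\U,\X_D) + G_{2,\p}(\U),
\end{align*}
which is the claimed bound for general $\p>0$. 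For $\p\ge 1$, I would then invoke the Minkowski inequality \eqref{eq: Minkowski} on $\|\U+(\X_D-f_\p(\X_D\mid\Y))\|_\p$ to split it into $\|\U\|_\p + \mmpe^{1/\p}(\X_D,\snr,\p)$, yielding the refined bound $G_{1,\p}(\U,\X_D)\le \log\bigl(1+\mmpe^{1/\p}(\X_D,\snr,\p)/\|\U\|_\p\bigr)$. Finally, since $\U\in\mathcal{K}_\p$ was arbitrary, I would take the infimum.

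The main technical obstacle is not any single calculation but the careful verification that the dithering identity $H(\X_D\mid\X_D+\U)=0$ is exactly what is purchased by the somewhat unusual support hypothesis \eqref{eq: assumption on U}, and that all of the entropies and norms appearing in the manipulation are finite so the additive/subtractive steps in $h(\X_D+\U\mid\Y)-h(\U)$ are legitimate; finiteness of $h(\U)$ and $\|\U\|_\p$ is built into the definition of $\mathcal{K}_\p$, and finiteness of $h(\X_D+\U\mid\Y)$ follows because $\X_D$ is discrete with finite entropy while $\U$ is continuous with finite $\p$-th moment, so the moment-entropy inequality gives a finite upper bound at the relevant step. Everything else is a direct application of standard properties of $h(\cdot)$, of the norm \eqref{eq: defintion of the norm}, and of Theorem~\ref{prop:generalization of continuous Fano's inequality}.
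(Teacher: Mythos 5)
Your proof is correct and follows essentially the same route as the paper's: dither by $\U$, use the disjoint-support assumption to identify $H(\X_D\mid\Y)$ with $h(\X_D+\U\mid\Y)-h(\U)$, bound the conditional differential entropy via the moment--entropy inequality of Theorem~\ref{prop:generalization of continuous Fano's inequality} with $g=f_\p(\X_D\mid\Y)$, and finish with the triangle inequality for $\p\ge 1$. The only (immaterial) differences are that the paper organizes the entropy decomposition through $I(\X_D;\Y)\ge I(\X_D+\U;\Y)$ and the identity $h(\X_D+\U)=H(\X_D)+h(\U)$ rather than through $I(\X_D;\X_D+\U\mid\Y)$, and it applies the conditional form of the moment--entropy bound directly instead of first dropping the conditioning.
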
 
\begin{proof}
See Appendix~\ref{app:prop:OWimproved}.
\end{proof} 

It is interesting to note that the lower bound in \eqref{eq: OWbound} resembles the bound for lattice codes in \cite[Theorem 1]{Forney:ShannonWinener:2003}, where $\U$ can be thought of as dither,  $G_{2,\p}$ corresponds to the log of the normalized $\p$-moment of a compact region in $\mathbb{R}^n$,  $G_{1,\p}$  corresponds  to the log  of the normalized MMSE term, and $H(\X_D)$ corresponds with the capacity $C$.  

In order to show the advantage of Theorem~\ref{prop:OWimproved} over the original Ozarow-Wyner bound (case of $n=1$ and with LMMSE instead of MMPE),  we consider $X_D$ uniformly distributed with the number of points equal to $N= \lfloor\sqrt{1+\snr} \rfloor$, that is, we choose the number of points such that $H(X_D) \approx \frac{1}{2} \log(1+\snr)$.  
Fig.~\ref{fig:GapU} shows:
\begin{itemize} 
\item 
The solid cyan line is the ``shaping loss'' $\frac{1}{2}\log \left( \frac{\pi \eu}{6} \right)$ for a one-dimensional infinite lattice and is the limiting gap if the number of points $N$ grows faster than $\sqrt{\snr}$;
\item 
The solid magenta line is the gap in the original Ozarow-Wyner bound  in \eqref{eq:gap LMMSE}; and
\item 
The dashed purple,  dashed-dotted blue and dotted green lines are the new gap due to Theorem~\ref{prop:OWimproved} for value of $\p=2,4,6$, respectively, and where we chose  $U \sim \mathcal{U} \left[-\frac{d_{\min(X_D)}}{2},\frac{d_{\min(X_D)}}{2} \right]$.
\end{itemize} 
 We note that the  version of the Ozarow-Wyner bound in Theorem~\ref{prop:OWimproved} provides the sharpest bound for the gap term.
An open question, for $n=1$, is what value of $\p$  provides the smallest gap and if that  coincide with the ultimate ``shaping loss''.

\begin{figure}
        \centering
        \includegraphics[width=7.5cm]{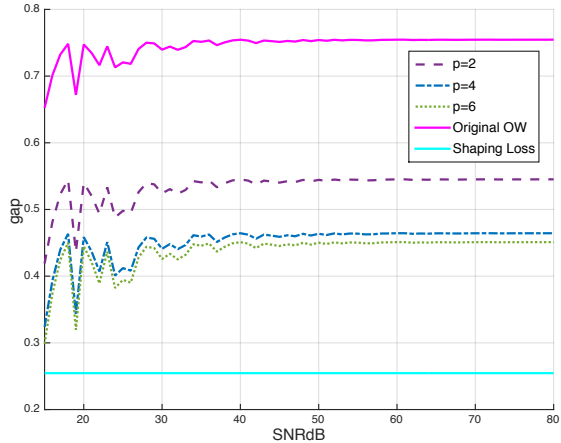}
        \caption{ Gap  in equation \eqref{eq:gap}  and \eqref{eq:gapOWgeneral} vs. $\snr$.}
       \label{fig:GapU}
\end{figure}

Next we turn our attention to the case of $n>1$. Another interesting question is how the gap behaves as  $n \to \infty$. 

\begin{theorem}
\label{prop: Gap in Ozarow-Wyner at n infinity}
Let $\U$  be uniform  over the ball of radius $ r=\frac{d_{\min}(\X_D)}{2}$ then  for any $\p > 0$ 
\begin{subequations}
\begin{align}
G_{2,\p} (\U) = O \left( \frac{1}{n}\log \left( \frac{n}{\p} \right)\right). 
\end{align}
and therefore  $ \lim_{n \to \infty} G_{2,\p} (\U)=0$. 
Therefore, 
\begin{align}
  \frac{1}{n} H(\X_D) \ge \frac{1}{n} I(\X_D; \Y) \ge    \frac{1}{n} H(\X_D)-    G_{1,\p} (\U, \X_D) - O \left( \frac{1}{n}\log \left( \frac{n}{\p} \right)\right),
\end{align} 
where 
\begin{align}
\eu^{ G_{1,\p} (\U, \X_D)} \stackrel{\text{ for $\p \ge 1$}}{ \le } 1+2 \frac{ d_{\max}(\X_D)}{ d_{\min}(\X_D)} \sqrt[\p]{   \frac{ (\p+n)     }{ n  } \bar{Q}\left( \frac{n}{2}; \frac{\snr d^2_{ \min}(\X_D) }{8}\right)}.
\end{align}
\end{subequations}
\end{theorem}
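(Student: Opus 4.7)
The plan is to evaluate the two terms $G_{1,\p}(\U,\X_D)$ and $G_{2,\p}(\U)$ appearing in Theorem~\ref{prop:OWimproved} under the specific choice of $\U$ uniform on the ball $B(r)$ with $r=d_{\min}(\X_D)/2$. The first piece $G_{2,\p}$ is purely geometric, depending only on $\|\U\|_\p$ and $h_\eu(\U)$, and can be reduced to an asymptotic estimate of ratios of Gamma functions. The second piece $G_{1,\p}$ is plugged into the bound already stated in Theorem~\ref{prop:OWimproved} and combined with the exponentially decaying bound on $\mmpe(\X_D,\snr,\p)$ from Proposition~\ref{prop:bound on discrete inputs}.

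For $G_{2,\p}(\U)$, first substitute the two explicit quantities available for the uniform ball: from \eqref{eq: mommentsUniform}, $\|\U\|_\p^\p = r^\p/(\p+n)$, hence $\|\U\|_\p = r(\p+n)^{-1/\p}$; and since $\U$ is uniform, $h_\eu(\U) = \ln {\rm Vol}(B(r)) = \ln\bigl(\pi^{n/2} r^n / \Gamma(n/2+1)\bigr)$, so $\eu^{h_\eu(\U)/n} = \sqrt{\pi}\, r/\Gamma^{1/n}(n/2+1)$. Inserting these together with the expression for $k_{n,\p}$ into the definition of $G_{2,\p}$, the factors $r$, $\sqrt{\pi}$, and $\Gamma^{1/n}(n/2+1)$ all cancel, leaving
\begin{align*}
G_{2,\p}(\U) \;=\; \frac{1}{\p}\log\!\left(\frac{\eu\, n}{\p+n}\right) \;-\;\frac{1}{\p} \;+\; \frac{1}{n}\log\Gamma\!\left(\tfrac{n}{\p}+1\right).
\end{align*}
Applying Stirling in the form $\log\Gamma(x+1)=x\log x-x+O(\log x)$ with $x=n/\p$ gives $\tfrac{1}{n}\log\Gamma(n/\p+1) = \tfrac{1}{\p}\log(n/\p)-\tfrac{1}{\p} + O(\log(n/\p)/n)$, while the Taylor expansion $\log(\eu n/(\p+n)) = 1-\log(1+\p/n) = 1-\p/n+O(\p^2/n^2)$ yields $\tfrac{1}{\p}\log(\eu n/(\p+n)) = \tfrac{1}{\p}-\tfrac{1}{n}+O(\p/n^2)$. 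The two $\tfrac{1}{\p}\log(\cdot)$ contributions combine to $\tfrac{1}{\p}\log(n\eu/\p)$, and this is exactly cancelled by the $-\tfrac{2}{\p}$ and $+\tfrac{1}{\p}\log(n/\p)-\tfrac{1}{\p}$ terms, giving $G_{2,\p}(\U) = O\bigl(n^{-1}\log(n/\p)\bigr)$, which tends to $0$ as $n\to\infty$.

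For $G_{1,\p}(\U,\X_D)$ with $\p\ge 1$, start from the bound already established in Theorem~\ref{prop:OWimproved}, namely $\eu^{G_{1,\p}(\U,\X_D)}\le 1+\mmpe^{1/\p}(\X_D,\snr,\p)/\|\U\|_\p$. Insert $\|\U\|_\p = (d_{\min}(\X_D)/2)(\p+n)^{-1/\p}$ and the discrete bound of Proposition~\ref{prop:bound on discrete inputs}, $\mmpe(\X_D,\snr,\p) \le d_{\max}^\p(\X_D)\,\bar Q(n/2;\snr d_{\min}^2(\X_D)/8)/n$. Dividing the $\p$-th root of this by $\|\U\|_\p$ produces the factor $2 d_{\max}/d_{\min}$ outside and combines the $(\p+n)^{1/\p}$ with the $1/n^{1/\p}$ into the stated $(\p+n)/n$, yielding precisely the claimed
\begin{align*}
\eu^{G_{1,\p}(\U,\X_D)} \;\le\; 1 + \frac{2 d_{\max}(\X_D)}{d_{\min}(\X_D)} \sqrt[\p]{\frac{\p+n}{n}\,\bar Q\!\left(\tfrac{n}{2};\tfrac{\snr d_{\min}^2(\X_D)}{8}\right)}.
\end{align*}
Combining both estimates in the mutual information inequality of Theorem~\ref{prop:OWimproved} (after dividing by $n$) yields the per-dimension bound in the statement.

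The main obstacle is the asymptotic analysis for $G_{2,\p}$: although the answer is clean, the cancellation of the leading Stirling terms is delicate and must be carried out with enough care to track the subleading $O(\log(n/\p)/n)$ remainder uniformly in $\p$ (so that the bound makes sense for any fixed $\p>0$, including small $\p$ where $n/\p$ is large). The $G_{1,\p}$ part is routine once Theorem~\ref{prop:OWimproved} and Proposition~\ref{prop:bound on discrete inputs} are in hand; the only bookkeeping is arranging the $(\p+n)^{1/\p}$ and $n^{-1/\p}$ factors into the cleanly displayed $(\p+n)/n$ ratio inside the $\p$-th root.
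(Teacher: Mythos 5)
Your overall strategy is sound and your treatment of $G_{1,\p}$ is exactly the paper's: plug $\|\U\|_\p = \frac{d_{\min}(\X_D)}{2}(\p+n)^{-1/\p}$ and Proposition~\ref{prop:bound on discrete inputs} into the bound $\eu^{G_{1,\p}}\le 1+\mmpe^{1/\p}/\|\U\|_\p$ from Theorem~\ref{prop:OWimproved}. For $G_{2,\p}$ you take a slightly different route from the paper (the paper sandwiches the ratio $k_{n,\p}n^{1/\p}\|\U\|_\p/\eu^{h_\eu(\U)/n}$ between $1$, via the moment--entropy inequality, and a Stirling upper bound obtained from the cruder estimate $\|\U\|_\p^\p\le r^\p/n$, whereas you compute the ratio exactly), which is a legitimate and arguably cleaner plan.

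However, your displayed intermediate formula for $G_{2,\p}$ is wrong, and the subsequent cancellation as you describe it does not go through. Carrying out the cancellation of $r$, $\sqrt{\pi}$ and $\Gamma^{1/n}(n/2+1)$ correctly leaves $(\p/n)^{1/\p}\eu^{1/\p}\,n^{1/\p}(\p+n)^{-1/\p}\,\Gamma^{1/n}(n/\p+1)$, i.e.
\begin{align*}
G_{2,\p}(\U) \;=\; \frac{1}{\p}\log\!\left(\frac{\eu\,\p}{\p+n}\right) \;+\; \frac{1}{n}\log\Gamma\!\left(\tfrac{n}{\p}+1\right),
\end{align*}
with $\p$, not $n$, in the numerator of the first logarithm and no extra $-\tfrac{1}{\p}$. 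Your version differs from this by $\tfrac{1}{\p}\log(n/\p)-\tfrac{1}{\p}$, which for fixed $\p$ grows like $\log n$; if it were taken literally, adding the Stirling expansion $\tfrac{1}{n}\log\Gamma(n/\p+1)=\tfrac{1}{\p}\log(n/\p)-\tfrac{1}{\p}+O(\log(n/\p)/n)$ would leave an uncancelled $\tfrac{1}{\p}\log(n/\p)$ and $G_{2,\p}$ would diverge, contradicting the claim. With the corrected formula the argument does close: $\tfrac{1}{\p}\log(\eu\p/(\p+n))=\tfrac{1}{\p}-\tfrac{1}{\p}\log(n/\p)-\tfrac{1}{n}+O(\p/n^2)$, and the $\pm\tfrac{1}{\p}\log(n/\p)$ and $\pm\tfrac{1}{\p}$ terms cancel against the Stirling expansion, leaving $G_{2,\p}(\U)=O(n^{-1}\log(n/\p))$. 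Your verbal account of the cancellation (``cancelled by the $-\tfrac{2}{\p}$ and $+\tfrac{1}{\p}\log(n/\p)-\tfrac{1}{\p}$ terms'') is not consistent with either formula, so you should redo this bookkeeping explicitly.
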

\begin{proof}
See Appendix \ref{app:prop: Gap in Ozarow-Wyner at n infinity}.
\end{proof}

\subsection{New bounds on  the MMSE and Phase Transitions}
\label{prop:phaseTrans:bounds}

The  SCPP is  instrumental in showing the behavior of the MMSE of capacity achieving codes. For example, 
as the length of any capacity achieving code goes to infinity, the MMSE  behaves as follows:
 \begin{align}
\limsup_{ n \to \infty} \mmse(\X,\snr) = \left \{  \begin{array}{ll} 
\frac{1}{1+\snr},  & 0 \le \snr \le \snr_0\\
\frac{\beta}{1+\beta \snr}, & \snr_0  \le \snr \le \snr_1\\
\frac{\gamma}{1+ \gamma \snr},  &\snr \ge \snr_1
\end{array} \right.  ,
\label{eq: optimal codes MMSE}
\end{align} 
as shown:
in~\cite{MerhavStatisticalPhysics}, for the  Gaussian point-to-point channel with the output $Y_{\snr_0}$ with $\beta=\gamma=0$;
in \cite{SNRevolutionOfMMSE}, for the Gaussian BC  with outputs $Y_{\snr_1}$ and $Y_{\snr_0}$, where $\snr_0 \le \snr_1$ and rate pair $(R_1,R_2)= \left(\frac{1}{2} \log(1+\beta \snr_1), \frac{1}{2} \log \left(\frac{1+\snr_0}{1+\beta \snr_0} \right) \right)$ for some $\beta \in [0,1]$, with $\gamma=0$;
in~\cite{SNRevolutionOfMMSE}, for the Gaussian wiretap channel with  outputs $Y_{\snr_0}$ (primary)  and $Y_{\snr_1}$ (eavesdropper) with maximum equivocation $d_{\max}$ and rate $R \ge d_{\max}$, for $\beta=\gamma=0$;
and in~\cite{BustinMMSEbadCodes}, for the Gaussian point-to-point channel with output $Y_{\snr_1}$  and  an MMSE disturbance constraint at $Y_{\snr_0}$ measured 
by $\mmse(\X,\snr_0) \le \frac{\beta}{1+\beta \snr_0}$ for some $\beta \in [0,1]$ with $\gamma=\beta$.
%
The jump discontinuities in~\eqref{eq: optimal codes MMSE} at $\snr=\snr_0$ and $\snr=\snr_1$ are referred to as the \emph{phase transitions}. 

Based on the above, an interesting question is how the MMSE in~\eqref{eq: optimal codes MMSE} behaves for codes of finite length.  In \cite{NewBoundsOnMMSE}, in order to study the  phase transition phenomenon for inputs of finite length, the following optimization problem was proposed:
\begin{definition}
\begin{subequations}
\begin{align}
&\mathrm{M}_n(\snr,\snr_0,\beta) := \sup_{\X}
\mmse(\X,\snr), 
\label{eq:MMSE obj for MMSE}
\\
\text{ s.t. }& \| \X \|_2^2 \le 1, \text{ and } \mmse(\X,\snr_0) \le \frac{\beta}{1+\beta \snr_0},\label{eq:MMSE constr: for MMSE}
\end{align}
\label{eq: mmse constrained MMSE} 
\end{subequations}
for some $\beta \in [0,1]$. 
\end{definition} 
Investigation in~\cite{NewBoundsOnMMSE} revealed that $\mathrm{M}_{n}(\snr,\snr_0,\beta)$ in~\eqref{eq:MMSE obj for MMSE} must be of the following form:
\begin{align*}
\mathrm{M}_n(\snr,\snr_0,\beta)  =  \left \{  \begin{array}{ll} 
\frac{1}{1+\snr}, & \snr \le \snr_L \\
T_n(\snr,\snr_0,\beta), & \snr_L \le  \snr \le \snr_0 \\
\frac{\beta}{1+\beta \snr},& \snr_0 \le \snr
 \end{array} \right.  ,
\end{align*}
for some $\snr_L$ and some function $T_n(\snr,\snr_0,\beta)$, where the region $\snr_L \le \snr \le \snr_0$ is referred to as the \emph{phase transition region} and its width is defined as
$W(n) :=\snr_0-\snr_L.$
In \cite{NewBoundsOnMMSE} the following 
was established for $T_n(\snr,\snr_0,\beta)$ and $W(n)$:
\begin{align}
\mmse(\X,\snr)  &\le  \mmse(\X,\snr_0)+\kappa_n \left( \frac{1}{\snr}-\frac{1}{\snr_0} \right),  
\notag\\
&\text{\rm where }   \kappa_n \le n+2,
\label{eq:mainBound}
\end{align}
and  the width of phase transition region scales as $W(n)=O \left(n^{-1}\right).$

The main result of this subsection is shown next. It uses Propositions~\ref{prop:MMSE at low snr with MMPE}  and Proposition~\ref{prop:logconvex}. 
\begin{theorem} 
\label{prop: bound through MMPE} 
For $ 0 < \snr \le \snr_0$, 
\begin{subequations}
\begin{align}
&\mmse(\X,\snr) \le  \min_{ r>\frac{2}{\gamma}}   \kappa(r,\gamma,n)\left(  \frac{\beta}{1+\beta \snr_0}\right)^{ \frac{ \gamma r-2}{r-2} }, \label{eq: bound in Thm 1}
\end{align}
 where  $\gamma:=\frac{\snr}{2\snr_0-\snr} \in (0,1],$ and 
\begin{align}
 &\kappa(r,\gamma,n):=\frac{\sqrt{2}}{n^{1-\gamma}} \left(\frac{1+\gamma}{\gamma}\right)^{\frac{n(1- \gamma)-1}{2}}   M_r^{\frac{2(1-\gamma)}{r-2}}, \\
&M_r :=  \left\| \X-\E\left[\X | \Y_{\snr_0} \right] \right \|_r^{r} \le 2^{r} \min \left(  \frac{\|  \Z\|_r^{r}}{\snr_0^\frac{r}{2}}, \| \X \|_r^{r} \right),
\end{align}
and where the minimizing $r$ in~\eqref{eq: bound in Thm 1} can be approximated by
\begin{align}
 r_{\text{opt}} \approx  \left \{\begin{array}{ll} 2 \ln \left( \frac{ 4 \eu  }{ \snr_0 \mmse(\X,\snr_0)   } \right), & \frac{2}{\gamma} \le  \ln \left( \frac{ 4 \eu  }{ \snr_0 \mmse(\X,\snr_0)   } \right)\\  
 \frac{2}{\gamma}, & \frac{2}{\gamma} >  \ln \left( \frac{ 4 }{ \snr_0 \mmse(\X,\snr_0)   } \right)\\ 
  \end{array} \right. .  \label{eq: approx optimal r}
\end{align}
Moreover, the width of the phase transition region scales as
\begin{align}
W(n)=O \left( n^{-\frac{1}{2}}\right).
\end{align}
\end{subequations}
\end{theorem}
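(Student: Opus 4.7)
The plan is to chain the two advanced bounds just developed: the change-of-measure Proposition~\ref{prop:MMSE at low snr with MMPE} (which trades ``going down in SNR'' for ``going up in order''), and the interpolation inequality \eqref{eq: interpolation 5} of Proposition~\ref{prop:logconvex} (which re-expresses a mid-order MMPE in terms of the MMSE and a higher-order moment \emph{at the same} SNR). The hypothesis $\mmse(\X,\snr_0)\le\beta/(1+\beta\snr_0)$ is then substituted at the very end, and a scaling analysis of the resulting constants gives the width statement.

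For step one, I start with $\mmse(\X,\snr)=\mmpe(\X,\snr,2)$ and apply Proposition~\ref{prop:MMSE at low snr with MMPE} with $\p=2$. Setting $t=(\snr_0-\snr)/\snr_0$, a short algebraic check gives $t=(1-\gamma)/(1+\gamma)$, $(1-t)/(1+t)=\gamma$, and the shifted order $q:=2(1+t)/(1-t)=2/\gamma\ge 2$, so
\begin{align*}
\mmse(\X,\snr)\le\kappa_{n,t}\,\mmpe^{\gamma}(\X,\snr_0,2/\gamma),
\end{align*}
and the same substitution rewrites $\kappa_{n,t}$ as exactly the $M_r$-free part of $\kappa(r,\gamma,n)$. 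For step two, I invoke \eqref{eq: interpolation 5} with $\p=2$, $q=2/\gamma$, and any $r>2/\gamma$: since $f_2=\E[\X|\Y_{\snr_0}]$, the interpolation parameter simplifies to $\alpha=(\gamma r-2)/(r-2)$. Raising to the power $q\gamma=2$ and combining with step one yields
\begin{align*}
\mmse(\X,\snr)\le\kappa_{n,t}\,\mmse^{(\gamma r-2)/(r-2)}(\X,\snr_0)\,\|\X-\E[\X|\Y_{\snr_0}]\|_r^{2r(1-\gamma)/(r-2)};
\end{align*}
the norm-to-the-power factor is exactly $M_r^{2(1-\gamma)/(r-2)}$, the bound $M_r\le 2^r\min(\|\Z\|_r^r/\snr_0^{r/2},\|\X\|_r^r)$ from Proposition~\ref{prop:higher moments bound 1} gives the claimed $M_r$, and the hypothesis on $\mmse(\X,\snr_0)$ yields~\eqref{eq: bound in Thm 1}. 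The approximation \eqref{eq: approx optimal r} follows from differentiating $\log\kappa(r,\gamma,n)+\frac{\gamma r-2}{r-2}\log(\beta/(1+\beta\snr_0))$ with respect to $r$, treating the $r$-dependence of $M_r^{2(1-\gamma)/(r-2)}$ (which is roughly $C^{2(1-\gamma)}$ for large $r$); the two branches in \eqref{eq: approx optimal r} correspond to whether the unconstrained stationary point falls above or below the feasibility wall $r=2/\gamma$.

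For the width scaling $W(n)=O(n^{-1/2})$, I set $\delta:=\snr_0-\snr$, so that $1-\gamma=2\delta/(\snr_0+\delta)\to 0$ as $\delta\downarrow 0$ and $(1+\gamma)/\gamma\to 2$. The dominant $n$-dependence in $\kappa(r,\gamma,n)$ is then the factor
\begin{align*}
\left(\frac{1+\gamma}{\gamma}\right)^{n(1-\gamma)/2}=\exp\!\left(\frac{n(1-\gamma)}{2}\log\frac{1+\gamma}{\gamma}\right)=\exp\!\left(\frac{n(1-\gamma)^2}{2}(1+o(1))\right),
\end{align*}
which is $O(1)$ precisely when $n(1-\gamma)^2=O(1)$, i.e., $\delta=O(1/\sqrt{n})$. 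This is exactly the width of the SNR window below $\snr_0$ inside which the Theorem bound can improve on the trivial bound $1/(1+\snr)$. The main obstacle will be the careful bookkeeping required to verify that all of the remaining factors---the polynomial $n^{1-\gamma}$, the $M_r^{2(1-\gamma)/(r-2)}$ term under the choice of $r$ from \eqref{eq: approx optimal r}, and the $(\beta/(1+\beta\snr_0))^{(\gamma r-2)/(r-2)}$ factor---are genuinely subdominant under the scaling $1-\gamma=O(1/\sqrt{n})$, so that $W(n)$ is truly $O(n^{-1/2})$ and not a weaker rate obscured by the constants.
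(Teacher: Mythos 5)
Your derivation of the main inequality \eqref{eq: bound in Thm 1} follows the paper's proof essentially line for line: Proposition~\ref{prop:MMSE at low snr with MMPE} applied to the MMSE (the paper writes ``with $\p=1$'' but then uses the shifted order $\frac{1+t}{1-t}\cdot 2$, so your $\p=2$ is the intended reading), giving $\mmse(\X,\snr)\le\kappa_{n,t}\,\mmpe^{\gamma}(\X,\snr_0,2/\gamma)$ with $t=\frac{1-\gamma}{1+\gamma}$; then the interpolation \eqref{eq: interpolation 5} with $\p=2$, ${\rm q}=2/\gamma$ and a free ${\rm r}>2/\gamma$, yielding $\alpha=\frac{\gamma r-2}{r-2}$ and the factor $M_r^{2(1-\gamma)/(r-2)}$; then Proposition~\ref{prop:higher moments bound 1} to bound $M_r$ and the constraint $\mmse(\X,\snr_0)\le\frac{\beta}{1+\beta\snr_0}$ at the end. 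Your algebra checks out, including the identification of $\kappa_{n,t}$ with the $M_r$-free part of $\kappa(r,\gamma,n)$, and your treatment of $r_{\rm opt}$ mirrors the paper's appendix (stationarity plus Stirling).

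The width claim is where you have a genuine gap. You assert
\begin{align*}
\left(\frac{1+\gamma}{\gamma}\right)^{\frac{n(1-\gamma)}{2}}=\exp\!\left(\frac{n(1-\gamma)}{2}\log\frac{1+\gamma}{\gamma}\right)=\exp\!\left(\frac{n(1-\gamma)^2}{2}(1+o(1))\right),
\end{align*}
but $\log\frac{1+\gamma}{\gamma}\to\log 2\neq 0$ as $\gamma\to 1$ (indeed $\frac{1+\gamma}{\gamma}=\frac{2\snr_0}{\snr}\to 2$), so the exponent is $\frac{n(1-\gamma)}{2}\log 2\,(1+o(1))$, and your own criterion ``this factor is $O(1)$'' would force $n(1-\gamma)=O(1)$, i.e.\ $\delta=O(1/n)$ --- the old rate, not $O(1/\sqrt{n})$. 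The factor whose logarithm genuinely vanishes linearly in $1-\gamma$ is $\left(\frac{\snr_0}{\snr}\right)^{\frac{n(1-\gamma)}{2}}=\left(\frac{1+\gamma}{2\gamma}\right)^{\frac{n(1-\gamma)}{2}}\approx\exp\!\left(\frac{n(1-\gamma)^2}{4}\right)$, and that piece alone would give the $1/\sqrt{n}$ window; however, $\kappa_{n,t}=\left(\frac{2^n}{n^2}\right)^{\frac{1-\gamma}{2}}\left(\frac{\snr_0}{\snr}\right)^{\frac{n(1-\gamma)-1}{2}}$ also carries the piece $2^{n(1-\gamma)/2}$, which is $2^{\Theta(\sqrt{n})}$ under the scaling $1-\gamma\asymp n^{-1/2}$ and is not absorbed by any of the remaining factors you list. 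The paper itself states $W(n)=O(n^{-1/2})$ without deriving it inside this proof, so to complete this part you would need either to control that $2^{n(1-\gamma)/2}$ term (e.g., by revisiting the choice of H{\"o}lder conjugate in Proposition~\ref{prop:MMSE at low snr with MMPE}) or to acknowledge that your scaling analysis, as written, only certifies an $O(1/n)$ window.
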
 

\begin{proof}
From the SCPP complementary bound in  Proposition~\ref{prop:MMSE at low snr with MMPE} with $\p=1$  we have that
\begin{align}
\mmse(\X,\snr) \le  \kappa _{n,t} \ \mmpe^{\frac{1-t}{1+t}} \left(\X,\snr_0,\frac{1+t}{1-t}  \cdot 2 \right). 
\end{align}
  From the interpolation result in Proposition~\ref{prop:logconvex}   letting $q=\frac{1+t}{1-t}\cdot 2$, $p=2$  we have that  for some $r$ such that $2 \le 2\frac{1+t}{1-t}=q <r$ and 
 \begin{align}
 \alpha&=\frac{  \frac{1-t}{2(1+t)}-\frac{1}{r}}{\frac{1}{2}-\frac{1}{r}} \Rightarrow 1-\alpha= \frac{\frac{2t}{1+t} r}{r-2},
 \end{align}
and thus the MMPE term can be bounded as 
 \begin{align*}
\mmpe^{\frac{1-t}{1+t}} \left(\X,\snr_0,\frac{1+t}{1-t}  \cdot 2\right) 
  & \le \mmpe^\alpha(\X,\snr_0,2)   \| \X -\E[\X|\Y_{\snr_0}] \|^{2 (1-\alpha)}_{r } \\
  &=  \mmse^\alpha(\X,\snr_0)  \| \X -\E[\X|\Y_{\snr_0}] \|^{2(1-\alpha)}_{r }\\
   &=  \mmse^\alpha(\X,\snr_0)  \| \X -\E[\X|\Y_{\snr_0}] \|^{2r\frac{\frac{2t}{1+t} }{r-2}}_{r }.
 \end{align*}

By  Proposition~\ref{prop:higher moments bound 1} we can  bound  $ \| \X -\E[\X|\Y_{\snr_0}] \|^{2 r\frac{\frac{2t}{1+t} }{r-2}}_{r }$  as follows:

\begin{align*}
 \| \X -\E[\X|\Y_{\snr_0}] \|^{r\frac{\frac{4t}{1+t} }{r-2}}_{r } \le   \left(2^{r} \min \left(  \frac{\|  \Z\|_r^{r}}{\snr^\frac{r}{2}}, \| \X \|_r^{r} \right)\right)^{\frac{4t}{(1+t)(r-2)}}.
\end{align*}

By putting  all  of the  bounds together, letting $\gamma=\frac{1-t}{1+t}$ and observing that 
\begin{align*}
1-\gamma&=\frac{2t}{1+t},\\
\gamma&=\frac{\snr}{2\snr_0-\snr},\\
\frac{\snr_0}{\snr}&=\frac{1+\gamma}{2 \gamma},
\end{align*}
we get the bound in \eqref{eq: bound in Thm 1}.
 Finally, the proof of approximately optimal $r$  in \eqref{eq: approx optimal r} is given in  Appendix~\ref{app: approximate r}. 
\end{proof}

    \begin{figure*}
          \begin{subfigure}[t]{0.5\textwidth}
           \centering
                \includegraphics[width=7cm]{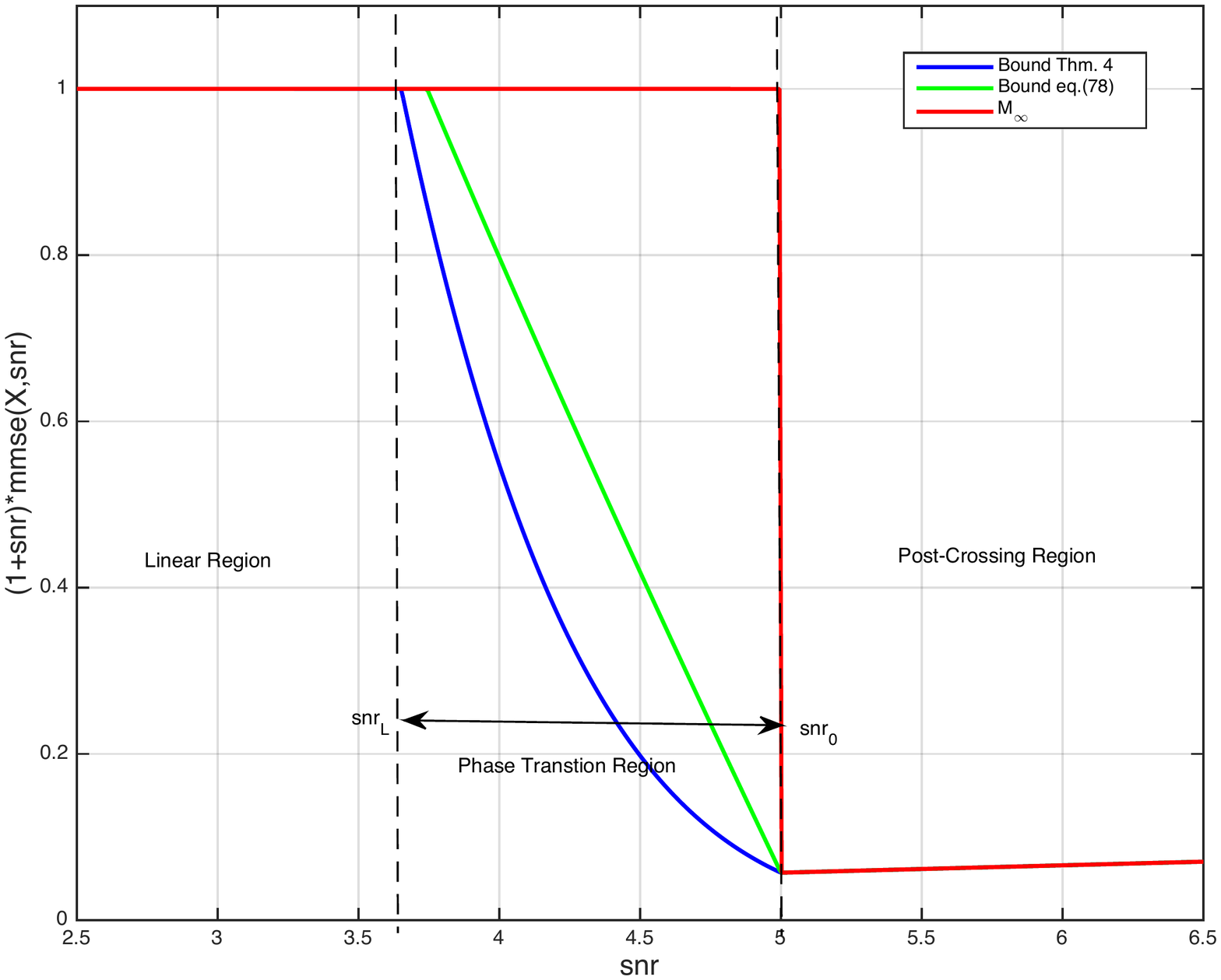}
                \caption{
                For  $\snr_0=5$ and $\beta=0.01$.  Here $n=1$.}
                \label{fig:Example of Phase}
        \end{subfigure}%
           \begin{subfigure}[t]{0.5\textwidth}
           \centering
                \includegraphics[width=7cm]{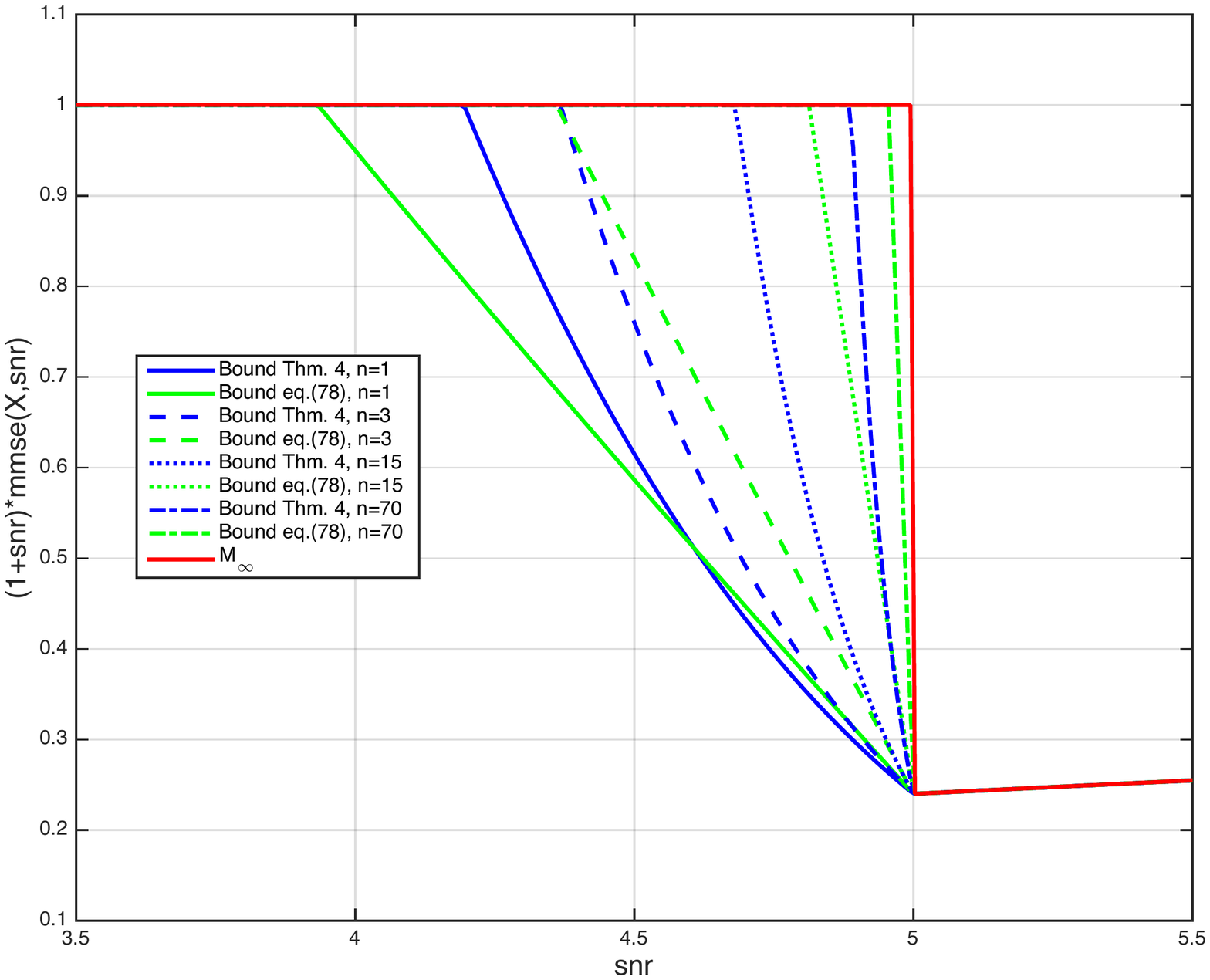}
                \caption{
                For $\snr_0=5$ and $\beta=0.05$.  Several values of $n$.}
                \label{fig:ExampleWith- n}
        \end{subfigure}
       \caption{Bounds on  $\mathrm{M}_n(\snr,\snr_0,\beta)$ vs  $\snr$.}
        \label{fig: Bounds on Mn}
\end{figure*}
The bounds in Theorem~\ref{prop: bound through MMPE} and in \eqref{eq:mainBound} are shown in Fig.~\ref{fig: Bounds on Mn}. The bound in Theorem~\ref{prop: bound through MMPE} is asymptotically tighter than the one in \eqref{eq:mainBound}. This follows since  the phase transition region shrinks as  $O \left( \frac{1}{\sqrt{n}} \right)$ for Theorem~\ref{prop: bound through MMPE}, and as $O \left( \frac{1}{{n}} \right)$ for the bound in \eqref{eq:mainBound}.
It is not possible in general to assert that Theorem~\ref{prop: bound through MMPE} is tighter than  \eqref{eq:mainBound}.
In fact, for small values of $n$, the bound in \eqref{eq:mainBound} can  offer advantages, as seen for the case $n=1$ shown in Fig.~\ref{fig:ExampleWith- n}. Another advantage of the bound in \eqref{eq:mainBound} is its analytical simplicity.

\subsection{Bounds on the  derivative of the MMSE}
\label{sec:BoundsOnDerivative}

The MMPE can be used to study  the second derivative of mutual information (or first derivative of MMSE), as initiated 
for $n=1$ in~\cite{GuoMMSEprop} and for $n \ge 1$ in~\cite{BustinMMSEparallelVectorChannel}, namely,
\begin{align}
&\frac{d^2 I(\X,\Y)}{d \snr^2}
= n \frac{d \ \mmse(\X,\snr)}{d \snr} 
= -\Trc \left(\E \left[\cov^2(\X|\Y) \right] \right), 
\notag\\
& 
\cov(\X|\Y)\!:=\!\E\left[(\X-\E[\X|\Y])(\X-\E[\X|\Y])^T|\Y\right].
\label{eq: derivative}
\end{align}
The second derivative of mutual information is  important in characterizing the  bandwidth-power trade-off in the wideband regime~\cite{MIsecondOrder} and \cite{SpectralEfficiencyWideband}, and has also been used in the proof of the SCPP in \cite{GuoMMSEprop} and \cite{BustinMMSEparallelVectorChannel}.  Moreover, in \cite{GuoMMSEprop} it has been shown that the derivative of the MMSE and the quantity in~\eqref{eq: higher order moments} are related by the following bound for $n=1$:
\begin{align}
\E \left[\Cov^2(X|Y) \right] \le \| X -\E[X|Y] \|_4^4 \le \frac{3 \cdot 2^4}{\snr^2}. \label{eq: bound on der}
\end{align}
The main result of this subsection is the next bound.
\begin{prop}
\label{prop:bounds on derivative of MMSE via MMPE}
 For any input $\X$
\begin{align}
& \mmse^2(\X,\snr) =\mmpe^2(\X,\snr,2) \notag\\
 &\le \frac{1}{n}\Trc \left(\E \left[\cov^2(\X|\Y)\right] \right) \le n \ \mmpe(\X,\snr, 4).
\end{align}
\end{prop}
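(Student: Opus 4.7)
The proposition is a chain of two inequalities, so I would prove each separately, starting with the observation that $\mmse(\X,\snr)=\tfrac{1}{n}\E[\Trc(\cov(\X|\Y))]$.

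For the lower inequality $\mmse^{2}(\X,\snr)\le \tfrac{1}{n}\Trc(\E[\cov^{2}(\X|\Y)])$, I would exploit the fact that $\cov(\X|\Y)$ is PSD almost surely. If $\lambda_{1},\dots,\lambda_{n}$ denote its (random) eigenvalues, then Cauchy--Schwarz on $\mathbb{R}^{n}$ yields $\Trc(\cov(\X|\Y))^{2}=(\sum_{i}\lambda_{i})^{2}\le n\sum_{i}\lambda_{i}^{2}=n\,\Trc(\cov^{2}(\X|\Y))$ pointwise. Jensen's inequality applied to the outer expectation gives $(\E[\Trc(\cov(\X|\Y))])^{2}\le \E[\Trc(\cov(\X|\Y))^{2}] \le n\,\E[\Trc(\cov^{2}(\X|\Y))]=n\,\Trc(\E[\cov^{2}(\X|\Y)])$. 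Dividing both sides by $n^{2}$ and identifying the left side as $n^{2}\mmse^{2}(\X,\snr)/n^{2}$ delivers the claim.

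For the upper inequality $\tfrac{1}{n}\Trc(\E[\cov^{2}(\X|\Y)])\le n\,\mmpe(\X,\snr,4)$, I would use the MMPE-$4$ optimal estimator from Proposition~\ref{prop:existence of optimal estimator}. Let $\W':=\X-f_{4}(\X|\Y)$, so that $n\,\mmpe(\X,\snr,4)=\E[\|\W'\|^{4}]$ by the definition of the norm in \eqref{eq: defintion of the norm}. The key step is a Loewner-order bound: for any deterministic (given $\Y$) vector $f(\Y)$,
\begin{align*}
\E[(\X-f(\Y))(\X-f(\Y))^{T}|\Y]=\cov(\X|\Y)+(\E[\X|\Y]-f(\Y))(\E[\X|\Y]-f(\Y))^{T}\succeq\cov(\X|\Y),
\end{align*}
and in particular $\cov(\X|\Y)\preceq \E[\W'(\W')^{T}|\Y]$. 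Then I would invoke the standard fact that for PSD matrices $A\preceq B$ one has $\Trc(A^{2})\le \Trc(B^{2})$ (immediate from writing $B=A+C$ with $C\succeq 0$ and expanding $\Trc(B^{2})-\Trc(A^{2})=2\Trc(AC)+\Trc(C^{2})\ge 0$).

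Finally, I would apply Jensen's inequality to the convex matrix function $M\mapsto \Trc(M^{2})=\|M\|_{F}^{2}$ on the cone of symmetric matrices: $\Trc\bigl((\E[\W'(\W')^{T}|\Y])^{2}\bigr)\le \E[\Trc((\W'(\W')^{T})^{2})\,|\,\Y]$. The inner quantity simplifies because $(\W'(\W')^{T})^{2}=\|\W'\|^{2}\,\W'(\W')^{T}$ has trace $\|\W'\|^{4}$. Combining the three steps, taking outer expectation, and using $\Trc(\E[\cdot])=\E[\Trc(\cdot)]$ yields $\Trc(\E[\cov^{2}(\X|\Y)])\le \E[\|\W'\|^{4}]=n\,\mmpe(\X,\snr,4)$, which is actually slightly stronger than the stated inequality (the $n$ on the right is loose). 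The main obstacle to watch for is ensuring that Jensen's inequality applies to the matrix-valued conditional expectation, which is legitimate because $\|\cdot\|_{F}^{2}$ is convex on $\mathbb{R}^{n\times n}$ entrywise; the Loewner-monotonicity of $\Trc(\cdot^{2})$ is not entirely obvious at first glance and deserves an explicit justification.
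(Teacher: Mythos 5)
Your proof is correct, and while your lower bound is essentially the paper's argument (pointwise Cauchy--Schwarz on the eigenvalues giving $\Trc^2(\cov(\X|\Y))\le n\,\Trc(\cov^2(\X|\Y))$, followed by Jensen on the outer expectation), your upper bound takes a genuinely different route. The paper stays entirely at the scalar level: it bounds $\Trc(\cov^2(\X|\Y))\le n\,\Trc^2(\cov(\X|\Y))$, recognizes $\Trc(\cov(\X|\Y))$ as the conditional MMSE $\inf_f\E[\Err(\X,f(\Y))|\Y]$, squares, applies scalar Jensen to pass from $\E^2[\cdot|\Y]$ to $\E[\cdot^2|\Y]$, and then exchanges expectation and infimum, arriving at $n^2\,\mmpe(\X,\snr,4)$. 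You instead work at the matrix level: the bias--variance Loewner bound $\cov(\X|\Y)\preceq\E[\W'(\W')^T|\Y]$ for the MMPE-$4$ estimator, monotonicity of $\Trc(\cdot^2)$ on the PSD cone, and matrix Jensen for the Frobenius norm, together with the identity $\Trc\bigl((\W'(\W')^T)^2\bigr)=\|\W'\|^4$. All of these steps check out (and you rightly flag and justify the two nonobvious ones). What your route buys is a factor of $n$: you obtain $\tfrac{1}{n}\Trc(\E[\cov^2(\X|\Y)])\le\mmpe(\X,\snr,4)$, which implies the stated bound since $n\ge 1$ and matches the sharper constant the paper only records for the scalar case in \eqref{eq: n=1 MMPE bounds on derivative}. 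The paper's argument, by contrast, needs no matrix machinery beyond the trace inequality \eqref{eq: trace matrix bounds} and reuses the interchange-of-infimum machinery already established in Corollary~\ref{lem: exchange of inf and expect}.
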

\begin{proof}
See Appendix~\ref{app:prop:bounds on derivative of MMSE via MMPE}.
\end{proof}

It can be observed that, for the case $n=1$, by using the bound in~\eqref{eq: 1/snr bound for higher orders} from Proposition~\ref{prop:higher moments bound 1} we have that
\begin{align}
\E \left[\Cov^2(X|Y)\right] \le \mmpe(X,\snr,4) \le \frac{3}{ \snr^2}, 
\label{eq: n=1 MMPE bounds on derivative}
\end{align}
which significantly reduces the constant in~\eqref{eq: bound on der} from $ 3 \cdot 2^4$ to $3$.
For a similar but slightly different bound than that in~\eqref{eq: n=1 MMPE bounds on derivative}  on $\E \left[\Cov^2(X|Y)\right]$ please see \cite{NewBoundsOnMMSE}.

\section{Concluding Remarks}
This paper has considered the  problem of estimating a random variable from a noisy observation under a general cost function, termed the MMPE. We have show that many properties of the MMSE and the conditional expectation (i.e., optimal MMSE estimator)  are identical or have a natural generalization to the MMPE and the MMPE optimal estimator.    

We have also provided a new simpler proof of the SCPP for the MMSE and generalized it to the MMPE. We have shown that the new framework of the MMPE also permits the  development of bounds that are complementary to the SCPP which in turn allows for new tighter characterizations of the phase transition phenomena that manifest, in the limit
as the length of the capacity achieving code goes to infinity, as a discontinuity of the MMSE as a function of SNR.

We  have also shown connections between the MMPE and the conditional differential entropy by generalizing  a well know continuous analog of Fano's inequality. 
 The MMPE was further used to refine bounds on the conditional entropy and improve the gap term in the Ozarow-Wyner bound.  

 Currently, we are investigating the connections between bounds on the MMPE provided in this work and the rate distortion problem with the MMPE distortion measure. Possible future applications of the sharpened version of the Ozarow-Wyner bound include sharpening  the bounds on discrete inputs in \cite{DiscreteWu} and \cite{shamai1991information}. Another interesting future direction is to consider a modified `information bottleneck problem' \cite{chechik2005information} where the constraint on the mutual information is replaced by a constraint on the MMPE.

\begin{appendices}
\section{Proof of the Triangle Inequality in~\eqref{eq: Minkowski}}
\label{app:triangleInequality Proof of lem: Minkowski}
It is well know that the trace operator is an inner product  in the space of matrices and since the inner product induces a norm we have 
\begin{align*}
\Trc \left( {\bf v}{\bf v}^T\right) = \langle {\bf v}, {\bf v} \rangle= \| {\bf v}\|^2.
\end{align*}
Therefore, we have that 
\begin{align*}
\E^{\frac{1}{\p}} \left[ \Trc^\frac{\p}{2} \left( \left({\bf V}-\U \right)\left({\bf V}-\U \right)^T \right) \right]&=\E^{\frac{1}{\p}} \left[\left \|{\bf V}-\U \right\|^{\p} \right]\\
& \stackrel{a)}{\le} \E^{\frac{1}{\p}} \left[\left( \|{\bf V} \|+ \|\U \| \right)^{\p} \right]\\
& \stackrel{b)}{\le}  \E^{\frac{1}{\p}} \left[\|{\bf V} \|^{\p} \right]+\E^{\frac{1}{\p}} \left[ \|\U \|^{\p} \right]\\
&= \E^{\frac{1}{\p}} \left[ \Trc^\frac{\p}{2} \left( {\bf V}{\bf V}^T \right) \right]+\E^{\frac{1}{\p}} \left[ \Trc^\frac{\p}{2} \left( \U \U^T \right) \right],
\end{align*}
where the inequalities follow from:  a) triangle inequality for inner product induce  norm for $\p\ge 1$; and  b) Minkowski inequality for the expectation which holds for $\p\ge 1$.
This concludes the proof. 


\section{ Proof of Proposition~\ref{prop:existence of optimal estimator}}
\label{app:existence of optimal estimator}
For simplicity, we look at the case $n=1$. The case for $n>1$ follows similarly. We first assume that $\snr>0$.
The first direction follows trivially:
\begin{align}
\inf_f \E \left[ |X-f(Y)|^{\p}\right] \le \E \left[ |X-f_p(X|Y)|^{\p}\right]. 
\end{align}
The other direction follows by using
\begin{align}
\inf_f \E \left[ |X-f(Y)|^{\p}\right] \ge \E \left[ \inf_f  \E \left[ \left|X-f(Y)|^{\p} \right | Y\right]\right],
\end{align}
where  we focus on the inner expectation $\inf_f  \E \left[ \left|X-f(Y)|^{\p} \right | Y=y\right]$ and show that the infimum is achieved by $f(y)=f_{\p}(X|Y=y)$ given in \eqref{eq: optimal estimator}. Since  $y$ is now given, we are simply looking for an optimal solution to the more general problem 
\begin{align}
 \inf_{v \in\mathbb{R}}  \E \left[ \left|X_y-v\right|^{\p} \right], \label{eq: infimumum of Opt Prob}
\end{align}
where $X_y \sim p_{X|Y}(\cdot|y)$. The goal is to show that the infimum in \eqref{eq: infimumum of Opt Prob} is achievable.  Clearly, the infimum exists since
\begin{align}
0 \le \inf_{v \in\mathbb{R}}  \E \left[ \left|X_y-v\right|^{\p} \right] &\le    \E \left[ \left|X_y-0 \right|^{\p} \right]  \\
&=\E[ |X_y|^{\p}] < \infty, \label{eq: inf existence}
\end{align}
where the last inequality follows from \cite[Proposition 6]{GuoMMSEprop} which asserts that for any $\p<\infty$,  $X_y$ is a sub-Gaussian random variable and hence all conditional moments are finite.

Next, we show that $ g(v)= \E \left[ \left|X_y-v\right|^{\p} \right]$  is a continuous function of $v$. Recall, that any given function $h(x)$ is continuous if   $x_n \to x$ implies $h(x_n) \to h(x)$ as $n \to \infty$.

For  arbitrary $|v| < \infty$  take a sequence $v_n$ such that  $v_n \to v$, we want to show that
\begin{align}
\lim_{n \to \infty} g(v_n)=\lim_{n \to \infty}  \E \left[ \left|X_y-v_n\right|^{\p} \right]=   \E \left[ \lim_{n \to \infty} \left|X_y-v_n\right|^{\p} \right]=g(v).
\end{align}
This can be done with the help of the dominated convergence theorem.
We must find an integrable random variable $ \theta$ such that $ \left|X_y-v_n\right|^{\p}  \le  \theta $ for all $n$; this is found as  
\begin{align}
\left|X_y-v_n\right|^{\p}   & \stackrel{a)}{\le} 2^{\p}  \left(|X_y|^{\p}+ |v_n|^{\p}\right)\\
& \stackrel{b)}{\le}  2^{\p}  \left(|X_y|^{\p}+K\right)=\theta,
\end{align}
where the inequalities follow from: a)  $ \left|X_y-v_n\right|^{\p} \le (2 \max(|X_y|,|v_n|))^{\p} \le 2^{\p} \left(|X_y|^{\p}+|v_n|^{\p}\right)$ which holds for any $\p\ge0$; and b) recall that every convergent sequence is bounded and since the sequence $v_n$ converges to $v$ it is also bounded by some finite $K$ for every $n$.
The  integrability of $\theta=2^{\p}  \left(|X_y|^{\p}+K\right)$ follows  again by the sub-Gaussian argument from \cite[Proposition 6]{GuoMMSEprop}.
Therefore, we conclude that the function $g(v)$ is continuous. 

 Next, we show that the infimum is attainable by some $|v_0| <\infty$.
By definition of the infimum there exists some $v_n$ (not necessarily convergent) such that 
\begin{align}
\liminf_{n \to \infty} \E[|X_y-v_n|^\p ]= \inf_{v \in\mathbb{R}}  \E \left[ \left|X_y-v\right|^{\p} \right].
\end{align}
Towards a contradiction, assume that $v_n \to \infty$. Then by Fatou's lemma
\begin{align}
\liminf_{n \to \infty} \E[|X_y-v_n|^\p ]\ge \E[  \liminf_{n \to \infty} |X_y-v_n|^\p ] =\infty.
\end{align}
However, this contradicts the result in \eqref{eq: inf existence} and therefore sequence $v_n$ must be bounded. This, together with the fact that $g(v)$ is continuos, implies that the infimum is attainable and thus
\begin{align}
 \inf_{v \in\mathbb{R}}  \E \left[ \left|X_y-v\right|^{\p} \right]= \min_{v \in\mathbb{R}}  \E \left[ \left|X_y-v\right|^{\p} \right].
\end{align}
Therefore, for each $y\in \mathbb{R}$ there exists $|v|<K$ that minimize the expression $\min_{v \in\mathbb{R}}  \E \left[ \left|X_y-v\right|^{\p} \right]$ and the optimal estimator defined point-wise is given by
\begin{align}
f(y)= \arg \min_{v \in\mathbb{R}}  \E \left[ \left|X_y-v\right|^{\p} \right]. 
\end{align}

For the case of $\snr=0^{+}$  the problem reduces to 
\begin{align}
\inf_ { {\bf v} \in \mathbb{R}^n} \| \X - {\bf v} \|_\p,
\end{align} 
which is bounded if and only if  $\| \X \|_\p < \infty$. 
This concludes the proof.

\section{Proof of Proposition~\ref{prop:orthogonality like property}}
\label{app:prop:orthogonality like property}
We take a classical approach used in estimation theory to find an optimal estimator by using tools from calculus of variations \cite[Ch.7 Thm.1]{luenberger1997optimization}. A necessary condition for $f$ to be a minimizer in \eqref{eq: optimal estimator} is expressed through a functional derivative as
\begin{align}
\nabla_g  \E \left[\Err^{\frac{\p}{2}}  ( \X, f(\Y) ) \right]=\lim_{\epsilon \to 0}  \E \left[\frac{ \Err^{\frac{\p}{2}} ( \X, f(\Y)+ \epsilon g(\Y) )-\Err^{\frac{\p}{2}} ( \X,  f(\Y))}{\epsilon} \right]=0,
\end{align}
for all admissible $g(\Y)$.

Therefore, we focus on the following limit:
\begin{align}
\lim_{\epsilon \to 0}  \E \left[\frac{ \Err^{\frac{\p}{2}} ( \X, f(\Y)+ \epsilon g(\Y) )-\Err^{\frac{\p}{2}} ( \X,  f(\Y))}{\epsilon} \right]. \label{eq: limit that we want}
\end{align}
We seek to apply the dominated convergence theorem to  \eqref{eq: limit that we want} in  order to interchange the order of the limit and the expectation.  To that end  we let ${\bf v}= {\bf x}-f({\bf y})$ and
\begin{align}
\Err^{\frac{\p}{2}} ( {\bf x},  f({\bf y})) &= \left( {\bf v}^T {\bf v} \right)^{\frac{\p}{2}}\\
 \Err^{\frac{\p}{2}} ( {\bf x}, f({\bf y})+ \epsilon g({\bf y}) ) &= \left( ({\bf v}- \epsilon g({\bf y}))^T ( {\bf v}- \epsilon g({\bf y})) \right)^{\frac{\p}{2}}\\
 &=  \left( {\bf v}^T{\bf v} -\epsilon  g({\bf y})^T  {\bf v} -\epsilon {\bf v}^T g({\bf y})+\epsilon^2 g({\bf y})^T g({\bf y})\right)^{\frac{\p}{2}} 
\end{align}

Next  for the integrant 
\begin{align}
&\frac{ \Err^\p ( {\bf x}, f({\bf y})+ \epsilon g({\bf y}) )-\Err^\p ( {\bf x},  f({\bf y}))}{\epsilon}
 \label{eq: to dominate} 
\end{align} 
we observe that all the terms in \eqref{eq: to dominate} are of order no more than $\p$, and since all of the terms are in $L_{\p}$ (or $\p$ integrable)  the quantity in \eqref{eq: to dominate} is integrable for any $\epsilon$. Therefore, the dominated convergence theorem applies and we can interchange the order of limit and  expectation in \eqref{eq: limit that we want}.

Next, observe that we can re-write the limit as a derivative, that is,
\begin{align}
 \lim_{\epsilon \to 0}  \frac{ \Err^{\frac{\p}{2}} ( {\bf x}, f({\bf y})+ \epsilon g({\bf y}) )-\Err^{\frac{\p}{2}} ( {\bf x},  f({\bf y}))}{\epsilon} =  \frac{d}{d \epsilon}   \Err^{\frac{\p}{2}} ( {\bf x}, f({\bf y})+ \epsilon g({\bf y}) ) \Big |_{\epsilon=0}. \label{eq: derivative and limit}
\end{align}

By using chain rules of differentiation of matrix calculus we arrive at 
\begin{align}
\frac{d}{d \epsilon}   \Err^{\frac{\p}{2}} ( {\bf x}, f({\bf y})+ \epsilon g({\bf y}) )  \Big |_{\epsilon=0}
&=  \frac{d}{d \epsilon}   \left( ({\bf v}-\epsilon g({\bf y}) )^T  ({\bf v}-\epsilon g({\bf y}) ) \right)^{\frac{\p}{2}} \Big |_{\epsilon=0} \notag \\
&= -\p \left(({\bf v}-\epsilon g({\bf y}) )^T ({\bf v}-\epsilon g({\bf y}) )\right)^{{\frac{\p}{2}}-1}  ( {\bf v} -\epsilon g({\bf y}))^T g({\bf y})  \Big |_{\epsilon=0} \notag \\
&= -\p \Trc^{{\frac{\p}{2}}-1} \left[ ({\bf v}-\epsilon g({\bf y}) ) ({\bf v}-\epsilon g({\bf y}) )^T \right]  ( {\bf v} -\epsilon g({\bf y}))^T g({\bf y})  \Big |_{\epsilon=0}  \label{eq:first derivative before evalutation} \\
&= -\p \Trc^{{\frac{\p}{2}}-1} \left[ {\bf v}  {\bf v} ^T \right]   {\bf v}^T g({\bf y}).  \label{eq:first derivative}
\end{align}

Therefore, the function derivative is given by 
\begin{align*}
&\lim_{\epsilon \to 0}  \E \left[\frac{ \Err^{\frac{\p}{2}} ( \X, f(\Y)+ \epsilon g(\Y) )-\Err^{\frac{\p}{2}} ( \X,  f(\Y))}{\epsilon} \right] \\
&=  \E \left[-\p \cdot  \Err^{{\frac{\p}{2}}-1} \left( \X, f(\Y) \right) ( \X-f(\Y))^T g(\Y) \right].
\end{align*}

Finally, for $f_\p(\X|\Y)$ to be optimal it must satisfy 
\begin{align}
 \E \left[  \Err^{\frac{\p-2}{2}} \left( \X, f_\p(\X|\Y) \right) ( \X-f_\p(\X|\Y))^T g(\Y) \right]=0,
\end{align}
for any  admissible $g(\Y)$. This verifies the necessary condition for optimality for $\p> 0$. 

To verify that this is a sufficient condition for optimality we take the second variational derivative of $ \E \left[\Err^{\frac{\p}{2}} ( \X, f(\Y) ) \right]$ and demonstrated that it is always positive for $\p \ge 1$. The fact that 
\begin{align}
  \frac{d^2}{d \epsilon^2}   \Err^{\frac{\p}{2}} ( {\bf x}, f({\bf y})+ \epsilon g({\bf y}) ) \Big |_{\epsilon=0} \ge 0 \text{ for } \p \ge 1,
\end{align} 
 follows since $\Err^{\frac{\p}{2}} ( {\bf x}, f({\bf y})+ \epsilon g({\bf y}) )$ is  a convex function of $\epsilon$ for $\p \ge 1$.  
%

This verifies the sufficient condition  for $\p \ge 1$ and concludes the proof. 

\section{ Proof of Proposition~\ref{prop:GaussianMMPE}}
\label{app:proof: prop:GaussianMMPE}
In  Proposition~\ref{prop:existence of optimal estimator}  we let $X_y \sim p_{X|Y}(\cdot |y)$ and 
therefore have to solve for all $y$
\begin{align}
\min_{v \in \mathbb{R}}\E \left[|X_y -v|^\p \right]. \label{eq:opt problem for cond G}
\end{align}
We know that  $X_y$ is Gaussian with $X_y \sim \mathcal{N} \left(\frac{\sqrt{\snr}y}{1+\snr},\frac{1}{1+\snr} \right)$. The optimization problem in  \eqref{eq:opt problem for cond G} can be transformed into 
\begin{align}
\min_{v \in \mathbb{R}}\E \left[ \left| \frac{Z}{\sqrt{1+\snr}}+\frac{\sqrt{\snr}y}{1+\snr} -v\right|^{\p} \right]
&= \frac{1}{(1+\snr)^{p}}\min_{a \in \mathbb{R}}\E \left[ \left| Z-a\right|^{\p} \right] \label{eq: transformed opt problem} \\
\text{ where } a&=\sqrt{1+\snr} \ v-\frac{\sqrt{\snr}y}{\sqrt{1+\snr}}, \label{eq: opt to opt}
\end{align}
 and where $Z \sim \mathcal{N}(0,1)$.
 Next, by taking the derivative with respect to $a$ in \eqref{eq: transformed opt problem}
\begin{align}
f'(a)=\frac{d}{d a}\E \left[ \left| Z-a\right|^{\p} \right]&=\E \left[ \frac{d}{d a} \left| Z-a\right|^{\p} \right] \label{eq: switch E and d}\\
&=\E \left[ -\p \sign(Z-a) \left| Z-a\right|^{\p-1} \right],
\end{align}
where the interchange of  the order of differentiation and expectation  in \eqref{eq: switch E and d} is possible by Leibniz integral rule \cite{widder1989advanced} which requires verifying that for 
\begin{align}
g(a,z)=\frac{d}{da} |z-a|^{\p}=-\p \ \sign(z-a) \left| z-a\right|^{\p-1}, \label{eq: function g}
\end{align}
we have that $|g(a,z)| \le \theta(z)$ where $\theta(z)$ is integrable. This is indeed the case since
\begin{align}
|\p \ \sign(z-a) \left| z-a\right|^{\p-1}|  \le \p 2^{\p} \left( |z|^{\p-1}+|a|^{\p-1}\right) =\theta(z).
\end{align}
Clearly, $\theta(z)$ is integrable, so the change of the order of differentiation and expectation in \eqref{eq: transformed opt problem} is justified.

Next, observe that for a fixed $a$  the function $g(z,a)$ in \eqref{eq: function g} is a decreasing function of $z$ for any $\p \ge 1$ and  in addition $g(z,a)$ is an odd function around $z=a$. Since $f'(a)$ is an average value of $g(a,z)$ this means that the sign of $f'(a)$ is the same as the sign of $a$, that is, $f'(a)>0$ if $a>0$ and $f'(a)<0$ if $a<0$. Moreover,  if $a=0$ 
\begin{align}
f'(a=0)=\E \left[ -\p \ \sign(Z) \left| Z\right|^{\p-1} \right]=0.
\end{align}
All this implies that $a=0$ is a critical and a minimum point. Therefore, the optimal $\hat{a}=0$ for the optimization problem in \eqref{eq: transformed opt problem} and the optimal $\hat{v}$ for the original optimization problem is  found through \eqref{eq: opt to opt} to be
\begin{align}
\hat{v}= \frac{\sqrt{\snr} \ y}{1+\snr}.
\end{align}

Finally, we compute the $\mmpe(X,\snr,\p)$ for $X \sim \mathcal{N}(0,1)$
\begin{align}
\mmpe(X,\snr,\p)&=\E\left[ \left| X-\frac{\sqrt{\snr} }{1+\snr} Y\right |^{\p} \right]\\
&=\E\left[ \left| \frac{X}{1+\snr} -\frac{\sqrt{\snr}  Z}{1+\snr} \right |^{\p} \right]\\
&\stackrel{a)}{=}\E \left[  \left| \frac{\hat{Z}}{\sqrt{1+\snr}} \right|^{\p}\right|\\
&\stackrel{b)}{=}\frac{2^{{\frac{\p}{2}}} \Gamma \left( \frac{\p+1}{2}\right)}{\sqrt{\pi}(1+\snr)^{{\frac{\p}{2}}}},
\end{align}
where the equalities follow from: a) follows since $X$ and $Z$ are independent Gaussian r.v.'s and have an equivalent distribution given by $\frac{\hat{Z}}{\sqrt{1+\snr}}$ where $\hat{Z}\sim \mathcal{N}(0,1)$; and  b) follows from \eqref{eq:moments of Gaussian} by setting $n=1$.  
This concludes the proof.

\section{ Proof of Proposition~\ref{prop: estimator for two point} }
\label{app: proof of prop: estimator for two point} 
From Proposition~\ref{prop:existence of optimal estimator} we have to minimize $\E \left[|X_y -v|^{\p}\right]$ where $X_y \sim p_{X|Y}(\cdot |y)$.
We have that the joint probability density function of $(X,Y)$ is given by 
\begin{align}
p_{X,Y}(x,y)= q \frac{1}{\sqrt{2 \pi}} \eu^{- \frac{(y-\sqrt{\snr}x_1)^2}{2}} \delta(x-x_1)+(1-q) \frac{1}{\sqrt{2 \pi}} \eu^{- \frac{(y-\sqrt{\snr}x_2)^2}{2}}\delta(x-x_2).
\end{align}
Without loss of generality we assume that $x_1 \le x_2$. By using Bayes' formula we have that 

\begin{align}
\E \left[|X_y -v|^\p \right] =  \frac{|x_1-v|^\p q \frac{1}{\sqrt{2 \pi}} \eu^{- \frac{(y-\sqrt{\snr}x_1)^2}{2}} + |x_2-v|^\p (1-q) \frac{1}{\sqrt{2 \pi}} \eu^{- \frac{(y-\sqrt{\snr}x_2)^2}{2}}}{p_Y(y)}.  \label{eq: to minimize for BPSK}
\end{align}

The minimization of  \eqref{eq: to minimize for BPSK} with respect to $v$ is equivalent to minimizing 
\begin{align}
g(v)=a |x_1-v|^\p+|x_2-v|^\p,\\
\text{ where } a=\frac{q \ \eu^{- \frac{(y-\sqrt{\snr}x_1)^2}{2}} }{(1-q) \ \eu^{- \frac{(y-\sqrt{\snr}x_2)^2}{2}}}.
\end{align}

In piecewise form we can write $g(v)$ as
\begin{align}
g(v) = \left \{  \begin{array}{ll} a(v-x_1)^\p+(v-x_2)^\p, &  x_2 \le v \\
 a(v-x_1)^\p+(x_2-v)^\p, & x_1 < v < x_2 \\
a (x_1-v)^\p+(x_2-v)^\p ,&  v \le x_1 \\
\end{array}\right. ,
\end{align}

with the derivative of $g(v)$ given by 

\begin{align}
g'(v) = \left \{  \begin{array}{ll} ar(v-x_1)^{\p-1}+r(v-x_2)^{\p-1}, &  x_2 \le v \\
 ar(v-x_1)^{\p-1}-r(x_2-v)^{\p-1}, & x_1 < v < x_2 \\
-ar (x_1-v)^{\p-1}-r(x_2-v)^{\p-1} ,&  v \le x_1 \\
\end{array}\right. , \label{eq: derivative g'(v)}
\end{align}

From \eqref{eq: derivative g'(v)} we see that  for the regime $x_2 \le v$  the derivative is positive and therefore the minimum occurs at $v=x_2$. For the regime $v \le x_1$ we have that the derivative is always negative so the minimum occurs at $v=x_1$.  For the regime $x_1 < v < x_2$   the optimal $v$ soves
\begin{align}
g'(v)= a\p(v-x_1)^{\p-1}-\p(x_2-v)^{\p-1}=0,
\end{align} 
that is,
\begin{align}
v= \frac{a^{\frac{1}{\p-1}} x_1+x_2}{a^{\frac{1}{\p-1}}+1}.
\end{align}
Next, by comparing the three candidates for the minimizing $v$,  we have that
\begin{align}
g(v=x_2)&=  a |x_2-x_1|^{\p},\\
g(v=x_1)&=   |x_2-x_1|^{\p},\\
g\left(v=\frac{a^{\frac{1}{\p-1}} x_1+x_2}{a^{\frac{1}{\p-1}}+1}\right)&=  a  \left| x_1-\frac{a^{\frac{1}{\p-1}} x_1+x_2}{a^{\frac{1}{\p-1}}+1} \right|^\p +\left|x_2- \frac{a^{\frac{1}{\p-1}} x_1+x_2}{a^{\frac{1}{\p-1}}+1}  \right|^\p  \notag  \\
&=  \frac{a}{(a^{\frac{1}{\p-1}}+1)^\p}  \left| x_1-x_2 \right|^\p +  \frac{a^{\frac{\p}{\p-1}}}{(a^{\frac{1}{\p-1}}+1)^\p}  \left|x_2- x_1 \right|^\p   \notag \\
&=  \left|x_2- x_1 \right|^\p    \frac{a}{(a^{\frac{1}{\p-1}}+1)^{\p-1}}.
\end{align}

Since $\frac{a}{(a^{\frac{1}{\p-1}}+1)^{\p-1}} \le \min(1,a)$, we have that the minimum  of $g(v)$ occurs at

\begin{align}
v=  \frac{a^{\frac{1}{\p-1}} x_1+x_2}{a^{\frac{1}{\p-1}}+1}=\frac{q^{{\frac{1}{\p-1}}}  \eu^{- \frac{(y-\sqrt{\snr}x_1)^2}{2(r-1)}} \cdot x_1+(1-q)^{{\frac{1}{\p-1}}}  \eu^{- \frac{(y-\sqrt{\snr}x_2)^2}{2(\p-1)}} \cdot x_2}{q^{{\frac{1}{\p-1}}} \ \eu^{- \frac{(y-\sqrt{\snr}x_1)^2}{2(\p-1)}}+(1-q)^{{\frac{1}{\p-1}}}  \eu^{- \frac{(y-\sqrt{\snr}x_2)^2}{2(\p-1)}}}. \label{eq: opt of binary}
\end{align}

Therefore, the optimal estimator is given by the RHS of \eqref{eq: opt of binary}.

Note, that for the case of $\p=1$ the function $g(v)$ reduces to 
\begin{align}
g(v)= a |x_1-v| + |x_2-v|,
\end{align} 
and the minimum occurs at 
\begin{align}
v=\left \{  \begin{array}{ll} x_1,  & a \ge 1 \\
x_2, & a <1
\end{array}\right. .
\end{align}
This implies that for  $\p=1$ the optimal estimator is 
\begin{align}
f_\p(X|Y=y)=\left \{  \begin{array}{ll} x_1,  & a \ge 1 \\
x_2, & a <1
\end{array}\right. ,
\end{align} 
where $a=\frac{q \ \eu^{- \frac{(y-\sqrt{\snr}x_1)^2}{2}} }{(1-q) \ \eu^{- \frac{(y-\sqrt{\snr}x_2)^2}{2}}}$.
This concludes the proof.

\section{ Proof of Proposition~\ref{prop:opt est}}
\label{app:prop:opt est}
The key to deriving all of the claimed properties is the expression of the optimal estimator in Proposition~\ref{prop:existence of optimal estimator}. We prove next all the properties. 
\begin{enumerate}
\item For $ 0\le X \in \mathbb{R}$ suppose that 
\begin{align}
 0 > v_y =f_\p(X|Y=y)= \arg \min_{v}  \E \left[ | X-v|^{\p} | Y= y \right],
\end{align}
then 
\begin{align}
\min_{v}  \E \left[ | X-v|^{\p} | Y= y \right] &= \E \left[ | X-v_y|^{\p} | Y= y \right] \notag\\
&\stackrel{a)}{=} \E \left[ ( X-v_y)^{\p} | Y= y \right] \notag\\
& \stackrel{b)}{ \ge}  \E \left[ X^{\p} | Y= y \right], \label{eq: contradic equation}
\end{align}
where the (in)-equalities follow from:  a) using the assumption that  $X \ge 0$ and $v_y <0$ so $X-v_y >0$ and the absolute value is redundant; and  b) by using  the assumption that  $X \ge 0$ and $v_y <0$ then   $X-v_y \ge X$.
The expression in \eqref{eq: contradic equation} leads to a contradiction since it implies that $v_y=0$ but by assumption $v_y <0$.     Therefore, $v_y= f_\p(X|Y=y) \ge 0$.
This concludes the proof of property 1). 
\item Next we show that $f_\p( a\X+b| \Y)=a f_\p( \X| \Y)+b$. Let 
\begin{align}
{\bf  v_{\bf y}} =f_\p(\X| \Y={\bf y})&=\arg \min_{{\bf v}}  \E \left[  \Trc^\frac{\p}{2} ( \X -{\bf v} ) (\X -{\bf v})^T| \Y={\bf y} \right],
\end{align} 
then 
\begin{align}
f_\p( a\X+b| \Y={\bf y})&=\arg \min_{{\bf v}}  \E \left[  \Trc^\frac{\p}{2} (a \X+b -{\bf v} ) (a\X+b -{\bf v})^T \Big| \Y={\bf y} \right]\notag\\
&=\arg \min_{{\bf v}} a^{\p} \ \E \left[  \Trc^\frac{\p}{2} \left( \X - \frac{{\bf v}-b}{a} \right) \left(\X -\frac{ {\bf v}-b}{a}\right)^T \Big| \Y={\bf y} \right]\notag\\
&\stackrel{a)}{=} \arg \min_{{\bf v}}  \E \left[  \Trc^\frac{\p}{2} \left( \X - \frac{{\bf v}-b}{a} \right) \left(\X -\frac{{\bf v}-b}{a}\right)^T \Big| \Y={\bf y} \right]\notag\\
&\stackrel{b)}{=} a {\bf  v_{\bf y}}+b\notag\\
&=a  f_\p(\X| \Y={\bf y})+b,
\end{align}
where the equalities follow from: a) since scaling the objective function does not change the optimizer; and  b) since the minimum is attained at $\frac{{\bf v}-b}{a}={\bf v_y}$. This concludes the proof of property 2).

\item Next, we show that $f_\p(g(\Y)   | \Y={\bf y}) =g(\Y)$.
Since,
\begin{align}
f_\p(g(\Y)   | \Y={\bf y}) &=\arg \min_{{\bf v}}  \E \left[ \Err^\frac{\p}{2} (g(\Y), {\bf v}) | \Y={\bf y} \right] \\
&=\arg \min_{{\bf v}} \int  \Err^\frac{\p}{2} (g({\bf y}), {\bf v}) p_{\X | \Y} ({\bf x}| {\bf y}) d{\bf x}\\
&=\arg \min_{{\bf v}} \Err^\frac{\p}{2} (g({\bf y}), {\bf v}) \\
&=g({\bf y}).
\end{align}

This concludes the proof of property 3).

\item Follows from property 3) by taking $g(\Y)=f_\p(\X| \Y)$.

\item Observe that  for the Markov chain  $\X \to  \Y_{\snr_0}  \to \Y_{\snr}$  we have 
\begin{align}
 p_{\X| \Y_{\snr_0} , \Y_{\snr}}({\bf x|y_{\snr_0},y_{\snr}})= p_{\X| \Y_{\snr_0}}({ \bf x|y_{\snr_0}}). \label{eq: Marko chain}
 \end{align}
 By using Proposition~\ref{prop:existence of optimal estimator} we have that
 \begin{align*}
f_\p \left(\X|\Y_{\snr_0}={\bf y_{\snr_0}}, \Y_{\snr} ={\bf y_{\snr}}\right)&=  \arg \min_{{\bf v} \in \mathbb{R}^n} \E \left[ \Err^\frac{\p}{2}(\X,{\bf v}) |\Y_{\snr_0}={\bf y_{\snr_0}}, \Y_{\snr} ={\bf y_{\snr}} \right] \\
 &= \arg \min_{{\bf v} \in \mathbb{R}^n} \int \Err^\frac{\p}{2}( { \bf x},{\bf v})  p_{\X| \Y_{\snr_0} , \Y_{\snr}}({\bf x|y_{\snr_0},y_{\snr}}) {\bf dx}\\
  &\stackrel{a)}{=}  \arg \min_{{\bf v} \in \mathbb{R}^n} \int \Err^\frac{\p}{2}( { \bf x},{\bf v})  p_{\X| \Y_{\snr_0} }({\bf x|y_{\snr_0}})  {\bf dx}\\
 &=  \arg \min_{{\bf v} \in \mathbb{R}^n} \E \left[ \Err^\frac{\p}{2}(\X,{\bf v}) |\Y_{\snr_0}={\bf y_{\snr_0}} \right] \\&=f_\p \left( \X | \Y_{\snr_0}={\bf y_{\snr_0}}\right)
  \end{align*}
  where the equality in a) follows from \eqref{eq: Marko chain}.
  \item See Fig.~\ref{fig:Orthonognality} for the counter example. 
\end{enumerate}
This concludes the proof.

\section{Proof of the bound in Proposition~\ref{prop:change of measure}}
\label{app: prop:change of measure}

We define 
\begin{align}
\hat{\Y}_{\snr}=\Y_{\snr_0}+\Z',
\end{align}
where $\Z' \sim \mathcal{N}(0, \sigma^2 \I )$ whith $\sigma^2=\frac{\snr_0-\snr}{\snr}$ is independent of $\Y_{\snr_0}$, $\X$ and $\Z$. Observe that $\hat{\Y}_{\snr}$ and $\Y_{\snr}$ have  the same SNR's and therefore
\begin{align}
\mmpe(\X|\Y_{\snr};\p)=\mmpe(\X|\hat{\Y}_{\snr};\p).
\end{align}

By performing a change of measure we have 
\begin{align}
 n \ \mmpe(\X|\hat{\Y}_{\snr};\p) &= \inf_{f} \E \left[ \Err^\frac{\p}{2} \left(\X,f(\hat{\Y}_{\snr})\right) \right] \\
 &= \inf_{f} \E \left[\Err^\frac{\p}{2} \left(\X,f(\Y_{\snr_0})\right)  L(\X,\Y_{\snr_0})\right],
\end{align}

where $L({ \bf x},{ \bf y})$ is given by
\begin{align}
L({ \bf x},{ \bf y})&=\frac{  p_{\hat{\Y}_{\snr}|\X}({\bf y}| {\bf x})  }{ p_{\Y_{\snr_0}|\X}({\bf y}| {\bf x})}= \frac{ \frac{1}{\sqrt{(2 \pi)^n (1+\sigma^2)} } \eu^{-\frac{1}{2} ({ \bf y}-\sqrt{\snr_0} {\bf x})^T  \frac{1}{1+\sigma^2}\I  ({ \bf y}-\sqrt{\snr_0} {\bf x})}}{ \frac{1}{\sqrt{(2 \pi)^n } } \eu^{-\frac{1}{2} ({ \bf y}-\sqrt{\snr_0} {\bf x})^T \I  ({ \bf y}-\sqrt{\snr_0} {\bf x})}},
\end{align}

and thus 
\begin{align*}
 n \ \mmpe(\X|\hat{\Y}_{\snr};\p) &= \inf_{f} 
\E \left[\Err^\frac{\p}{2} \left(\X,f(\Y_{\snr_0})\right)  L(\X,\Y_{\snr_0})\right]\notag\\
&= \inf_{f} \frac{1}{ \sqrt{1+\sigma^2}} \E \left[ \Err^\frac{\p}{2} \left(\X,f(\Y_{\snr_0})\right) \eu^{\frac{1}{2} \Z^T \I  \Z-\frac{1}{2} \Z^T  \frac{1}{1+\sigma^2}\I  \Z }\right]  \notag\\
&=  \inf_{f} \sqrt{\frac{\snr}{\snr_0 }} \E \left[ \Err^\frac{\p}{2} \left(\X,f(\Y_{\snr_0})\right)  \eu^{\frac{\snr_0-\snr}{2\snr_0} \Z^T  \Z }\right]  \notag\\
&=  \inf_{f} \sqrt{\frac{\snr}{\snr_0 }}\E \left[ \Err^\frac{\p}{2} \left(\X,f(\Y_{\snr_0})\right)  \eu^{\frac{\snr_0-\snr}{2\snr_0}  \sum_{i=1}^n Z_i^2 }\right].
\end{align*}

This concludes the proof.

\section{Proof of Proposition~\ref{prop:higher moments bound 1}}
\label{app:bounds}

\subsection{Proof of the bound in \eqref{eq: chain bound on MMPE}}

The upper bound in \eqref{eq: chain bound on MMPE} follows from the fact that $\E[\X |\Y]$ is in general a suboptimal estimator for a given $p$ thus
\begin{align}
\mmpe(\X,\snr,\p) \le   \frac{1}{n}\E \left[\Err^\frac{\p}{2} \left(\X,\E[\X |\Y]\right)\right].
\end{align}

The lower  bound  in \eqref{eq: chain bound on MMPE} for $\p\ge {\rm q}$ follows  by
\begin{align*}
\mmpe(\X,\snr,{\rm q})=\inf _{f}   \frac{1}{n}\E \left[\Err^\frac{{\rm	q }}{2}(\X,f(\X|\Y))\right]  &= \inf _{f}   \frac{1}{n}\E \left[\Err^{\frac{\p {\rm q}}{2 \p} }(\X,f(\X|\Y)) \right]  \\
&  \stackrel{a)}{\le} \inf _{f}   \frac{1}{n} \left(\E \left[ \Err^\frac{\p}{2}(\X,f(\X|\Y))  \right] \right)^{\frac{\rm q}{\p}}\\
&=   \left(  \frac{1}{n^{\frac{p}{q}}} \inf _{f} \E \left[ \Err^\frac{\p}{2}(\X,f(\X|\Y)) \right] \right)^{\frac{\rm q}{\p}}\\
&=   \left(  \frac{1}{n^{\frac{\p}{\rm q}-1}} \mmpe(\X,\snr,\p) \right)^{\frac{\rm q}{\p}},
\end{align*}
where the inequality in a) follows from  Jensen's inequality and the concavity of $(\cdot)^{\frac{\rm q}{\p}}$.

\subsection{ Proof of the bounds in \eqref{eq: 1/snr bound for higher orders} and \eqref{eq: 1/snr bound for higher orders: 1/2<p<1}}

We now proceed to the proof of the upper bounds in \eqref{eq: 1/snr bound for higher orders} and \eqref{eq: 1/snr bound for higher orders: 1/2<p<1}. We have
\begin{align}
\| \X-\E[\X|\Y]  \|_{\p}&\stackrel{a)}{=} \frac{1}{ \sqrt{\snr}} \|\Z-\E[\Z|\Y] \|_{\p} \notag\\
&\stackrel{b)}{\le}  \frac{1}{ \sqrt{\snr}}  \left(\| \Z \|_{\p}+  \|\E[\Z| \Y] \|_{\p}  \right)  \label{eq: bound on high order error with Z},
 \end{align}
where the (in)-equalities follow from: a) by using Lemma~\ref{lem: equivalence of errors}, b) by using the triangle inequality which holds for $\p \ge 1$. 

Next, the term  $  \|\E[\Z| \Y] \|_p$ can be further bound as follows:
\begin{align}
n^{\frac{1}{\p}}  \|\E[\Z| \Y] \|_\p&=\E^{\frac{1}{\p}} \left[ \Trc^\frac{\p}{2}  \left(\E[\Z|\Y] \E^T[\Z|\Y]  \right)\right] \notag \\&=\E^{\frac{1}{\p}} \left[  \left( \sum_{i=1}^n \E^2[Z_i|\Y] \right)^\frac{\p}{2} \right] \notag\\
  & \stackrel{a)}{\le} \E^{\frac{1}{\p}} \left[  \left( \sum_{i=1}^n \E[Z_i^2|\Y] \right)^\frac{\p}{2} \right] \notag\\
&=\E^{\frac{1}{\p}} \left[  \E^\frac{\p}{2} \left[ \sum_{i=1}^nZ_i^2 | \Y\right] \right] \notag\\
  &=\E^{\frac{1}{\p}} \left[   \E^\frac{\p}{2} \left[ \Trc ( \Z \Z^T) | \Y\right] \right], \label{eq: bound on conditional expectation}
\end{align}
where the inequality in a) follows from using Jensen's inequality.
Depending on whether $\frac{\p}{2} \le 1$ or $\frac{\p}{2} \ge 1$ we bound \eqref{eq: bound on conditional expectation}
as follows:
\begin{subequations}
\begin{align}
&\text{ for $\p \ge 2$ : }  \E^{\frac{1}{\p}} \left[   \E^\frac{\p}{2} \left[ \Trc ( \Z \Z^T) | \Y\right] \right]  \stackrel{a)}{\le} \E^{\frac{1}{\p}} \left[   \E \left[ \Trc^\frac{\p}{2} ( \Z \Z^T) | \Y\right] \right] =\E^{\frac{1}{\p}} \left[   \Trc^\frac{\p}{2} ( \Z \Z^T)  \right] = n^{\frac{1}{\p}}  \|\Z \|_\p,\\
&\text{ for $1 \le \p < 2$ : }  \E^{\frac{1}{\p}} \left[   \E^\frac{\p}{2} \left[ \Trc ( \Z \Z^T) | \Y\right] \right]  \stackrel{b)}{\le} \E^{\frac{1}{2}} \left[   \E \left[ \Trc ( \Z \Z^T) | \Y\right] \right] =\E^{\frac{1}{2}} \left[   \Trc ( \Z \Z^T)  \right] = n^{\frac{1}{2}}  \|\Z \|_2,
\end{align}
\label{eq: for any p with Z}
\end{subequations}
where the inequalities follow from: a)   by using Jensen's inequality on a convex function $x^r$ for $r \ge 1$; and  b)  by using Jensen's inequality on a concave function $x^r$ for $r \le 1$.

By putting \eqref{eq: bound on high order error with Z}, \eqref{eq: bound on conditional expectation}  and \eqref{eq: for any p with Z} together we get 
\begin{subequations}
 \begin{align}
&\text{ for $\p \ge 2$ : } \| \X-\E[\X|\Y]  \|_\p \le  \frac{2}{ \sqrt{\snr}}  \|\Z \|_\p,\\
& \text{ for $1 \le \p < 2$ : } \| \X-\E[\X|\Y]  \|_\p \le  \frac{1}{ \sqrt{\snr}} \left(\|\Z \|_\p +n^{\frac{1}{2}-\frac{1}{\p}} \|\Z \|_2  \right).
 \end{align}
 \label{eq: bound via Z}
 \end{subequations}

The second term in the minimum of \eqref{eq: 1/snr bound for higher orders} and \eqref{eq: 1/snr bound for higher orders: 1/2<p<1} is shown by assuming that  $\| \X \|_\p$ is finite and  by  mimicking the steps leading to  the bound in  \eqref{eq: bound via Z}.  We have
\begin{subequations} 
 \begin{align}
&\text{ for $\p \ge 1$ : } \| \X-\E[\X|\Y]  \|_\p \le 2  \|\X \|_\p,\\
& \text{ for $1 \le \p < 1$ : } \| \X-\E[\X|\Y]  \|_\p \le  \left(\|\X \|_\p +n^{\frac{1}{2}-\frac{1}{\p}} \|\X \|_2  \right).
 \end{align}
 \label{eq: bound via X}
 \end{subequations}

Taking the minimum, between \eqref{eq: bound via Z} and \eqref{eq: bound via X}   concludes the proof.

\subsection{Proof of the bound in \eqref{eq: p-th error bounds} }
The first part of the  bound  in \eqref{eq: p-th error bounds} follows by choosing $f({\bf y})=\frac{{\bf y}}{\sqrt{\snr}}$ in the definition of then MMPE, and hence
\begin{align}
\mmpe(\X,\snr,\p) &\le \left \| \X-\frac{\Y}{\sqrt{\snr}} \right\|_{\p}^{\p} \notag\\
&=  \frac{1}{\snr^\frac{\p}{2}}\| \Z \|_{\p}^{\p}. \label{eq:bound1}
\end{align}
The bound holds as long as  $\| \Z \|_{\p}^{\p}=\E \left[  \left( \sum_{i=1}^n Z_i^2 \right)^\frac{\p}{2} \right] $ is finite which is  the case for $\p \ge 0$.

The second bound follows by choosing $f({\bf y})=0$ in the definition of then MMPE, and hence
\begin{align}
\inf _{f} \E[\Trc^\frac{\p}{2} (\X-f(\X|\Y)) (\X-f(\X|\Y))^T  ]  \le \E \left[ \Trc^\frac{\p}{2} \left(\X\X^T \right) \right],  \label{eq:bound2}
\end{align}
which holds for any $\p$ as long as $\E \left[ \Trc^\frac{\p}{2} \left(\X\X^T \right) \right] $ exists. 

The proof of the upper bound in \eqref{eq: p-th error bounds} is completed by taking the minimum of the bound in \eqref{eq:bound1} and \eqref{eq:bound2}. 
This concludes the proof.

\section{Proof of the bound in Proposition~\ref{prop:GaussianHardesToEstimate} }
\label{app: GaussianHardesToEstimate}

First we show that if $\| \X \|_{\p} \le \| \Z \|_{\p}$ then 
\begin{align}
\mmpe(\X,\snr,\p) \le \kappa_{\p,\snr} \frac{  \|\Z \|_{\p}^{\p} }{(1+\snr)^\frac{\p}{2}}. \label{eq: 1/(1+snr)  bound;temp}
\end{align} 
 Consider the following sub-optimal estimator  $ f(\Y)= \frac{ \sqrt{\snr}}{1+\snr} \Y$ 
\begin{align}
\mmpe(\X,\snr,\p) &\le \left\| \X - \frac{ \sqrt{\snr}}{1+\snr} \Y \right\|_{\p}^{\p} \notag\\
&= \left\| \frac{1}{1+\snr}  \X - \frac{ \sqrt{\snr}}{1+\snr} \Z \right\|_{\p}^{\p} \notag\\
&= \frac{ \left\| \X -  \sqrt{\snr}\Z \right\|_{\p}^{\p} }{(1+\snr)^{\p}} \notag\\
& \stackrel{a)}{\le} \frac{ \left(\| \X \|_\p  + \sqrt{\snr} \|\Z \|_\p\right)^{\p} }{(1+\snr)^{\p}} \notag\\
&  \stackrel{b)}{\le} \frac{ \left( 1+\sqrt{\snr} \right)^{\p} \|\Z \|_{\p}^{\p} }{(1+\snr)^{\p}},\\
&= \kappa_{\p,\snr}   \frac{  \|\Z \|_{\p}^{\p} }{(1+\snr)^\frac{\p}{2}}, \text{ where }  \kappa_{\p,\snr}^{\frac{1}{\p}}=\frac{1+\sqrt{\snr}}{\sqrt{1+\snr}},
\end{align}
where the (in)-equalities follow from: a) triangle inequality and scaling property of the norm; and b) by using the assumption that $  \| \X \|_{\p} \le   \|\Z \|_{\p}$.

Next, let $\X= \sigma {\bf U}$.  Then  $\| \X \|_\p = \|  \sigma {\bf U} \|_\p \le \sigma \| \Z \|_\p$ and  therefore $\| {\bf U} \|_\p \le  \| \Z \|_\p$, so by using the bound in \eqref{eq: 1/(1+snr)  bound;temp} we have that 
\begin{align}
\mmpe(\X,\snr,\p)&= \mmpe(\sigma {\bf U},\snr,\p)\\
&\stackrel{a)}{=}\sigma^{\p} \mmpe({\bf U},\sigma^2\snr,\p) \\
& \stackrel{b)}{\le}  \kappa_{\p, \sigma^2\snr} \sigma^{\p}   \frac{\| \Z \|_{\p}^{\p} }{ (1+ \snr \sigma^2)^\frac{\p}{2}},
\end{align} 
where the (in)-equalities follow from: a) by using the scaling property of the MMPE in Proposition~\ref{prop:Scaling}; and  b) by using the bound in \eqref{eq: 1/(1+snr)  bound;temp}.

Observe that the bound in \eqref{eq: 1/(1+snr)  bound;temp} is achieved asymptotically by using $ \X_G \sim \mathcal{N}( {\bf 0}, \sigma^2 \I)$  since by Proposition~\ref{prop:GaussianMMPE} and the scaling property in Proposition~\ref{prop:Scaling} we have that 
\begin{align}
\mmpe(\X_G,\snr,\p)= \frac{\sigma^{\p}  \| \Z \|_{\p}^{\p} }{ (1+ \snr \sigma^2)^\frac{\p}{2}}.  
\end{align}
This concludes the proof.

\section{Proof of Proposition~\ref{prop:bound on discrete inputs}}
\label{app:prop:bound on discrete inputs}

We use the approach of \cite{OptimalPowerAllocationOfParChan}. Suppose we use the following sub-optimal decoder:
\begin{align}
g(\Y)=  \sum_{i=1}^N{\bf x}_i  1_{  B_{{\bf x}_i}(r)} (\Y),
\end{align} 
where $B_{{\bf x}_i}(r)$ is the $n$-dimensional ball of radius $ r=\frac{ \sqrt{\snr  \ d^2({\bf x}_i)}}{2}$ centered at ${\bf x}_i$. Then
\begin{align}
  n \  \mmpe(\X,\snr,\p) & \le \E \left[ \Err^\frac{\p}{2} (\X, g(\Y) )\right]\\
  & =  \sum_{i=1}^N  p_i\E \left[ \Err^\frac{\p}{2} (\X, g(\Y) ) | {\bf x}_i\right] \\
  &= \sum_{i=1}^N  p_i\E \left[ \Err^\frac{\p}{2} (\X, g(\Y) ) | \X= {\bf x}_i, \Y \in B_{{\bf x}_i}(r) \right] \mathbb{P}[ \Y \in B_{{\bf x}_i}(r) |\X={\bf x}_i ]  \\
  &+ \sum_{i=1}^N  p_i\E \left[ \Err^p (\X, g(\Y) ) | \X={\bf x}_i, \Y \notin B_{{\bf x}_i}(r) \right] \mathbb{P}[ \Y \notin B_{{\bf x}_i}(r) |\X={\bf x}_i].
\end{align}
Since $\E \left[ \Err^\frac{\p}{2} (\X, g(\Y) ) | \X= {\bf x}_i, \Y \in B_{{\bf x}_i}(r)\right]=0$ we have that
\begin{align}
  n \  \mmpe(\X,\snr,\p)  \le \sum_{i=1}^N  p_i\E \left[ \Err^\frac{\p}{2} (\X, g(\Y) ) | \X={\bf x}_i, \Y \notin B_{{\bf x}_i}(r) \right] \mathbb{P}[ \Y \notin B_{{\bf x}_i}(r) | \X={\bf x}_i].
\end{align}

First, observe that
\begin{align*}
\mathbb{P}[ \Y \notin B_{{\bf x}_i}(r) ,\X={\bf x}_i]&= \mathbb{P}[ \Z \notin S_{{\bf 0},r} ] \\
&= 1 - \mathbb{P}[ \Z \in S_{{\bf 0},r} ] \\
&=1- \int_{S_{{\bf 0},r}} \frac{1}{\sqrt{(2 \pi)^n }} \eu^{-\frac{1}{2} {\bf x}^T {\bf x}} d{\bf x}\\
&= 1- \frac{1}{2^{\frac{n}{2}-1} \Gamma \left( \frac{n}{2}\right)} \int_{0}^r  \rho^{n-1}\eu^{-\frac{\rho^2}{2}} d \rho\\
&=  \frac{\int_{r}^\infty  \rho^{n-1}\eu^{-\frac{\rho^2}{2}} d \rho}{2^{\frac{n}{2}-1} \Gamma \left( \frac{n}{2}\right)}  \\
& =\frac{ \Gamma \left( \frac{n}{2}; \frac{r^2}{2}\right)}{\Gamma \left( \frac{n}{2}\right)} \\
&= \bar{Q}\left( \frac{n}{2}; \frac{r^2}{2}\right).
\end{align*}

Second, observe that 
\begin{align*}
\E \left[ \Err^\frac{\p}{2} (\X, g(\Y) ) | \X={\bf x}_i, \Y \notin B_{{\bf x}_i}(r) \right] 
&= \E \left[ \| {\bf x}_i -{ \bf x}_j \|^{\p} | \X={\bf x}_i, \Y \notin B_{{\bf x}_i}(r) \right]  \\
& \le d_{\max}^{\p} (\X_D).
\end{align*}

Therefore, we have that
\begin{align*}
\mmpe(\X_D,\snr,\p)& \le  d_{\max}^{\p}(\X_D)  \frac{ \sum_{i=1}^N p_i \bar{Q} \left( \frac{n}{2}; \frac{\snr \ d_{ {\bf x}_i}^2(\X_D) }{8}\right)}{ n }  \\
&  \le  d_{\max}^{\p}(\X_D)   \frac{ \bar{Q} \left( \frac{n}{2}; \frac{\snr \ d_{ \min }^2(\X_D) }{8}\right)\sum_{i=1}^N p_i }{ n } \\
&  =  d_{\max}^{\p}(\X_D)   \frac{ \bar{Q} \left( \frac{n}{2}; \frac{\snr \ d_{ \min }^2(\X_D) }{8}\right)}{ n },
\end{align*}
where the last inequality follows since $\bar{Q}(x,a)$ is decreasing in $a$. 
This concludes the proof.

\section{Proof of Theorem~\ref{prop:generalization of continuous Fano's inequality} }
\label{proof:generalization of continuous Fano's inequality}
Let $ \W_{\bf v}=\U_{\bf v}-g({\bf v})$ where $g(\cdot)$ is a deterministic function and $\U_{\bf v} \sim p_{\U | \V} (\cdot  | {\bf v})$. By \cite[Theorem 3]{MomentsEntropyInequality} we have 
\begin{align}
\frac{ n^{\frac{1}{\p}} \| \W_{\bf v} \|_\p }{e^{ \frac{1}{n}h_{\eu}(\W_{\bf v})}} \ge    \frac{1}{k_{n,\p}}, \
 k_{n,\p}:= \frac{ \sqrt{\pi}\left( \frac{\p}{n} \right)^{\frac{1}{\p}}\eu^{\frac{1}{\p}} \Gamma^{\frac{1}{n}} \left( \frac{n}{\p}+1\right)}{ \Gamma^{\frac{1}{n}}\left( \frac{n}{2}+1 \right)},  \label{eq: bound moments and enropies: any n} 
\end{align} 
where $h_{\eu}(\cdot)$ is the differential entropy measured in nats. 
Moreover, observe that  
$h_{\eu}(\W_{\bf v}) =h_{\eu}(\U_{\bf v} -g({\bf v}))=h_{\eu}(\U_{\bf v})$ 
due to the  translation invariance of the differential entropy. 
Therefore, by rearranging  \eqref{eq: bound moments and enropies: any n}  and by using the translation invariance of the differential entropy, we get 
\begin{align}
\frac{1}{n}h_{\eu}(\U_{\bf v}) \log(\eu) &\le  \log \left( k_{n,\p}  \cdot n^{\frac{1}{\p}} \| \W_{\bf v} \|_\p\right),
\label{eq: bound on marginal conditional entropy}
\end{align} 
where from \eqref{eq: defintion of the norm} we have  $n^{\frac{1}{\p}}  \| \W_{\bf v} \|_\p= \E^{\frac{1}{\p}} \left[ \Err^\frac{\p}{2}(\U, g(\V)) | \V={\bf v} \right]$.
By taking the expectation on both sides of \eqref{eq: bound on marginal conditional entropy} with respect to $p_{\V}({\bf v})$ we arrive at
\begin{align*}
&n^{-1}h_{\eu}(\U| \V) \log(\eu)=  n^{-1}h(\U| \V) \\
&  \le  \frac{1}{\p}\E \left[  \log \left( k_{n,\p}^{\p}  \cdot n \cdot \frac{1}{n} \cdot  \E \left[ \Err^\frac{\p}{2}(\U, g(\V)) | \V \right]\right) \right] \\
&  \stackrel{a)}{\le}   \frac{1}{\p} \log \left( k_{n,\p}^{\p}  \cdot n \cdot \frac{1}{n} \cdot \E \left[  \E \left[ \Err^\frac{\p}{2}(\U, g(\V)) | \V \right] \right]\right) \\
&  =   \frac{1}{\p} \log \left( k_{n,\p}^{\p}  \cdot n \cdot \frac{1}{n} \cdot  \E \left[ \Err^\frac{\p}{2}(\U, g(\V)) \right] \right) \\
&  =   \log \left( k_{n,\p}  \cdot n^{\frac{1}{\p}} \cdot  \| \U-g(\V) \|_\p \right),
\end{align*}
where the inequality in a) follows from Jensen's inequality. 
Finally, since this bound holds for any deterministic function $g(\cdot)$, to tighten this bound,  and due to the monotonicity of the $\log$ function, we may pick $g(\cdot)$ to be the optimal $\p$-th estimator of $\U$.
This concludes the proof.

\section{Proof of Theorem~\ref{prop:OWimproved}}
\label{app:prop:OWimproved}
Let $(\U,\X_D,\Z)$ be mutually independent. By the data processing inequality and the assumption in \eqref{eq: assumption on U} we have
\begin{align}
I( \X_D; \Y) & 
\geq I( \X_D+ \U;\Y) = h(\X_D+ \U)- h(\X_D+ \U| \Y) \notag\\
&= 
H(\X_D) +h(\U)- h(\X_D+ \U| \Y).
\label{eq: OWproof-decomp}
\end{align}
Next, by using Theorem~\ref{prop:generalization of continuous Fano's inequality}, 
we have that the last term of \eqref{eq: OWproof-decomp} can be bounded as
\begin{align}
&n^{-1} h(\X_D+ \U| \Y) \le  \log \left( k_{n,\p}  \cdot n^{\frac{1}{\p}} \cdot  \|  \X_D+\U-g(\Y) \|_\p \right) .
\label{eq: third term after triangular inequality}
\end{align} 
Next,  by  combining  \eqref{eq: OWproof-decomp} and \eqref{eq: third term after triangular inequality} and taking $g(\Y)=f_p(\X| \Y)$ we have that 
\begin{align}
I( \X_D; \Y) & \ge H(\X_D)- \gap_p, \\
 n^{-1} \mathsf{gap}_\p& \le  \inf_{\U \in \mathcal{K} }  \left( L_{1,\p}(\U,\X_D) + L_{2,\p}(\U)  \right),  \notag\\
G_{1,\p}(\U,\X_D)&=  \log \left(   \frac{  \|  \U+\X_D-f_\p(\X| \Y) \|_\p}{ \| \U\|_\p} \right)   \stackrel{ \text{ for  $\p \ge 1$} }{\le}     \log \left( 1+\frac{ \mmpe^{\frac{1}{\p}}(\X_D,\snr,\p) }{\| \U \|_\p}  \right), \label{eq: bound on L1p term} \\
G_{2,\p}(\U)&= \log \left(  \frac{  k_{n,\p}  \cdot n^{\frac{1}{\p}} \cdot  \| \U \|_\p}{\eu^{ \frac{1}{n} h_{\eu}(\U)} }\right),
\end{align} 
where inequality in \eqref{eq: bound on L1p term} follows by the triangle inequality which holds for $\p \ge 1$.

Finally, the proof  concludes by taking $g(\Y)=f_p(\X| \Y)$.

\section{ Proof of Theorem~\ref{prop: Gap in Ozarow-Wyner at n infinity}} 
\label{app:prop: Gap in Ozarow-Wyner at n infinity}
To show that $\lim_{ n \to \infty} G_{2,\p} (\U) = 0$ we show that 
\begin{align}
 \lim_{n \to \infty} \frac{  k_{n,\p}  \cdot n^{\frac{1}{\p}} \cdot  \| \U \|_\p}{\eu^{ \frac{1}{n} h_{\eu}(\U)} }=1.
\end{align} 
First of all observe that  using \eqref{eq: bound moments and enropies: any n} in Appendix~\ref{proof:generalization of continuous Fano's inequality} 
\begin{align}
1 \le \frac{  k_{n,\p}  \cdot n^{\frac{1}{\p}} \cdot  \| \U \|_\p}{\eu^{ \frac{1}{n} h_{\eu}(\U)} }.
\end{align} 
Next, we show an upper bound. Note that if $\U$ is uniform over a ball $B_0(r)$ of radius $r=d_{\min}(\X_D)/2$ then 
\begin{align}
h(\U)&=\log \left({\rm Vol}(B_0(r)) \right), \\
\text{ where } {\rm Vol}(B_0(r))&=\frac{\pi^{n/2}}{\Gamma \left( \frac{n}{2}+1\right)}  r^n. \label{eq:entropyOfU}
\end{align}
Moreover, the norm $\U$ can be upper bounded by
\begin{align}
\| \U \|_{\p}^{\p}&= \frac{1}{n}\frac{1}{{\rm Vol}(B_0(r))} \int_{B_0(r)}  \left( \sum_{i=1}^n  u_i^2 \right)^{\frac{\p}{2}} du_1 du_2 \cdot \cdot \cdot du_n \notag\\
& \le  \frac{1}{n}\frac{1}{{\rm Vol}(B_0(r))}     \int_{B_0(r)}  \left(  r^2 \right)^{\frac{\p}{2}}  du_1 du_2 \cdot \cdot \cdot du_n = \frac{r^{\p}}{n}. \label{eq:boundOnNormOfU}
\end{align}

Therefore, by using \eqref{eq:boundOnNormOfU}  and  \eqref{eq:entropyOfU}
\begin{align}
\frac{  k_{n,\p}  \cdot n^{\frac{1}{\p}} \cdot  \| \U \|_\p}{\eu^{ \frac{1}{n} h_{\eu}(\U)} }  &\le   \frac{ k_{n,\p}  \cdot  \Gamma^{\frac{1}{n}} \left( \frac{n}{2}+1\right) }{ \sqrt{\pi}} \\
&= \frac{ \sqrt{\pi}\left( \frac{\p}{n} \right)^{\frac{1}{\p}}\eu^{\frac{1}{\p}} \Gamma^{\frac{1}{n}} \left( \frac{n}{\p}+1\right)  \Gamma^{\frac{1}{n}}\left( \frac{n}{2}+1 \right)}{ \Gamma^{\frac{1}{n}}\left( \frac{n}{2}+1 \right) \sqrt{\pi}} \\
&=\left( \p \eu  \right)^{\frac{1}{\p}} \left( \frac{1}{n} \right)^{\frac{1}{\p}} \Gamma^{\frac{1}{n}} \left( \frac{n}{\p}+1\right).
\end{align}

Next by using  the Stirling's approximation $\Gamma(x+1)=  \sqrt{2 \pi x} \left(  \frac{x}{ \eu}\right)^x + o(x)$ we have that
\begin{align}
\left( \frac{1}{n} \right)^{\frac{1}{\p}} \Gamma^{\frac{1}{n}} \left( \frac{n}{\p}+1\right) \le   \left( \frac{2 \pi n}{\p}\right)^{\frac{1}{n}} \left(  \frac{1}{\p \eu } \right)^\frac{1}{\p} + o \left( \frac{n}{\p} \right),
\end{align}
and therefore
\begin{align}
\frac{  k_{n,\p}  \cdot n^{\frac{1}{\p}} \cdot  \| \U \|_\p}{\eu^{ \frac{1}{n} h_{\eu}(\U)} } \le  \left( \frac{2 \pi n}{\p}\right)^{\frac{1}{n}}  + o \left( \frac{n}{\p} \right) \stackrel{ \text{ as $n \to \infty$} }{ \to } 1.
\end{align} 

This shows that $\lim_{ n \to \infty} G_{2,\p} (\U) = 0$. 

Next, we show that  $\lim_{ n \to \infty} G_{1,\p} (\X_D,\U) = 0$ by showing that  $\lim_{ n \to \infty}  \frac{ \mmpe(\X_D,\snr,\p)}{ \| \U \|_\p} = 0$. 
First, observe that  by using the bound in Proposition~\ref{prop:bound on discrete inputs}
\begin{align}
\mmpe^{\frac{1}{\p}}(\X_D,\snr,\p) \le  \frac{ d_{\max}  Q^{\frac{1}{\p}} \left( \frac{n}{2}; \frac{\snr d^2_{ \min} }{8}\right)}{n^{\frac{1}{\p}} }, 
\end{align}
and  by using  \eqref{eq: mommentsUniform} we have that

\begin{align}
  \frac{ \mmpe^{\frac{1}{\p}}(\X_D,\snr,\p)}{ \| \U \|_p}  &\le   \frac{    \frac{ d_{\max}(\X_D)  Q^{\frac{1}{\p}} \left( \frac{n}{2}; \frac{\snr d^2_{ \min}(\X_D) }{8}\right)}{n^{\frac{1}{\p}} } }{  \frac{d_{\min}(\X_D)}{2(\p+n)^{\frac{1}{\p}}}} \\
      &=2 \frac{ d_{\max}(\X_D)}{ d_{\min}(\X_D)} \sqrt[\p]{   \frac{ (\p+n)   \bar{Q} \left( \frac{n}{2}; \frac{\snr d^2_{ \min}(\X_D) }{8}\right) }{ n }} .\label{eq:conditionPakinng}
\end{align}

 This concludes the proof.

\section{On Finding the Optimal $r$ in the proof of Theorem~\ref{prop: bound through MMPE} }
\label{app: approximate r}

We must solve the following optimization problem:
 \begin{align}
 \min_{r> \frac{2}{\gamma}}g(r)&=M^{ \frac{ \gamma r-2}{r-2}}   \frac{G^\frac{r (1-\gamma)}{r-2}}{N^{\frac{2(1-\gamma)}{r-2}}} \Gamma^{\frac{2(1-\gamma)}{r-1}}(n/2+r/2), \\
 M&=\mmse(\X,\snr_0),\\
 G&=\frac{8}{\snr_0},\\
 N&=n  \Gamma \left( \frac{n}{2} \right)=2 \Gamma \left( \frac{n}{2}+1 \right).
 \end{align}
Instead of optimizing $g(r)$ we will focus on  optimizing  $h(r)=\ln(g(r))$ where
 \begin{align}
 h(r)=\frac{ \gamma r-2}{r-2} \ln(M)+\frac{r (1-\gamma)}{r-2} \ln(G)-\frac{2(1-\gamma)}{(r-2)} \ln(N)+ \frac{2(1-\gamma)}{r-2} \ln( \Gamma(n/2+r/2)). \label{eq: h(r) objectibe functions}
 \end{align}

Unfortunately,  a closed form solution for the optimum of \eqref{eq: h(r) objectibe functions}  is difficult to find and instead  we look for an approximate solution.  This is done by using Stirling's formula $\Gamma(x+1) \approx \sqrt{2 \pi x} \left( \frac{x}{\eu} \right)^x$. We have 
\begin{align}
\Gamma(n/2+r/2)=\Gamma(n/2+r/2-1+1)\approx \sqrt{2 \pi \left(r/2+\frac{n-2}{2}\right)} \left( \frac{r/2+\frac{n-2}{2}}{\eu} \right)^{r/2+\frac{n-2}{2}}.
\end{align}
Now, we seek to optimize the following expression:
\begin{align}
g(r) \approx \frac{M^{\frac{\gamma\, r - 2}{r - 2}}\, N^{\frac{2(\gamma - 1)}{ \left(r - 2\right)}}}{G^{\frac{ r\, \left(\gamma - 1\right)}{r - 2}} } \left( \sqrt{2 \pi (\frac{r}{2}+\frac{n-2}{2})} \left( \frac{\frac{r}{2}+\frac{n-2}{2}}{\eu} \right)^{\frac{r}{2}+\frac{n-2}{2}}
\right)^{\frac{2(1-\gamma)}{r-2}},
\end{align}

that is 

\begin{align}
h(r)& \approx \frac{ \gamma r-2}{r-2} \ln(M)+\frac{r (1-\gamma)}{r-2} \ln(G)-\frac{2(1-\gamma)}{r-2}\ln(N)+\frac{1-\gamma}{r-2}\ln \left(2 \pi \left(\frac{r}{2}+\frac{n-2}{2}\right)\right)
\notag\\&+ \frac{2(1-\gamma)(\frac{r}{2}+\frac{n-2}{2})}{r-2} \ln \left(\frac{r}{2}+\frac{n-2}{2}\right)-\frac{2(1-\gamma)(\frac{r}{2}+\frac{n-2}{2})}{r-2}. \label{eq:approx h(r)}
\end{align}

By taking the derivative of \eqref{eq:approx h(r)} with respect to $r$ we get
\begin{align}
h'(r)
&=\frac{1}{2} \frac{1-\gamma}{(\frac{r}{2}-1)^2}   f(r) \\
  f(r)
&=\ln(M)-\ln(\sqrt{2 \pi} G)+\log(N)+\frac{r-2}{n-2+r} -\frac{n+1}{2} \ln \left(\frac{n-2}{2}+\frac{r}{2}\right)  +\frac{r}{2}+\frac{n-2}{2}\\
& \approx  \ln(M)-\ln(\sqrt{2 \pi}G)+\log(N) -\frac{n+1}{2} \ln \left(\frac{n-2}{2}+1\right)  +\frac{r}{2}+\frac{n-2}{2} \label{eq: approx Hr}
 \end{align}
 where in the last step we used the approximation $\frac{n+1}{2} \ln \left(\frac{n-2}{2}+\frac{r}{2}\right) \approx \frac{n+1}{2} \ln \left(\frac{n-2}{2}+1\right)$ and $\frac{r-2}{n-2+r}\approx 0$ which is reasonable as $n$ becomes large.
 
 Solving $f(r)=0$ in \eqref{eq: approx Hr} we get that  the approximate solution is 
 \begin{align*}
 \frac{r}{2}&= \ln \left( \frac{\sqrt{2 \pi} G  \left(\frac{n}{2} \right)^{\frac{n+1}{2}}}{M N \eu^{\frac{n-2}{2}}} \right) \\
 &=  \ln \left( \frac{ 8 \sqrt{2 \pi} \left(\frac{n}{2} \right)^{\frac{n+1}{2}}}{ \snr_0 \mmse(X,\snr_0)  2 \Gamma \left( \frac{n}{2} +1\right) \eu^{\frac{n-2}{2}}} \right)\\
  &\approx  \ln \left( \frac{ 8 \sqrt{2 \pi} \left(\frac{n}{2} \right)^{\frac{n+1}{2}}}{ \snr_0 \mmse(X,\snr_0)  2  \sqrt{  2\pi \frac{n}{2} } \left( \frac{n}{2 \eu}\right)^{\frac{n}{2}} \eu^{\frac{n-2}{2}}} \right)\\
    &= \ln \left( \frac{ 4 \eu }{ \snr_0 \mmse(X,\snr_0)   } \right),
  \end{align*}
  where in the last approximation we have used Stirling's formula. 
 
 Since, we have a constraint that $r >\frac{2}{\gamma}$ we set $r$ to be 
 \begin{align}
 r \approx  \left \{\begin{array}{ll}  2 \ln \left( \frac{ 4 \eu }{ \snr_0 \mmse(X,\snr_0)   } \right), & \frac{2}{\gamma} \le \ln \left( \frac{ 4 \eu }{ \snr_0 \mmse(X,\snr_0)   } \right)\\  
 \frac{2}{\gamma}, & \frac{2}{\gamma} > \ln \left( \frac{ 4 \eu }{ \snr_0 \mmse(X,\snr_0)   } \right)\\ 
  \end{array} \right. .
 \end{align}
This concludes the proof.

\section{Proof of Proposition~\ref{prop:bounds on derivative of MMSE via MMPE}}
\label{app:prop:bounds on derivative of MMSE via MMPE}
First observe that 
\begin{align}
\inf_f \E[ \Err(\X,f(\Y)) | \Y={\bf y}] =  \E \left[  \Err(\X,\E[\X| \Y] ) |\Y={\bf y} \right].
\label{eq: coditional MMSE}
\end{align}
We  will need the following bounds on trace of ${\bf A} \succeq 0$ where ${\bf A} \in \mathbb{R}^{ n\times n}$
\begin{align}
\frac{1}{n} \Trc({\bf A} )^2 \le \Trc({\bf A} ^2) \le n \Trc({\bf A} )^2 \label{eq: trace matrix bounds}.
\end{align}

For the upper bound we have that 
\begin{align*}
 \Trc \left( \E \left[\cov^2(\X|\Y) \right] \right) &=   \E \left[  \Trc \left(\cov^2(\X|\Y)  \right) \right] \\
 & \stackrel{a)}{\le}  \E \left[ n \Trc^2 \left(\cov(\X| \Y)  \right) \right] \\
  & =  \E \left[ n \Trc^2 \left(\E \left[ (\X -\E[\X| \Y])(\X -\E[\X| \Y])^T| \Y\right]  \right) \right] \\
    & =  \E \left[ n  \E^2 \left[  \Trc (\X -\E[\X| \Y])(\X -\E[\X| \Y])^T |\Y \right]   \right] \\
      & =  \E \left[ n  \E^2 \left[  \Err(\X,\E[\X| \Y] ) |\Y \right]   \right] \\
            & \stackrel{b)}{=}  \E \left[ n   \left(\inf_{f} \E \left[  \Err(\X,f(\Y)) |\Y \right]  \right)^2  \right] \\
            & =  \E \left[ n   \inf_{f} \E^2 \left[  \Err(\X,f(\Y)) |\Y \right]  \right] \\
                  & \stackrel{c)}{\le}  \E \left[ n   \inf_{f} \E \left[  \Err^2(\X,f(\Y)) |\Y \right]  \right] \\
                   & \stackrel{d)}{\le}   n   \inf_{f}  \E \left[  \E \left[  \Err^2(\X,f(\Y)) |\Y \right]  \right] \\
                   & \stackrel{e)}{=}   n   \inf_{f}   \E \left[  \Err^2(\X,f(\Y))  \right]  \\
                   &= n^2 \mmpe(\X,\snr,4),
 \end{align*}
where the (in)-equalities follow from: a) since $\cov(\X| \Y) \succeq 0$ and using the inequality in \eqref{eq: trace matrix bounds}; and b) by using \eqref{eq: coditional MMSE}; c) Jensen's inequality;  d) by using $\E[X_1] \le \E[X_2]$ if  $X_1 \le X_2$; and e) law of total expectation. 

For the lower bound 

\begin{align*}
\frac{1}{n} \Trc \left( \E \left[\cov^2(\X|\Y) \right] \right) &=  \frac{1}{n} \E \left[  \Trc \left(\cov^2(\X|\Y)  \right) \right] \\
 & \stackrel{a)}{\ge}  \frac{1}{n}  \E \left[ \frac{1}{n} \Trc^2 \left(\cov(\X| \Y)  \right) \right]\\
  & \stackrel{b)}{\ge}   \frac{1}{n^2}\E^2 \left[  \Trc \left(\cov(\X| \Y)  \right) \right]\\
    &=  \mmse^2(\X,\snr),
\end{align*}
where the inequalities follow from: a) since $\cov(\X| \Y) \succeq 0$  and by using the  inequality in \eqref{eq: trace matrix bounds}; and b) Jensen's inequality. 

\end{appendices} 
\bibliography{refs}

\begin{thebibliography}{10}
\providecommand{\url}[1]{#1}
\csname url@samestyle\endcsname
\providecommand{\newblock}{\relax}
\providecommand{\bibinfo}[2]{#2}
\providecommand{\BIBentrySTDinterwordspacing}{\spaceskip=0pt\relax}
\providecommand{\BIBentryALTinterwordstretchfactor}{4}
\providecommand{\BIBentryALTinterwordspacing}{\spaceskip=\fontdimen2\font plus
\BIBentryALTinterwordstretchfactor\fontdimen3\font minus
  \fontdimen4\font\relax}
\providecommand{\BIBforeignlanguage}[2]{{%
\expandafter\ifx\csname l@#1\endcsname\relax
\typeout{** WARNING: IEEEtran.bst: No hyphenation pattern has been}%
\typeout{** loaded for the language `#1'. Using the pattern for}%
\typeout{** the default language instead.}%
\else
\language=\csname l@#1\endcsname
\fi
#2}}
\providecommand{\BIBdecl}{\relax}
\BIBdecl

\bibitem{I-MMSE}
D.~Guo, S.~Shamai, and S.~Verd{\'u}, ``Mutual information and minimum
  mean-square error in {G}aussian channels,'' \emph{IEEE Trans. Inf. Theory},
  vol.~51, no.~4, pp. 1261--1282, April 2005.

\bibitem{ShamaiShannonLecture}
S.~Shamai, ``From constrained signaling to network interference alignment via
  an information-estimation perspective,'' \emph{IEEE Information Theory
  Society Newsletter}, vol.~62, no.~7, pp. 6--24, September 2012.

\bibitem{kreyszig1989introductory}
E.~Kreyszig, \emph{Introductory {F}unctional {A}nalysis {W}ith
  {A}pplications}.\hskip 1em plus 0.5em minus 0.4em\relax Wiley New York, 1989,
  vol.~81.

\bibitem{Cover:InfoTheory}
T.~Cover and J.~Thomas, \emph{Elements of Information Theory: Second
  Edition}.\hskip 1em plus 0.5em minus 0.4em\relax Wiley, 2006.

\bibitem{VerduSimpleProofEPI}
S.~Verd{\'u} and D.~Guo, ``A simple proof of the entropy-power inequality,''
  \emph{IEEE Trans. Inf. Theory}, vol.~52, no.~5, pp. 2165--2166, May 2006.

\bibitem{Shannon:1948}
C.~Shannon, ``A mathematical theory of communication,'' \emph{Bell Syst. Tech.
  J.}, vol.~27, no. 379-423, 623-656, Jul., Oct. 1948.

\bibitem{GuoMMSEprop}
D.~Guo, Y.~Wu, S.~Shamai, and S.~Verd{\'u}œ, ``Estimation in {G}aussian noise:
  Properties of the minimum mean-square error,'' \emph{IEEE Trans. Inf.
  Theory}, vol.~57, no.~4, pp. 2371--2385, April 2011.

\bibitem{BustinMMSEparallelVectorChannel}
R.~Bustin, M.~Payar\'o, D.~P. Palomar, and S.~Shamai, ``On {MMSE} crossing
  properties and implications in parallel vector {G}aussian channels,''
  \emph{IEEE Trans. Inf. Theory}, vol.~59, no.~2, pp. 818--844, Feb 2013.

\bibitem{guo2013interplay}
D.~Guo, S.~Shamai, and S.~Verd{\'u}, \emph{The Interplay Between Information
  and Estimation Measures}.\hskip 1em plus 0.5em minus 0.4em\relax now
  Publishers Incorporated, 2013.

\bibitem{BustinMMSEbadCodes}
R.~Bustin and S.~Shamai, ``{MMSE} of `bad' codes,'' \emph{IEEE Trans. Inf.
  Theory}, vol.~59, no.~2, pp. 733--743, Feb 2013.

\bibitem{NewBoundsOnMMSE}
A.~Dytso, R.~Bustin, D.~Tuninetti, N.~Devroye, S.~Shamai, and H.~V. Poor, ``New
  bounds on {MMSE} and applications to communication with the disturbance
  constraint,'' \emph{{\rm Submitted to} IEEE Trans. Inf. Theory, {\rm
  \url{https://arxiv.org/pdf/1603.07628}}}, 2016.

\bibitem{MerhavStatisticalPhysics}
N.~Merhav, D.~Guo, and S.~Shamai, ``Statistical physics of signal estimation in
  {G}aussian noise: Theory and examples of phase transitions,'' \emph{IEEE
  Trans. Inf. Theory}, vol.~56, no.~3, pp. 1400--1416, March 2010.

\bibitem{SNRevolutionOfMMSE}
R.~Bustin, R.~F. Schaefer, H.~V. Poor, and S.~Shamai~(Shitz), ``On the
  {SNR}-evolution of the {MMSE} function of codes for the {G}aussian broadcast
  and wiretap channels,'' \emph{IEEE Trans. Inf. Theory}, vol.~62, no.~4, pp.
  2070 -- 2091, April 2016.

\bibitem{FunctionalPropMMSE}
Y.~Wu and S.~Verd{\'u}, ``Functional properties of minimum mean-square error
  and mutual information,'' \emph{IEEE Trans. Inf. Theory}, vol.~58, no.~3, pp.
  1289--1301, March 2012.

\bibitem{mmseDim}
W.~Yihong and S.~Verd{\'u}, ``{MMSE} dimension,'' \emph{IEEE Trans. Inf.
  Theory}, vol.~57, no.~8, pp. 4857--4879, Aug 2011.

\bibitem{GuoScoreFunction}
D.~Guo, ``Relative entropy and score function: New information-estimation
  relationships through arbitrary additive perturbation,'' in \emph{Proc. IEEE
  Int. Symp. Inf. Theory}, June 2009, pp. 814--818.

\bibitem{sherman1958non}
S.~Sherman, ``Non-mean-square error criteria,'' \emph{IRE Transactions on
  Information Theory}, vol.~3, no.~4, pp. 125--126, 1958.

\bibitem{AsymErrorBrown}
J.~Brown, ``Asymmetric non-mean-square error criteria,'' \emph{IRE Transactions
  on Automatic Control}, vol.~7, no.~1, pp. 64--66, Jan 1962.

\bibitem{pugachev1960method}
V.~Pugachev, ``A method for determining optimum systems using general bayes
  criterion,'' \emph{IRE Transactions on Circuit Theory}, vol.~7, no.~4, pp.
  491--505, 1960.

\bibitem{LinftyError}
J.~Tan, D.~Baron, and L.~Dai, ``Wiener filters in {G}aussian mixture signal
  estimation with $\ell _\infty$ -norm error,'' \emph{IEEE Trans. Inf. Theory},
  vol.~60, no.~10, pp. 6626--6635, Oct 2014.

\bibitem{hall1987simultaneous}
E.~B. Hall and G.~L. Wise, ``Simultaneous optimal estimation over a family of
  fidelity criteria,'' \emph{Proceedings of the 1987 Corference on Information
  Sciences and Systems}, pp. 25--27, 1987.

\bibitem{hall1991optimal}
------, ``On optimal estimation with respect to a large family of cost
  functions,'' \emph{IEEE Trans. Inf. Theory}, vol.~37, no.~3, pp. 691--693,
  1991.

\bibitem{akyol2012conditions}
E.~Akyol, K.~B. Viswanatha, and K.~Rose, ``On conditions for linearity of
  optimal estimation,'' \emph{IEEE Trans. Inf. Theory}, vol.~58, no.~6, pp.
  3497--3508, 2012.

\bibitem{WeinbergerMMalphaE}
N.~Weinberger and N.~Merhav, ``Lower bounds on parameter modulation-estimation
  under bandwidth constraints,'' \emph{{\rm Submitted to} IEEE Trans. Inf.
  Theory, {\rm \url{http://arxiv.org/abs/1606.06576}}}, Jun. 2016.

\bibitem{MerhavMMalphaE}
N.~Merhav, ``Exponential error bounds on parameter modulation-estimation for
  discrete memoryless channels,'' \emph{IEEE Trans. Inf. Theory}, vol.~60,
  no.~2, pp. 832--841, Feb 2014.

\bibitem{saerens2000building}
M.~Saerens, ``Building cost functions minimizing to some summary statistics,''
  \emph{IEEE Transactions on neural networks}, vol.~11, no.~6, pp. 1263--1271,
  2000.

\bibitem{BurnashevMMalphaE}
M.~Burnashev, ``A new lower bound for the a-mean error of parameter
  transmission over the white {G}aussian channel,'' \emph{IEEE Trans. Inf.
  Theory}, vol.~30, no.~1, pp. 23--34, Jan 1984.

\bibitem{burnashev1985minimum}
M.~V. Burnashev, ``On minimum attainable mean-square error in transmission of a
  parameter over a channel with white {G}aussian noise,'' \emph{Problemy
  Peredachi Informatsii}, vol.~21, no.~4, pp. 3--16, 1985.

\bibitem{lehmann2006theory}
E.~L. Lehmann and G.~Casella, \emph{Theory of {P}oint {E}stimation}.\hskip 1em
  plus 0.5em minus 0.4em\relax Springer Science \& Business Media, 2006.

\bibitem{WangMSEloveorleave}
Z.~Wang and A.~C. Bovik, ``Mean squared error: Love it or leave it? a new look
  at signal fidelity measures,'' \emph{IEEE Signal Processing Magazine},
  vol.~26, no.~1, pp. 98--117, Jan 2009.

\bibitem{PAMozarow}
L.~Ozarow and A.~Wyner, ``On the capacity of the {G}aussian channel with a
  finite number of input levels,'' \emph{IEEE Trans. Inf. Theory}, vol.~36,
  no.~6, pp. 1426--1428, Nov 1990.

\bibitem{lutwak2013affine}
E.~Lutwak, S.~Lv, D.~Yang, and G.~Zhang, ``Affine moments of a random vector,''
  \emph{IEEE Trans. Inf. Theory}, vol.~59, no.~9, pp. 5592--5599, 2013.

\bibitem{wise1985note}
G.~L. Wise, ``A note on a common misconception in estimation,'' \emph{Systems
  \& {C}ontrol {L}etters}, vol.~5, no.~5, pp. 355--356, 1985.

\bibitem{MomentsEntropyInequality}
E.~Lutwak, D.~Yang, and G.~Zhang, ``Moment-entropy inequalities for a random
  vector,'' \emph{IEEE Trans. Inf. Theory}, vol.~53, no.~4, pp. 1603--1607,
  April 2007.

\bibitem{kay1993fundamentals}
S.~M. Kay, \emph{Fundamentals of {S}tatistical {S}ignal {P}rocessing, {V}olume
  {I}: {E}stimation {T}heory}.\hskip 1em plus 0.5em minus 0.4em\relax Prentice
  Hall, 1993.

\bibitem{folland2013real}
G.~B. Folland, \emph{Real {A}nalysis: {M}odern {T}echniques and {T}heir
  {A}pplications}.\hskip 1em plus 0.5em minus 0.4em\relax John Wiley \& Sons,
  2013.

\bibitem{webster1994convexity}
R.~Webster, \emph{Convexity}.\hskip 1em plus 0.5em minus 0.4em\relax Oxford
  University Press, 1994.

\bibitem{OptimalPowerAllocationOfParChan}
A.~Lozano, A.~M. Tulino, and S.~Verd{\'u}, ``Optimum power allocation for
  parallel {G}aussian channels with arbitrary input distributions,'' \emph{IEEE
  Trans. Inf. Theory}, vol.~52, no.~7, pp. 3033--3051, July 2006.

\bibitem{gautschi1998incomplete}
W.~Gautschi, ``The incomplete gamma functions since {T}ricomi,'' \emph{In
  Tricomi's Ideas and Contemporary Applied Mathematics, Atti dei Convegni
  Lincei, n. 147, Accademia Nazionale dei Lincei}, 1998.

\bibitem{DytsoTINPublished}
A.~Dytso, D.~Tuninetti, and N.~Devroye, ``Interference as noise: Friend or
  foe?'' \emph{IEEE Trans. Inf. Theory}, vol.~62, no.~6, pp. 3561--3596, 2016.

\bibitem{Forney:ShannonWinener:2003}
G.~D. Forney, ``On the role of {MMSE} estimation in approaching the
  information-theoretic limits of linear {G}aussian channels: {S}hannon meets
  {W}iener,'' in \emph{Proc. 41th Annual Allerton Conf. Commun., Control and
  Comp.}, vol.~41, no.~1.\hskip 1em plus 0.5em minus 0.4em\relax The
  University; 1998, 2003, pp. 430--439.

\bibitem{MIsecondOrder}
V.~V. Prelov and S.~Verd{\'u}, ``Second-order asymptotics of mutual
  information,'' \emph{IEEE Trans. Inf. Theory}, vol.~50, no.~8, pp.
  1567--1580, Aug 2004.

\bibitem{SpectralEfficiencyWideband}
S.~Verd{\'u}, ``Spectral efficiency in the wideband regime,'' \emph{IEEE Trans.
  Inf. Theory}, vol.~48, no.~6, pp. 1319--1343, Jun 2002.

\bibitem{DiscreteWu}
Y.~Wu and S.~Verd{\'u}, ``The impact of constellation cardinality on gaussian
  channel capacity,'' in \emph{Proc. 48th Annual Allerton Conf. Commun.,
  Control and Comp.}\hskip 1em plus 0.5em minus 0.4em\relax IEEE, 2010, pp.
  620--628.

\bibitem{shamai1991information}
S.~Shamai, L.~H. Ozarow, and A.~D. Wyner, ``Information rates for a
  discrete-time {G}aussian channel with intersymbol interference and stationary
  inputs,'' \emph{IEEE Trans. Inf. Theory}, vol.~37, no.~6, pp. 1527--1539,
  1991.

\bibitem{chechik2005information}
G.~Chechik, A.~Globerson, N.~Tishby, and Y.~Weiss, ``Information bottleneck for
  {G}aussian variables,'' \emph{Journal of Machine Learning Research}, vol.~6,
  no. Jan, pp. 165--188, 2005.

\bibitem{luenberger1997optimization}
D.~G. Luenberger, \emph{Optimization by {V}ector {S}pace {M}ethods}.\hskip 1em
  plus 0.5em minus 0.4em\relax John Wiley \& Sons, 1997.

\bibitem{widder1989advanced}
D.~V. Widder, \emph{Advanced {C}alculus}.\hskip 1em plus 0.5em minus
  0.4em\relax Courier Corporation, 1989.

\end{thebibliography}
\bibliographystyle{IEEEtran}
\end{document}